\numberwithin{equation}{section}
\newtheorem{theorem}[equation]{Theorem}
\newtheorem*{theorem-non}{Theorem} 
\newtheorem{lemma}[equation]{Lemma}
\newtheorem{proposition}[equation]{Proposition}
\newtheorem{corollary}[equation]{Corollary}
\theoremstyle{definition}
\newtheorem{definition}[equation]{Definition}
\theoremstyle{remark}
\newtheorem{remark}[equation]{Remark}
\newtheorem{example}[equation]{Example}
\newtheorem{construction}[equation]{Construction}
\newcommand{\mcal}[1]{\mathcal{#1}}
\newcommand{\hcal}[1]{\hat{\mathcal{#1}}}
\newcommand{\mb}[1]{\mathbb{#1}}
\newcommand{\dhcal}[1]{\Sigma^{d-1}{\hcal{#1}}}
\newcommand{\dd}{\partial}
\newcommand{\Poincare}{Poincar\'e\ }
\newcommand{\mbcx}{\mb{C}^\times}
\newcommand{\cx}{\mbcx}
\newcommand{\icx}{I\mbcx}
\newcommand{\sus}{\Sigma^{\infty}_+}
\newcommand{\isorightarrow}{\xrightarrow{\sim}}
\newcommand{\mbD}{\mb{D}}
\newcommand{\oss}{\Omega^{\infty}}
\newcommand{\BordR}{Bord_d ^{\mcal{R}}}
\newcommand{\wmcal}[1]{\widetilde{\mcal{#1}}}
\newcommand{\E}[1]{\mb{E}_{#1}}
\newcommand{\Eone}{\mb{E}_{1}}
\newcommand{\Maps}{\underline{Maps}}
\title{Abelian Duality in Topological Field Theory}
\author{Yu Leon Liu}
\email{yuleonliu@math.harvard.edu}
\date{\today}
\begin{document}

\begin{abstract}
We prove a topological version of abelian duality where the gauge groups 
are finite abelian. The theories are finite homotopy TFTs, 
topological analogues of the $p$-form $U(1)$ gauge theories. 
Using Brown-Comenetz duality, we extend the duality to finite homotopy TFTs of
$\pi$-finite spectra.
\end{abstract}

\maketitle
\tableofcontents


\setcounter{section}{-1}
\section{Introduction} \label{sec0}

Abelian duality is a generalization of electromagnetic duality \cite{V95, W95} to 
higher gauge groups in general dimensions. 
The theories are $p$-form $U(1)$ gauge theories, 
free field theories whose dynamical fields 
are $p$-principal $U(1)$ bundles with connections \cite{Al85, Ell19, Ga88}. 
In $d$ dimension, abelian duality \cite{Bar95, Ell19, Kel09} 
is an equivalence of QFTs between $p$-form and $(d-p-2)$-form 
$U(1)$ gauge theories.

In this paper we study the case where the gauge groups are finite abelian. These theories are topological and can be formulated mathematically as topological field theories (TFT) \cite{At, Se88}.

Fix a dimension $d \geq 0$. An important class of topological field theories is the $d$ dimensional finite gauge theory \cite{DW, FQ} $Z_{BG}$, defined for any finite group $G$. Given a closed $d$ dimensional manifold $M$, partition function $Z_{BG}(M)$ counts the equivalence classes of principal $G$ bundles on $M$, weighted by automorphisms. 

For an abelian Lie group $A$, there are higher analogues of principal $A$-bundle, called $p$-principal $A$-bundle \cite{Br96, Bu12}. 
These bundles usually come with geometric structures such as a connection, 
and they can also have higher automorphisms. 
When $A$ is finite abelian, there are no geometric structures, and they can be understood via ordinary cohomology:
the equivalence class of $p$-principal $A$-bundle on $M$ is the cohomology group $H^p(M; A)$, the 
equivalence class of automorphisms of any principal $p$-bundle is $H^{p-1}(M;A)$, and automorphisms of any automorphism are $H^{p-2}(M;A)$...

For a finite abelian group $A$ and $p \in \mb{N}$, there exists a $d$ dimensional, 
$p$-form topological gauge theory $Z_{K(A,p)}$ with higher gauge group $A$ (\S \ref{sec2}).
Similar to the finite gauge theories, the partition function $Z_{K(A,p)}(M)$ counts equivalence classes of $p$-principal $A$-bundle on $M$, weighted by not only the automorphisms of these higher bundles, but also the automorphisms of the automorphisms and so on.

In QFT, abelian duality is a duality between $p$-form $U(1)$ gauge theories. In our topological case, the dual groups are Pontryagin dual groups. Let $A$ be a finite abelian group, the Pontryagin dual (character dual) group $\hat{A}$ is defined as $Hom(A, \cx)$. 
In \cite[\S 3.3]{FT}, Freed and Teleman proved that the $3$ dimensional finite gauge theories $Z_{K(A,1)}$ 
and $Z_{K(\hat{A},1)}$ are equivalent. In the same paper \cite[\S 9.1]{FT}, they noted that this duality
can be extended to general $p$ and arbitrary dimension $d$. We prove this duality:

\begin{theorem-non}[Abelian duality]\label{main-theorem0}
Let $A$ be a finite abelian group and $\hat{A}$ its Pontryagin dual. Let $d \geq 1$ be the dimension of our theories and $p \in \mathbb{Z}$. Let $Z_{K(A,p)}$, $Z_{K(\hat{A}, d-1-p)}$ be the $d$ dimensional finite homotopy TFTs associated to $K(A,n)$ and $K(\hat{A}, d-1-p)$. There is an equivalence of oriented topological field theories:
\begin{equation}
 Z_{K(A,p)} \simeq Z_{K(\hat{A}, d-1-p)} \otimes E_{|A|^{(-1)^p}},
\end{equation}
where is $E_{|A|^{(-1)^p}}$ is the $d$ dimensional Euler invertible TFT (\S \ref{sec3}) associated to $|A|^{(-1)^p} \in \cx$.
\end{theorem-non}

\begin{remark}
The two abelian duality field theories are equivalent up to twisting with 
the Euler TFT, which is an invertible field theory.
This was not stated in \cite{FT}.
\end{remark}




$p$-form $U(1)$ gauge theories are modeled by differential cohomology \cite{CS85, De71}, a differential refinement of ordinary integral cohomology. 
Spectra, or generalized cohomology theories, are interesting generalizations of ordinary cohomology theories. 
They, rather their differential analogues, also model fields in QFT and string theory. 
For example, Ramond-Ramond field in Type II superstring theory are modeled by differential K theory \cite{Fr, HS05}. 

The spectral generalization of finite abelian groups are $\pi$-finite spectra. 
A spectrum $\mcal{X}$ is $\pi$-finite (\S \ref{sec1}) if the stable homotopy groups $\pi_*(\mcal{X})$ are nontrivial in 
finitely many degrees, and each one is a finite abelian group. The canonical example of a $\pi$-finite spectrum is the  suspended Eilenberg-MacLane spectrum $\Sigma^p HA$, where $A$ is a finite abelian group. Given a $\pi$-finite spectrum $\mcal{X}$, we define a $d$ dimensional finite homotopy TFT $Z_{\mcal{X}}$  \cite{Fr1,FHLT,Qu, Tu}, which counts ``$\mcal{X}$'' bundles. Note that these TFTs can be defined for general $\pi$-finite spaces, however, we will work in the setting of $\pi$-finite spectra. See Remark \ref{sec2-remark} for how they are related.

Pontryagin duality can also be extended to $\pi$-finite spectra. In \cite{BC}, Brown and Comenetz defined a dual spectrum $\hcal{X}$ for any spectrum $\mcal{X}$. When $\mcal{X}$ is $\pi$-finite, then $\hcal{X}$ is also $\pi$-finite. We review this in \S \ref{sec5}. It is a generalization of Pontryagin duality:
let $A$ be an abelian group and $HA$ be its Eilenberg-MacLane spectrum, then 
\begin{equation}
\widehat{HA} \simeq H\hat{A},    
\end{equation}
where $\hat{A}$ is the Pontryagin dual group of $A$. 

In \cite[\S 9.1]{FT}, Freed and Teleman also pointed out that abelian duality 
can be extended to $\pi$-finite spectra, where Pontryagin duality is generalized to Brown-Comenetz duality. We prove that the two theories are equivalent up to twisting
with the Euler invertible TFT:

\begin{theorem-non}[Abelian duality]  \label{main_theorem'}
Let $\mcal{X}$ be a $\pi$-finite spectrum and $\hcal{X}$ its Brown-Comenetz dual. Let $Z_{\mcal{X}}, Z_{\dhcal{X}}$ be the corresponding $d$ dimensional finite homotopy TFTs. There is an equivalence of (suitably oriented) topological field theories:
\begin{equation}
    \mbD: Z_{\mcal{X}} \simeq Z_{\dhcal{X}} \otimes E_{|\mcal{X}|},
\end{equation}
where is $E_{|\mcal{X}|}$ is the $d$ dimensional Euler invertible TFT (\S \ref{sec3}).
\end{theorem-non}

\begin{remark}
Orientations are needed for abelian duality. When generalizing to $\pi$-finite spectra, we need a notion of orientation (\cite{ABGHR, Ru}) with respect to a ring spectrum. We review this in \S \ref{sec4}.
See Theorem \ref{sec6-theorem} for the rigorous statement.
\end{remark}

We would like to point out several aspects:
\begin{itemize}
    \item Path integrals are notoriously difficult to define mathematically. We work with finite abelian groups and $\pi$-finite spectra because there is mathematically well-defined theory of finite path integrals \cite{ DW, Fr1, FQ}. 
    \item Treatments of abelian duality focus on the dynamical part of the theory. Since the fields in our theories are rigid, we can focus on the topological aspects of abelian duality. We see that the Euler TFT, an invertible TFT with no dynamics, is present in the duality.
    \item The finite homotopy TFTs are not just interesting for their own right, they also encode (higher) symmetry for general (not necessarily topological) field theories 
        \cite{GKSW14}: $d$ dimensional theories with (non-anomalous) $G$ symmetry can be understood as boundary theories to the $d+1$ dimensional finite gauge theory $Z_{BG}$. 
        This even holds for lattice theories such as the Ising model \cite{FT}. 
        More generally, field theories with higher symmetries are boundary theories to $Z_{K(A,p)}$. 
        Therefore the abelian duality of $d+1$ dimensional theories gives duality of theories with higher $A$ symmetry and $\hat{A}$ symmetry. In 2 dimension, this is related to gauging/ungauging and the orbitfold construction \cite{DVVV89}.
    \item For a general $\pi$-finite spectrum $\mcal{X}$, a boundary field theory for the $d+1$ dimensional theory $Z_\mcal{X}$ is a theory with generalized symmetry \cite{GKSW14}. In this point of view, we see that our theorem gives duality of theories with generalized symmetry, if the symmetry is sufficiently commutative.
\end{itemize}
\subsection*{Outline}
In \S \ref{sec1} we review the basics of $\pi$-finite spectra.
In \S \ref{sec2} we define the finite homotopy TFTs associated to $\pi$-finite spectra.
In \S \ref{sec3} we define the Euler invertible TFT.
In \S \ref{sec4} we review orientation for generalized cohomology theories and \Poincare duality.
In \S \ref{sec5} we review Pontryagin and Brown-Comenetz dualities.
In \S \ref{sec6} we state and prove Theorem \ref{sec6-theorem}.

\subsection*{Acknowledgements}
I graciously thank my undergrad advisor, Dan Freed, 
for suggesting this problem and his continued guidance. 
I would like to thank David Ben-Zvi, Sanath Devalapurkar, 
Rok Gregoric, Mike Hopkins, Aaron Mazel-Gee, Riccardo Pedrotti, 
Wyatt Reeves, David Reutter, and Will Stewart for helpful conversations. 
I would like to especially thank Arun Debray for his many suggestions and feedback.
I would like to thank Meili Dubbs for her continuous support.


\section{\texorpdfstring{$\pi$}{pi}-finite spectra} \label{sec1}

In this section, we define $\pi$-finite spectra and their sizes. We assume some familiarity with spectra, see \cite{Ad74, HA}.


Let $S$, $Sp$ be the category of spaces and spectra. 
$\sus: S \rightarrow Sp$   
is the infinite suspension functor. Let $\tau_{\geq i}$ ($\tau_{\leq i}$)\ be the truncation functor that takes a spectrum to its $i$-th connective cover (truncation). Let $A$ be an abelian group, its Eilenberg-MacLane spectrum is denoted as $HA$.

Let $M$ be a (unpointed) space, $\mcal{X}$ a spectrum. We denote the homology and cohomology groups of $M$ with $\mcal{X}$ coefficients as $\mcal{X}_*(M)$, $\mcal{X}^*(M)$. Similarly, let $N$ be a pointed space, then $\wmcal{X}_*(N)$, $\wmcal{X}^*(N)$ are the reduced homology and cohomology groups of $N$ with $\mcal{X}$ coefficients. 

Given a cofiber sequence of (unpointed) spaces 
\begin{equation}
    N \rightarrow M \rightarrow M/N.
\end{equation}
$M/N$ is canonically pointed. The reduced (co)homology groups of $M/N$ are called the relative (co)homology groups of $(M,N)$, denoted as $\mcal{X}_*(M,N)$ ($\mcal{X}^*(M, N)$).

We denote the mapping spectrum $\Maps(\sus M, \mcal{X})$ as $\mcal{X}(M)$, its homotopy groups are cohomology groups:

\begin{equation}
\pi_i (\mcal{X}(M)) = \mcal{X}^{-i}(M).  
\end{equation}

A spectrum $\mcal{Y}$ is an extension of $\mcal{X}, \mcal{Z}$ if there is a fiber sequence 
\begin{equation}
    \mcal{X} \rightarrow \mcal{Y} \rightarrow \mcal{Z}.
\end{equation}

Given a full subcategory $C$, the extension closure $\overline{C}$ is the smallest full subcategory that contains $C$ and is closed under extension and suspension. For any property $P$, if all objects of $C$ satisfy $P$, and $P$ is closed under extension and suspension, then inductively, any object in the extension closure $\overline{C}$ satisfies $P$.

\begin{definition}\label{sec1-definition}
A spectrum $\mcal{X}$ is $\pi$-finite if
\begin{enumerate}
    \item $\pi_i \mcal{X}$ are non-trivial in only finitely many degrees.
    \item $\pi_i \mcal{X}$ are finite abelian groups for all $i \in \mathbb{Z}$.
\end{enumerate}
\end{definition}

We denote the category of $\pi$-finite spectra by $Sp^{fin}$. It is closed under suspension and extension. Canonical examples of $\pi$-finite spectra are $\Sigma^n HA$ where $A$ is a finite abelian group.

Given a $\pi$-finite spectrum $\mcal{X}$ with homotopy groups concentrated in degrees $\leq n$. Consider the fiber sequence
\begin{equation}
    \uptau_{\geq n} \mcal{X} \rightarrow \mcal{X} \rightarrow \uptau_{\leq n-1}\mcal{X}.
\end{equation}
As $\uptau_{\geq n} \mcal{X}$ only has nontrivial homotopy groups in degree $n$, it is a suspended finite Eilenberg-MacLane spectrum $\uptau_{\geq n} \mcal{X} \simeq \Sigma^n H\pi_n(\mcal{X})$. Doing this inductively on $n$, we see that $\mcal{X}$ lies in the extension closure of the full subcategory of finite Eilenberg-MacLane spectra.
Therefore, $Sp^{fin}$ is the extension closure of  finite Eilenberg-MacLane spectra. In fact, as finite abelian group are extensions of $\mathbb{F}_p$, $Sp^{fin}$ is the extension closure of finite Eilenberg-MacLane spectra of the form $H\mathbb{F}_p$.

Given a suitably finite graded abelian group, we have a notion of "homotopic size":
\begin{definition}
Let $A_\bullet = \bigoplus A_i$ be a $\mb{Z}$-graded abelian group. $A_\bullet$ is finite if all but finitely many $A_i$ are trivial and each $A_i$ is a finite abelian group. The size of $A_\bullet$ is
\begin{equation}
|A_\bullet| \coloneqq \prod_{i} |A_i|^{(-1)^i},   
\end{equation} where $|A_i|$ is the cardinality of $A_i$.
\end{definition}
We have a neat algebraic fact:
\begin{lemma}\label{sec1-lemma}
Given an exact sequence of finite graded abelian group $H^0_\bullet \rightarrow H^1_\bullet \rightarrow H^2_\bullet$, that is, a long exact sequence 
\begin{equation}\label{sec1-long}
    \cdots \rightarrow H_*^0 \rightarrow H_*^1 \rightarrow H_*^2 \rightarrow H_{*-1}^0 \rightarrow \cdots,
\end{equation}
then
\begin{equation}
|H^0_\bullet| |H^2_\bullet| = |H^1_\bullet|.    
\end{equation}
Alternatively, viewed the pieces of a long exact sequence as a finite graded abelian group $H_\bullet$, then
\begin{equation}
    |H_\bullet| = 1
\end{equation}
\end{lemma}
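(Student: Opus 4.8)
The plan is to deduce both displayed identities from the single statement that a bounded exact sequence of finite abelian groups has trivial alternating product of orders, and then to prove that by the standard telescoping argument.

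First I would record that the two formulations are equivalent. Re-index the terms of the long exact sequence \eqref{sec1-long} by a single integer so that every map lowers the index by one; concretely, put $H^2_k, H^1_k, H^0_k$ in consecutive degrees $3k, 3k+1, 3k+2$. Since $(-1)^{3k}=(-1)^{3k+2}=(-1)^k$ and $(-1)^{3k+1}=-(-1)^k$, the graded group $H_\bullet$ assembled from \eqref{sec1-long} satisfies $|H_\bullet| = |H^0_\bullet|\,|H^2_\bullet|\, / \,|H^1_\bullet|$, so $|H_\bullet|=1$ is exactly the identity $|H^0_\bullet|\,|H^2_\bullet| = |H^1_\bullet|$. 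Moreover each $H^j_\bullet$ is a finite graded abelian group, so only finitely many terms of \eqref{sec1-long} are nonzero. Thus it suffices to prove: for a bounded exact sequence $\cdots \to C_{n+1} \xrightarrow{d_{n+1}} C_n \xrightarrow{d_n} C_{n-1} \to \cdots$ of finite abelian groups, $\prod_n |C_n|^{(-1)^n} = 1$.

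Next I would split this exact sequence into short exact sequences in the usual way. Set $B_n = \operatorname{im}(d_{n+1}) = \ker(d_n) \subseteq C_n$, where the equality is exactness; each $B_n$ is finite and $B_n = 0$ for $|n|$ large. Then $d_n$ induces a short exact sequence $0 \to B_n \to C_n \to B_{n-1} \to 0$, since $C_n / B_n = C_n/\ker(d_n) \cong \operatorname{im}(d_n) = B_{n-1}$. Multiplicativity of order in short exact sequences of finite groups gives $|C_n| = |B_n|\,|B_{n-1}|$ for every $n$. Substituting, $\prod_n |C_n|^{(-1)^n} = \bigl(\prod_n |B_n|^{(-1)^n}\bigr)\bigl(\prod_n |B_{n-1}|^{(-1)^n}\bigr) = \bigl(\prod_n |B_n|^{(-1)^n}\bigr)\bigl(\prod_n |B_n|^{(-1)^{n+1}}\bigr) = 1$, all products being finite because only finitely many $B_n$ are nontrivial.

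There is no genuine obstacle here: this is the familiar fact that the Euler characteristic $\sum_n (-1)^n \log|C_n|$ of a bounded exact complex vanishes. The only points to keep track of are the bookkeeping of the re-indexing in the first paragraph (to match the two stated forms of the conclusion) and the finiteness of the auxiliary groups $B_n$ (guaranteed by the hypothesis that $H^0_\bullet, H^1_\bullet, H^2_\bullet$ are finite graded), which is what makes the telescoping product an honest finite product.
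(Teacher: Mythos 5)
Your proof is correct. The paper states Lemma~\ref{sec1-lemma} without proof, treating it as the standard fact that the multiplicative Euler characteristic of a bounded exact complex of finite abelian groups is trivial; your argument—re-indexing to a single-graded complex, introducing $B_n = \ker(d_n) = \operatorname{im}(d_{n+1})$, splitting into the short exact sequences $0 \to B_n \to C_n \to B_{n-1} \to 0$, and telescoping—is precisely the standard argument one would write down, and the bookkeeping (the parity check $(-1)^{3k}=(-1)^{3k+2}=(-1)^k$, $(-1)^{3k+1}=-(-1)^k$, and the finiteness of the $B_n$ coming from boundedness of the $C_n$) is handled correctly.
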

Let $\mcal{X}$ be a $\pi$-finite spectrum, then $\pi_\bullet(\mcal{X})$ is a finite graded abelian group. We can define its size:
\begin{definition}
The size of $\mcal{X}$, denoted as $|\mcal{X}|$, is the size of its homotopy groups 
\begin{equation}
   |\pi_\bullet(\mcal{X})| = \prod_i |\pi_i(X)|^{(-1)^i} = \cdots \frac{|\pi_0 \mcal{X}|}{|\pi_{-1} \mcal{X}|}  \frac{|\pi_2 \mcal{X}|}  {|\pi_1 \mcal{X}|} \cdots.  
\end{equation}
\end{definition}

\begin{proposition}\label{sec1-proposition1}
Given a fiber sequence $\mcal{X} \rightarrow \mcal{Y} \rightarrow \mcal{Z}$ of $\pi$-finite spectra, we have that $|\mcal{X}| \  |\mcal{Z}| = |\mcal{Y}|$.
\end{proposition}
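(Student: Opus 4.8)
The plan is to reduce the statement immediately to the algebraic Lemma \ref{sec1-lemma} by passing to homotopy groups. A fiber sequence of spectra $\mcal{X} \rightarrow \mcal{Y} \rightarrow \mcal{Z}$ induces a long exact sequence of abelian groups
\begin{equation*}
\cdots \rightarrow \pi_i \mcal{X} \rightarrow \pi_i \mcal{Y} \rightarrow \pi_i \mcal{Z} \xrightarrow{\ \partial\ } \pi_{i-1}\mcal{X} \rightarrow \cdots,
\end{equation*}
in which the connecting map $\partial$ lowers degree by one. Since $\mcal{X}$, $\mcal{Y}$, $\mcal{Z}$ are $\pi$-finite, the graded abelian groups $\pi_\bullet \mcal{X}$, $\pi_\bullet \mcal{Y}$, $\pi_\bullet \mcal{Z}$ are finite in the sense required, so the three sizes $|\mcal{X}|$, $|\mcal{Y}|$, $|\mcal{Z}|$ are all defined; in fact $\pi$-finiteness of the third term would already follow from that of the other two via this exact sequence, since $Sp^{fin}$ is closed under extension.

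Next I would observe that this long exact sequence is exactly an exact sequence of finite graded abelian groups $\pi_\bullet \mcal{X} \rightarrow \pi_\bullet \mcal{Y} \rightarrow \pi_\bullet \mcal{Z}$ in the sense of \eqref{sec1-long}: the three-term block repeats and the connecting homomorphism drops degree by one, which is precisely the degree shift that the hypothesis of Lemma \ref{sec1-lemma} is set up to absorb. Applying that lemma with $H^0_\bullet = \pi_\bullet \mcal{X}$, $H^1_\bullet = \pi_\bullet \mcal{Y}$, $H^2_\bullet = \pi_\bullet \mcal{Z}$ yields $|\pi_\bullet \mcal{X}|\,|\pi_\bullet \mcal{Z}| = |\pi_\bullet \mcal{Y}|$, which by the definition of the size of a spectrum is the claimed identity $|\mcal{X}|\,|\mcal{Z}| = |\mcal{Y}|$.

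There is essentially no serious obstacle here; the only point deserving a moment's care is the bookkeeping of the grading, i.e.\ confirming that the connecting map's degree shift matches the indexing convention of Lemma \ref{sec1-lemma} (and, relatedly, that the total size of an exact sequence being $1$ really translates into the alternating product identity one wants, with no sign or indexing slip). An alternative route, if one preferred to avoid invoking Lemma \ref{sec1-lemma} directly, would be to induct using the presentation of $Sp^{fin}$ as the extension closure of the spectra $\Sigma^n H\mathbb{F}_p$: multiplicativity of $|\cdot|$ is clear on these generators and is preserved under extension, precisely because a fiber sequence contributes a long exact sequence whose total size is $1$. I would present the first route as the main argument.
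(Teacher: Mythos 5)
Your argument is exactly the one the paper gives: pass to the long exact sequence of homotopy groups induced by the fiber sequence and apply Lemma \ref{sec1-lemma} with $H^0_\bullet = \pi_\bullet\mcal{X}$, $H^1_\bullet = \pi_\bullet\mcal{Y}$, $H^2_\bullet = \pi_\bullet\mcal{Z}$. The extra remarks on bookkeeping and the alternative induction over the extension closure are fine but not needed; the main route matches the paper's proof.
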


\begin{proof}
This follows from applying Lemma \ref{sec1-lemma} to the long exact sequence of homotopy groups associated to the fiber sequence $\mcal{X} \rightarrow \mcal{Y} \rightarrow \mcal{Z}$.
\end{proof}

\begin{example}\label{sec1-example}
Let $\mcal{X}$ be a $\pi$-finite spectrum. We have a fiber sequence 
\begin{equation}
    \uptau_{\geq i}\mcal{X} \rightarrow \mcal{X} \rightarrow \uptau_{\leq i-1}\mcal{X}
\end{equation}
of $\pi$-finite spectra. By Proposition \ref{sec1-proposition1} we have 
\begin{equation}
    |\uptau_{\geq i}\mcal{X}|\ |\uptau_{\leq i-1}\mcal{X}| = |\mcal{X}|.
\end{equation}
\end{example}





\begin{proposition} \label{sec1-proposition2}
Let $M$ be a $d$ dimensional compact manifold (possibly with boundary) and $\mcal{X}$ a $\pi$-finite spectrum, then the mapping spectrum $\mcal{X}(M)$
is a $\pi$-finite spectrum of size 
\begin{equation} \label{sec1-eq}
   |\mcal{X}(M)| = |\mcal{X}|^{\chi (M)},
\end{equation}
where $\chi(M)$ is the Euler characteristic of $M$. 
\end{proposition}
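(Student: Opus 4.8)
The plan is to reduce the statement to the two building blocks that generate $Sp^{fin}$ under extensions and suspensions: finite Eilenberg--MacLane spectra (or even just those of the form $H\mathbb{F}_p$), and to use the ``extension closure'' principle stated in \S\ref{sec1} together with the multiplicativity of $|{-}|$ under fiber sequences (Proposition \ref{sec1-proposition1}).

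First I would establish that $\mcal{X} \mapsto \mcal{X}(M) = \Maps(\sus M, \mcal{X})$ sends $\pi$-finite spectra to $\pi$-finite spectra. Since $M$ is a compact manifold, it has the homotopy type of a finite CW complex, so $\sus M$ is a finite spectrum (a dualizable object of $Sp$); hence $\Maps(\sus M, -)$ is an exact functor commuting with finite limits and colimits, and it preserves the classes ``homotopy groups concentrated in finitely many degrees'' and ``all homotopy groups finite,'' because $\mcal{X}^*(M)$ is computed by an Atiyah--Hirzebruch spectral sequence with finitely many finite entries when $\mcal{X}$ is $\pi$-finite. Thus $\mcal{X}(M) \in Sp^{fin}$.

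Next, the size formula. Both sides of \eqref{sec1-eq} are ``multiplicative in $\mcal{X}$ along fiber sequences'': if $\mcal{X}' \to \mcal{X} \to \mcal{X}''$ is a fiber sequence, then applying the exact functor $(-)(M)$ gives a fiber sequence $\mcal{X}'(M) \to \mcal{X}(M) \to \mcal{X}''(M)$, so by Proposition \ref{sec1-proposition1} the left-hand side of \eqref{sec1-eq} satisfies $|\mcal{X}(M)| = |\mcal{X}'(M)| \, |\mcal{X}''(M)|$; the right-hand side satisfies $|\mcal{X}|^{\chi(M)} = |\mcal{X}'|^{\chi(M)} |\mcal{X}''|^{\chi(M)}$ for the same reason. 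The formula is also invariant under suspension: $(\Sigma\mcal{X})(M) \simeq \Sigma(\mcal{X}(M))$, and both $|{-}|$ and $|{-}|^{\chi(M)}$ flip sign in the exponent under $\Sigma$ consistently. By the extension-closure principle, it therefore suffices to verify \eqref{sec1-eq} for $\mcal{X}$ a finite Eilenberg--MacLane spectrum, indeed for $\mcal{X} = H\mathbb{F}_p$; by a further shift it suffices to treat $\mcal{X} = HA$ for $A$ finite abelian, in degree $0$.

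For the base case $\mcal{X} = HA$, we have $\pi_{-i}(HA(M)) = H^i(M;A)$, so
\[
|HA(M)| = \prod_i |H^i(M;A)|^{(-1)^i}.
\]
It remains to show this equals $|A|^{\chi(M)}$. I would do this by the universal coefficient / cellular chain argument: choosing a finite CW structure on $M$ with $c_i$ cells in dimension $i$, the complex $C^\bullet = \mathrm{Hom}(C_\bullet(M), A)$ has $|C^i| = |A|^{c_i}$, hence $\prod_i |C^i|^{(-1)^i} = |A|^{\sum (-1)^i c_i} = |A|^{\chi(M)}$; and since cohomology of a finite complex of finite groups has the same alternating product of orders as the complex itself (Lemma \ref{sec1-lemma} applied repeatedly, or the rank--nullity bookkeeping), $\prod_i |H^i(M;A)|^{(-1)^i} = \prod_i |C^i|^{(-1)^i}$. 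This closes the induction.

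\textbf{Main obstacle.} The genuinely delicate point is not the base case (which is classical) but making the inductive step airtight: one must know that $(-)(M)$ carries a fiber sequence of $\pi$-finite spectra to a fiber sequence of $\pi$-finite spectra \emph{before} one knows the size formula, i.e. the $\pi$-finiteness claim of the proposition must be proved first and independently, and one must check that the extension-closure principle genuinely applies to the \emph{two-variable} statement ``$|\mcal{X}(M)| = |\mcal{X}|^{\chi(M)}$'' — that is, that this equality, as a property of $\mcal{X}$, is closed under extensions and suspensions. Both of these hinge on finiteness of $M$ (so that $\sus M$ is dualizable and $(-)(M)$ is exact), so I would be careful to invoke compactness of $M$ at exactly those points.
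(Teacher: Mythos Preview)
Your proposal is correct and follows essentially the same strategy as the paper: first establish $\pi$-finiteness of $\mcal{X}(M)$, then verify \eqref{sec1-eq} on a generating class and propagate by the extension-closure principle using Proposition~\ref{sec1-proposition1}. The only cosmetic differences are that the paper proves $\pi$-finiteness by the same extension-closure reduction (base case $HA$) rather than invoking the Atiyah--Hirzebruch spectral sequence directly, and for the size formula the paper takes $H\mathbb{F}_p$ as the base case (using $\chi(M)=\sum_i(-1)^i\dim_{\mathbb{F}_p}H^i(M;\mathbb{F}_p)$) instead of your cellular-chain computation for general $HA$; both routes are standard and interchangeable.
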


\begin{proof}
Fix $M$,  we first proof that $\mcal{X}(M)$ is $\pi$-finite in the case that $\mcal{X} = HA$, $A$ finite abelian.
As $M$ is compact, $\pi_{-i}HA(M)\simeq H^i(M; A)$ are finite in each degree. It is also trivial when outside of degrees $0 \leq i \leq d$. Therefore $HA(M)$ is a $\pi$-finite spectrum. 

Consider $\mcal{X}(M)$ being $\pi$-finite as a property on $\mcal{X}$. This property is clearly closed under suspension. It is also closed under extension: given fiber sequence $\mcal{X} \rightarrow \mcal{Y} \rightarrow \mcal{Z}$, we have the following fiber sequence 
\begin{equation}
    \mcal{X}(M) \rightarrow \mcal{Y}(M) \rightarrow \mcal{Z}(M).
\end{equation}
If $\mcal{X}(M), \mcal{Z}(M)$ are $\pi$-finite, then so is $\mcal{Y}(M)$. Since $Sp^{fin}$ is the extension closure of finite Eilenberg-MacLane spectra, we see that $\mcal{X}(M)$ is $\pi$-finite for all $\mcal{X} \in Sp^{fin}$.

Now we prove that $|\mcal{X}(M)| = |\mcal{X}|^{\chi (M)}$. First consider the case that $\mcal{X} = H\mathbb{F}_p$, then 
\begin{equation}
    \pi_i(H\mathbb{F}_p(M)) = H^{-i}(M; \mathbb{F}_p)
\end{equation}
are finite dimensional $\mathbb{F}_p$ vector spaces. If $d$ be its dimension, then
\begin{equation}
    |\pi_i(H\mathbb{F}_p(M))| = p^d.
\end{equation}
As 
\begin{equation}
\sum_i (-1)^i dim_{\mathbb{F}_p} H^{i}(M; \mathbb{F}_p) = \chi(M)
\end{equation}
It follows that
\begin{equation}
|H\mathbb{F}_p(M)| = |H\mathbb{F}_p|^{\chi (M)}.
\end{equation}

Consider equation $|\mcal{X}(M)| = |\mcal{X}|^{\chi (M)}$ as a property on spectrum $\mcal{X}$. As both sides of the equation are multiplicative with respect to extensions, this property is closed under suspension and extension. Since $Sp^{fin}$ is the extension closure of spectra of the form $H\mathbb{F}_p$, it holds for all $\mcal{X} \in Sp^{fin}$.
\end{proof}

\section{Finite Homotopy TFT} \label{sec2}
In this section we construct the $d$ dimensional (unoriented) TFT $Z_\mcal{X}$ associated to a $\pi$-finite spectrum $\mcal{X}$. Recall that a $d$ dimensional (unoriented) topological field theory (TFT) \cite{At, Lu09, Se88} is a symmetric monoidal functor
\begin{equation}
    Z: Bord_d \rightarrow Vect_{\mb{C}},
\end{equation}
where 
\begin{enumerate}
    \item $Bord_d$ is the $d$ dimensional bordism category. Its objects are closed $d-1$ dimensional manifolds, and morphisms are diffeomorphism classes of bordisms. It is symmetric monoidal under disjoint union.
    \item $Vect_\mb{C}$ is the category of finite dimensional $\mb{C}$-linear vector spaces, symmetric monoidal under tensor products.
\end{enumerate}

\begin{remark}
We are interested in TFTs with tangential structure, namely orientation in generalized cohomology. 
They are defined in Definition \ref{sec4-oriented-tft}.
\end{remark}

\begin{construction} \label{sec2-construction}
Fix a dimension $d \geq 1$, consider the following assignment: for any closed $d-1$ dimensional manifold $N$, we assign the vector space $\mathbb{C}[\mcal{X}^0(N)]$. Given a bordism $M: N \rightarrow N'$ with inclusions $p: N \subset M$ and $q: N' \subset M$, the map 
\begin{equation}
    Z_{\mcal{X}}(M):\mathbb{C}[\mcal{X}^0(N)] \rightarrow \mathbb{C}[\mcal{X}^0(N')] 
\end{equation}
is defined as follows: for $a \in \mcal{X}^0(N)$, considered as a basis element in $\mb{C}[\mcal{X}^0(N)]$, $Z_{\mcal{X}}(M)$ takes 
\begin{align}
a &\mapsto \frac{|\mcal{X}^{-1}(N')|}{|\mcal{X}^{-1}(M)|}
\frac{|\mcal{X}^{-2}(M)|}{|\mcal{X}^{-2}(N')|} \frac{|\mcal{X}^{-3}(N')|}{|\mcal{X}^{-3}(M)|}... \sum_{b \rightarrow a} q^*b \\
&= \frac{|\uptau_{\geq 1}\mcal{X}(M)|} {|\uptau_{\geq 1}\mcal{X}(N')|} \sum_{b \rightarrow a} q^*b\\
&= \frac{|\uptau_{\geq 1}\mcal{X}(M)|} {|\uptau_{\geq 1}\mcal{X}(N')|} \sum_{a'} \sum_{b \rightarrow a, b \rightarrow a'} a',
\end{align}
where $\sum_{b \rightarrow a}$ means sum over all $b \in \mcal{X}^0(M)$ such that $p^* b = a$.
\end{construction}

This defines a topological field theory:

\begin{proposition}
The assignment above defines a symmetric monoidal functor $Z_\mcal{X}: Bord_d \rightarrow Vect_\mb{C}$, that is, a topological field theory.
\end{proposition}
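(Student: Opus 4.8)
The plan is to verify the three requirements of a symmetric monoidal functor: compatibility with composition and identities (functoriality), and compatibility with the monoidal structures. Well-definedness on diffeomorphism classes and the monoidal axioms are immediate: $\mcal{X}^0(-)$ together with its restriction maps is a diffeomorphism invariant of a bordism, and for a disjoint union $\mcal{X}^0(N_1 \sqcup N_2) \cong \mcal{X}^0(N_1) \times \mcal{X}^0(N_2)$, hence $\mathbb{C}[\mcal{X}^0(N_1 \sqcup N_2)] \cong \mathbb{C}[\mcal{X}^0(N_1)] \otimes \mathbb{C}[\mcal{X}^0(N_2)]$, while $\uptau_{\geq 1}\mcal{X}(M_1 \sqcup M_2) \simeq \uptau_{\geq 1}\mcal{X}(M_1) \times \uptau_{\geq 1}\mcal{X}(M_2)$ makes the normalizing scalars multiplicative; the braiding goes to that of $Vect_{\mb{C}}$, and $\mcal{X}^0(\emptyset) = 0$ sends the monoidal unit to the monoidal unit. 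For the identity bordism $N \times [0,1]$, both boundary inclusions are homotopy equivalences, so both restriction maps on $\mcal{X}^0$ are isomorphisms (and equal, being induced by homotopic inclusions) and the normalizing scalar $|\uptau_{\geq 1}\mcal{X}(N \times [0,1])| / |\uptau_{\geq 1}\mcal{X}(N)|$ equals $1$; hence $Z_\mcal{X}(N\times[0,1]) = \mathrm{id}$.

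The substance is compatibility with composition. Given bordisms $M : N \to N'$ and $M' : N' \to N''$, let $W = M \cup_{N'} M' : N \to N''$ be the glued bordism. Up to homotopy $W$ is the pushout of $M \leftarrow N' \to M'$, so $\Maps(\sus(-), \mcal{X})$ sends the defining square to a homotopy pullback square of spectra, equivalently to a Mayer--Vietoris fiber sequence
\begin{equation*}
\mcal{X}(W) \longrightarrow \mcal{X}(M) \times \mcal{X}(M') \longrightarrow \mcal{X}(N'),
\end{equation*}
all of whose terms are $\pi$-finite by Proposition \ref{sec1-proposition2}. Writing $C_M, C_{M'}, C_W$ for the three normalizing scalars of Construction \ref{sec2-construction} and expanding, for a basis vector $a \in \mcal{X}^0(N)$ one gets $(Z_\mcal{X}(M') \circ Z_\mcal{X}(M))[a] = C_M C_{M'} \sum [b'|_{N''}]$, the sum over all $(b,b') \in \mcal{X}^0(M) \times_{\mcal{X}^0(N')} \mcal{X}^0(M')$ with $b|_N = a$, while $Z_\mcal{X}(W)[a] = C_W \sum [c|_{N''}]$ over $c \in \mcal{X}^0(W)$ with $c|_N = a$. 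The restriction map $\rho : \mcal{X}^0(W) \to \mcal{X}^0(M) \times_{\mcal{X}^0(N')} \mcal{X}^0(M')$, $c \mapsto (c|_M, c|_{M'})$, matches up these two index sets and the two ``output'' maps, and the long exact Mayer--Vietoris sequence shows $\rho$ is surjective with every fiber of order $|G|$, where $G := \mathrm{coker}\big(\mcal{X}^{-1}(M) \oplus \mcal{X}^{-1}(M') \to \mcal{X}^{-1}(N')\big)$. Hence $Z_\mcal{X}(W)[a] = C_W |G| \sum [b'|_{N''}]$ over the same pairs, and the two composites coincide exactly when $C_W |G| = C_M C_{M'}$, i.e. when
\begin{equation*}
|\uptau_{\geq 1}\mcal{X}(M)| \cdot |\uptau_{\geq 1}\mcal{X}(M')| = |G| \cdot |\uptau_{\geq 1}\mcal{X}(N')| \cdot |\uptau_{\geq 1}\mcal{X}(W)|.
\end{equation*}

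To close the argument I would read off this numerical identity from the Mayer--Vietoris long exact sequence in cohomological degrees $\leq -1$, capped on the right by the surjection $\mcal{X}^{-1}(N') \twoheadrightarrow G$: this is a bounded exact sequence of finite abelian groups, so by Lemma \ref{sec1-lemma} the alternating product of its orders is $1$, and this rearranges to the displayed equality upon recalling $|\uptau_{\geq 1}\mcal{X}(K)| = \prod_{i \geq 1} |\mcal{X}^{-i}(K)|^{(-1)^i}$. I expect the main obstacle to be precisely this bookkeeping: tracking the three-term periodicity of the Mayer--Vietoris sequence, the signs in the alternating product, the fiber count $|G|$, and the three normalizing scalars simultaneously, along with the routine checks that $\rho$ is compatible with the boundary restrictions and with the description of $W$ as a glued bordism. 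Everything else is formal.
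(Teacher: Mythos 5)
Your proof is correct and rests on the same two pillars as the paper's: the Mayer--Vietoris fiber sequence for the gluing $W = M\cup_{N'}M'$, and Lemma \ref{sec1-lemma} applied to the resulting long exact sequence of finite abelian groups. The organization of the counting argument differs slightly, and yours is a bit more direct. The paper fixes both $a$ and $a''$, argues first that the coefficient of $a''$ in $Z_\mcal{X}(W)a$ and in $(Z_\mcal{X}(M')\circ Z_\mcal{X}(M))a$ vanish together, then that when nonzero these counts are $a,a''$-independent by a torsor argument, and finally sums over all $a,a''$ to reduce to the size identity. You instead fix only $a$ and use that the restriction map $\rho:\mcal{X}^0(W)\to\mcal{X}^0(M)\times_{\mcal{X}^0(N')}\mcal{X}^0(M')$ is a surjective group homomorphism whose kernel is $\operatorname{im}(\mcal{X}^{-1}(N')\to\mcal{X}^0(W))\cong G$; this packages the paper's vanishing-plus-torsor casework into a single statement (``surjective with all fibers of size $|G|$'') and yields the equality of vectors, not just of coefficients, in one step. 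Both reductions land on the same numerical identity, and your derivation of it from the degree $\leq -1$ portion of Mayer--Vietoris via Lemma \ref{sec1-lemma} is the same computation the paper performs. You also spell out the identity bordisms and monoidal coherences, which the paper treats more lightly; those checks are correct as written.
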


\begin{proof}
We first show that it is a functor, that is, the assignment $Z_\mcal{X}$ composes. Given two bordism $M: N \rightarrow N'$, $M': N' \rightarrow N''$, the composite is $M \sqcup_{N'} M': N \rightarrow N''$. 
We need to show that 
\begin{equation}
    Z_{\mcal{X}}(M') \circ Z_{\mcal{X}}(M) = Z_{\mcal{X}}(M \sqcup_{N'} M')
\end{equation}
Given $a \in \mcal{X}^0(N)$, considered as a basis vector on $\mb{C}[\mcal{X}^0(N)]$, then 
\begin{align}
Z_{\mcal{X}}(M') \circ Z_{\mcal{X}}(M) \ a &= Z_{\mcal{X}}(M') ( \frac{|\uptau_{\geq 1}\mcal{X}(M)|} {|\uptau_{\geq 1}\mcal{X}(N')|} \sum_{a'} \sum_{b \rightarrow a, b \rightarrow a'} a' ) \\ 
&= \frac{|\uptau_{\geq 1}\mcal{X}(M)|} {|\uptau_{\geq 1}\mcal{X}(N')|} \frac{|\uptau_{\geq 1}\mcal{X}(M')|} {|\uptau_{\geq 1}\mcal{X}(N'')|} \\ & \ \ \ \ \ \ \ \ \ \sum_{a'} \sum_{b \rightarrow a, b \rightarrow a'}
\sum_{a''} \sum_{b' \rightarrow a', b' \rightarrow a''} a''
\end{align}

On the other hand, 

\begin{equation}
    Z_{\mcal{X}}(M \sqcup_{N'} M') a = \frac{|\uptau_{\geq 1}\mcal{X}(M \sqcup_{N'} M')|} {|\uptau_{\geq 1}\mcal{X}(N'')|} \sum_{a''} \sum_{b \rightarrow a, b \rightarrow a'} a''
\end{equation}

Fix $a'' \in \mcal{X}(N'')$, it suffices to show that 

\begin{equation} \label{sec2-eq}
\frac{|\uptau_{\geq 1}\mcal{X}(M)| |\uptau_{\geq 1}\mcal{X}(M')|} {|\uptau_{\geq 1}\mcal{X}(N')|} \sum_{b,b', f^* b = g^* b'}   1 = |\uptau_{\geq 1}\mcal{X}(M \sqcup_{N'} M')| \sum_{c \rightarrow a, c \rightarrow a''} 1
\end{equation}

The sum $\sum_{b,b', f^* b = g^* b'}$ sums over pairs $b \in \mcal{X}^0(M)$, $b' \in \mcal{X}^0(M')$ that pulls back to the same element in $\mcal{X}^0(N')$. We have a (homotopy) pushout diagram:
\begin{equation} 
\begin{tikzcd}
N' \arrow[r, "f'"] \arrow[d, "g'"]
& M' \arrow[d, "g"] \\
M \arrow[r, "f"]
& M \sqcup_{N'} N'
\end{tikzcd}
\end{equation}
Consider the following truncated Mayer-Vietoris sequence: 
\begin{equation}\label{sec2-MV-sequence}
\begin{split}
\cdots \rightarrow \mcal{X}^{-1}(M \sqcup_{N'} M') \rightarrow \mcal{X}^{-1}(M) \oplus \mcal{X}^{-1}(M') \rightarrow \mcal{X}^{-1}(N) \\
\rightarrow \mcal{X}^0(M \sqcup_{N'} M') \rightarrow ker (\mcal{X}^{0}(M) \oplus \mcal{X}^{0}(M') \rightarrow \mcal{X}^{0}(N)) \rightarrow 0
\end{split}
\end{equation}

The elements of $ker (\mcal{X}^{0}(M) \oplus \mcal{X}^{0}(M') \rightarrow \mcal{X}^{0}(N))$ are exactly pairs $b, b', f^* b = g^* b'$. 
From the exactness at $ker (\mcal{X}^{0}(M) \oplus \mcal{X}^{0}(M') \rightarrow \mcal{X}^{0}(N))$, given $a, a''$,  there is a $c$ that pulls back to $a, a''$ iff there is $b, b', f^* b = g^* b'$ that pulls back to $a, a''$ respectively. Therefore for each $a, a''$, the right hand side of \ref{sec2-eq} vanishes iff the left hand side does.

In addition, for each $a, a''$, if the two sides doesn't vanish, we claim that the sum is independent of $a, a''$. For left hand side, this is because the set of $c \in \mcal{X}^0(M \sqcup_{N'} M')$ that maps to $a, a''$ is a torsor for the kernel of the map $\mcal{X}^0(M \sqcup_{N'} M') \rightarrow \mcal{X}^0(N) \oplus \mcal{X}^0(N'')$. The same is true for the left hand side.

Combining this two facts, it is sufficient to show the sum over all $a, a''$ of the two sides are equal:
\begin{align}
\frac{|\uptau_{\geq 1}\mcal{X}(M)| |\uptau_{\geq 1}\mcal{X}(M')|} {|\uptau_{\geq 1}\mcal{X}(N')|} 
|ker(\mcal{X}^{0}(M) \oplus \mcal{X}^{0}(M') \rightarrow \mcal{X}^{0}(N))| \\ = |\uptau_{\geq 1}\mcal{X}(M \sqcup_{N'} M')| |\mcal{X}^0(M \sqcup_{N'} M')|
\end{align}
which follows directly from applying Lemma \ref{sec1-lemma} to the exact sequence \ref{sec2-MV-sequence}.

Now we provide the symmetric monoidal data. On objects, it is given by the canonical isomorphisms
\begin{equation}
\begin{aligned}
\mb{C}[\mcal{X}^0(N \sqcup N')] &\simeq \mb{C}[\mcal{X}^0(N) \times \mcal{X}^0(N')] \\ 
&\simeq \mathbb{C}[\mcal{X}^0(N)] \otimes \mathbb{C}[\mcal{X}^0(N')]
\end{aligned}
\end{equation}
On morphisms, as the maps are natural, it follows that they are compatible with the isomorphism on objects.
\end{proof}

\begin{example}
Let $\mcal{X} = \Sigma^p HA$ where $A$ is a finite abelian group. Recall that  $H^p(M; A)$ classifies $p$-principal $A$-bundles on $A$, and $H^{p-i}(M; A)$ classifies level $i$ automorphism of the bundles and so on.

Given a closed $d$ manifold $M$, the partition function $Z_{\Sigma^p HA}(M)$ counts the number of $p$-principal $A$-bundles on $M$, 
weighted by automorphisms. For a closed $d-1$ manifold $N$, $Z_{\Sigma^p HA}(N)$, the space of states on $N$, is $\mb{C}[H^n(M; A)]$. 
This is the $p$-form gauge theory $Z_{K(A, p)}$ that we described in the introduction, where the higher gauge group is $A$.
\end{example}

\begin{remark}\label{sec2-remark}
Let $\oss: Sp \rightarrow S$ be the underlying space functor. The theory $Z_{\mcal{X}}$ depends only on the space $\oss \mcal{X}$, which is a $\pi$-finite space, that is, it has finitely many connected components, and each component has finite homotopy groups nontrivial in only finitely degrees.

In fact, finite homotopy TFTs $Z'_X$ \cite{Fr1,FHLT,Qu, Tu} can be defined for any $\pi$-finite space $X$. For a $\pi$-finite spectrum $\mcal{X}$, the TFT we defined $Z_{\mcal{X}}$ agrees with the more general version:
\begin{equation}
    Z_{\mcal{X}} \simeq Z'_{\oss \mcal{X}}.
\end{equation}
\end{remark}



\section{Euler TFT} \label{sec3}
In this section, we recall the $d$ dimensional Euler invertible TFT associated to a nonzero complex number $\lambda \in \cx$.

Recall that a TFT $Z$ is invertible if 
\begin{enumerate}
    \item for every closed $d-1$ manifold $N$, $Z(N)$ is a one dimensional vector space (a line).
    \item for every bordism $M: N \rightarrow N'$, $Z(M): Z(N) \rightarrow Z(N')$ is an isomorphism of lines.
\end{enumerate}

Let $M$ be a $d$ dimensional compact manifold (possibly with boundary), recall that its Euler characteristic $\chi(M)$ 
is defined as the alternating sum
\begin{equation}
    \sum^i (-1)^i dim_k H^i(M; k)
\end{equation}
for any field $k$. This is well-defined as the cohomology groups are finite dimensional $k$ vector spaces and nonzero only in degrees $0$ to $d$. 

\begin{definition}
Let $\lambda \in \mb{C}^\times$ be a nonzero complex number, the  $d$ dimensional Euler TFT $E_\lambda$ is defined as follows:
for any closed $d-1$ manifold $N$,
\begin{equation}
E_\lambda (N) \coloneqq \mb{C}.    
\end{equation}
For a bordism $M: N \rightarrow N'$, 
\begin{equation}
E_\lambda(M): \mb{C} \rightarrow \mb{C}     
\end{equation}
is defined to be multiplication by $\lambda^{\chi(M) - \chi(N)} \in \cx$.
\end{definition}

It is a topological field theory by the following lemma:

\begin{lemma}
Given closed $d-1$ manifolds $N$, $N'$, $N''$ and bordisms $M: N \rightarrow N'$ and $M': N' \rightarrow N''$, then 
\begin{equation}
\chi(M \sqcup_{N'} M') - \chi(N) = \chi(M) - \chi(N) + \chi(M') - \chi(N').    
\end{equation}
\end{lemma}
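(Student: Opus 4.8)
The plan is to prove the additivity of the quantity $\chi(M) - \chi(N)$ for composable bordisms by reducing it to the inclusion–exclusion (Mayer–Vietoris) behaviour of the Euler characteristic, which in turn is a direct consequence of Lemma \ref{sec1-lemma} applied with $\mcal{X} = H\mathbb{F}_p$ (or, equivalently, the standard rank–nullity count for a long exact sequence of finite-dimensional vector spaces).

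\medskip

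First I would record the basic fact that for a closed manifold $N$ appearing as (part of) the boundary of a bordism, the relevant gluing is a homotopy pushout
\begin{equation*}
M \sqcup_{N'} M' \simeq M \cup_{N'} M',
\end{equation*}
so that there is a Mayer–Vietoris long exact sequence relating $H^*(M \sqcup_{N'} M'; k)$, $H^*(M;k) \oplus H^*(M';k)$ and $H^*(N';k)$. Taking alternating sums of dimensions along this long exact sequence — using that the alternating sum over a finite exact sequence of finite-dimensional vector spaces vanishes, which is precisely the $\mathbb{F}_p$-case of Lemma \ref{sec1-lemma} — yields the inclusion–exclusion identity
\begin{equation*}
\chi(M \sqcup_{N'} M') = \chi(M) + \chi(M') - \chi(N').
\end{equation*}

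\medskip

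Next I would simply subtract $\chi(N)$ from both sides of this identity and rearrange:
\begin{equation*}
\chi(M \sqcup_{N'} M') - \chi(N) = \bigl(\chi(M) - \chi(N)\bigr) + \bigl(\chi(M') - \chi(N')\bigr),
\end{equation*}
which is exactly the claimed formula. This shows $E_\lambda$ is compatible with composition of morphisms; the identity bordism $N \times [0,1]$ has $\chi(N\times[0,1]) - \chi(N) = \chi(N) - \chi(N) = 0$ by homotopy invariance, so $E_\lambda$ sends identities to identities. Finally, since $\chi$ is additive under disjoint union, $\chi(M_1 \sqcup M_2) - \chi(N_1 \sqcup N_2) = \bigl(\chi(M_1) - \chi(N_1)\bigr) + \bigl(\chi(M_2) - \chi(N_2)\bigr)$, so the scalars multiply correctly and $E_\lambda$ is symmetric monoidal; together with the previous steps this makes $E_\lambda$ a TFT, and it is invertible because each $E_\lambda(M)$ is multiplication by a nonzero scalar $\lambda^{\chi(M)-\chi(N)}$.

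\medskip

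The only real subtlety — and hence the step I would be most careful about — is justifying the homotopy-pushout description of the glued bordism and the resulting Mayer–Vietoris sequence at the level of cohomology with field coefficients; everything after that is a formal consequence of Lemma \ref{sec1-lemma}. One should note that the collar neighbourhoods of $N'$ in $M$ and $M'$ make the gluing well-behaved, so the union is (homotopy equivalent to) the pushout and $N'$ is a cofibration; once this is in hand the alternating-sum bookkeeping is routine. I expect no genuine obstacle beyond this point-set care.
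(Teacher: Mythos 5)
Your Mayer--Vietoris argument is correct and is the standard way to establish this identity; the paper states the lemma without proof, evidently regarding it as folklore, and your inclusion--exclusion computation $\chi(M \sqcup_{N'} M') = \chi(M) + \chi(M') - \chi(N')$ followed by subtracting $\chi(N)$ is exactly the intended justification. The extra remarks about identity bordisms and monoidality go beyond what the lemma asserts but are also correct and harmless.
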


\begin{remark}
The Euler TFTs $E_{\lambda}$ are invertible field theories.
\end{remark}

\begin{example}
Let $\lambda \neq 1, -1$. In even dimensions $d = 2n$, the $d$ dimensional sphere $S^d$ has Euler characteristic $\chi(S^d) = 2$. Therefore $E_\lambda(S^d) = \lambda^2 \neq 1$. We see that in even dimensions, the Euler TFT $E_\lambda$ is nontrivial.
\end{example}

In odd dimensions, by \Poincare duality (with $\mb{F}_2$ coefficients), the Euler characteristic of a closed $d$ manifold is $0$, so $Z(M) = 1$ for every closed $d$ manifold $M$. In fact, we have a stronger statement: 
\begin{proposition}\label{sec3-prop}
Let $d$ be odd. For any $\lambda \in \mathbb{C}^\times$, the $d$ dimensional Euler TFT is trivial, that is, $E_\lambda \simeq Z_{triv}$.
\end{proposition}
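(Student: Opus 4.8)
The plan is to construct an explicit isomorphism of symmetric monoidal functors $E_\lambda \simeq Z_{triv}$. Since both theories assign the line $\mathbb{C}$ to every closed $d-1$ manifold $N$, a natural transformation $Z_{triv} \Rightarrow E_\lambda$ is the same data as a scalar $c_N \in \mathbb{C}^\times$ for each such $N$, and it is an isomorphism of TFTs precisely when (i) [naturality] $c_{N'} = \lambda^{\chi(M) - \chi(N)}\, c_N$ for every bordism $M \colon N \to N'$, and (ii) [monoidality] $c_{N \sqcup N'} = c_N\, c_{N'}$ and $c_{\emptyset} = 1$. So the problem reduces to producing such scalars.

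The geometric input is the following fact in odd dimensions: if $d$ is odd and $M \colon N \to N'$ is a bordism, then $\chi(M) = \tfrac12\bigl(\chi(N) + \chi(N')\bigr)$. I would prove it by doubling: the double $DM = M \cup_{\partial M} M$ is a closed $d$ manifold, so $\chi(DM) = 0$ by \Poincare duality with $\mathbb{F}_2$ coefficients; on the other hand inclusion--exclusion (a Mayer--Vietoris argument of the kind used in \S \ref{sec2}, i.e.\ additivity of the Euler characteristic) gives $\chi(DM) = 2\chi(M) - \chi(\partial M)$, and $\partial M = N \sqcup N'$. Combining these yields the claim, hence $\chi(M) - \chi(N) = \tfrac12\bigl(\chi(N') - \chi(N)\bigr)$.

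Now fix once and for all a square root $\mu \in \mathbb{C}^\times$ of $\lambda$, so $\mu^2 = \lambda$; passing to $\mu$ is necessary because $\chi(N)$ can be odd for the even-dimensional manifold $N$, so $\lambda^{\chi(N)/2}$ has no literal meaning. Set $c_N \coloneqq \mu^{\chi(N)}$. Then for any bordism $M \colon N \to N'$,
\begin{equation*}
 c_{N'}\, c_N^{-1} = \mu^{\chi(N') - \chi(N)} = \mu^{\,2(\chi(M) - \chi(N))} = \lambda^{\chi(M) - \chi(N)} = E_\lambda(M),
\end{equation*}
which is (i), while (ii) is immediate from $\chi(N \sqcup N') = \chi(N) + \chi(N')$ and $\chi(\emptyset) = 0$. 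Therefore the $c_N$ assemble into an isomorphism $E_\lambda \simeq Z_{triv}$ of oriented (indeed unoriented) TFTs.

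The only real content is the odd-dimensional identity $\chi(M) = \tfrac12(\chi(N) + \chi(N'))$, which is standard; everything else is bookkeeping. The one subtlety worth flagging is the passage to a square root of $\lambda$: any choice of $\mu$ works and produces a valid, though not canonical, trivialization.
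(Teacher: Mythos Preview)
Your proof is correct and follows the same overall strategy as the paper: define the isomorphism on a closed $(d-1)$-manifold $N$ by (a version of) $\lambda^{\chi(N)/2}$ and reduce naturality to the identity $\chi(M) = \tfrac12\chi(\partial M)$ for odd $d$. The only substantive difference is how you establish that identity: the paper uses Poincar\'e--Lefschetz duality for the pair $(M,\partial M)$ with $\mathbb{F}_2$ coefficients (giving $\chi(M) = -\chi(M,\partial M)$) together with the long exact sequence of the pair, whereas you double $M$ to a closed manifold $DM$ and combine $\chi(DM)=0$ with inclusion--exclusion. Both arguments are standard and of comparable length; your doubling trick has the small advantage of only invoking Poincar\'e duality for closed manifolds. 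Your write-up is also more careful in two places the paper glosses over: you explicitly fix a square root $\mu$ of $\lambda$ (the paper writes $\alpha_N = \lambda^{\frac12\chi(N)}$ without comment, which is ambiguous when $\chi(N)$ is odd), and you check monoidality of the natural isomorphism, which the paper does not mention.
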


\begin{proof}
To show that $E_\lambda \simeq Z_{triv}$, we have to give a  natural isomorphism $\alpha: Z_{triv} \isorightarrow E_\lambda $ between the two functors.
We need the following data:
For every closed d-1 manifold $N$, we need  
\begin{equation}
\alpha(N): Z_{triv} (N)= \mb{C} \isorightarrow \mb{C} = E_\lambda(N),    
\end{equation}
which sends $1 \in \mb{C} = Z_{triv} (N)$ to a nonzero element 
\begin{equation}
\alpha_N \coloneqq \alpha(N)(1).     
\end{equation}
$\alpha$ also needs to satisfy the following compatibility condition: given a bordism $M: N \rightarrow N'$, we need a commutative diagram
\begin{equation}
    \begin{tikzcd}
Z_{triv}(N) \arrow[r, "Z_{triv}(M)"] \arrow[d, "\alpha(N)"]
& Z_{triv}(N')  \arrow[d, "\alpha(N')" ] \\
E_\lambda (N) \arrow[r, "E_{\lambda}(M)" ]
& E_\lambda(N').
\end{tikzcd}
\end{equation}
Tracking where  
\begin{equation}
1 \in \mb{C} = Z_{triv}(N)    
\end{equation}
goes, we see that we need to show that 
\begin{equation}\label{4.2.equation}
\alpha_{N'} = \lambda^{\chi(M) - \chi(N)} \alpha_N.    
\end{equation}
We claim that for 
\begin{equation}
\alpha_N = \lambda^{\frac{1}{2}\chi(N)},    
\end{equation}
Equation \ref{4.2.equation} is satisfied.
This is equivalent to showing that 
\begin{equation}
\chi(M) = \frac{1}{2}(\chi(N) + \chi(N')) = \frac{1}{2}\chi(\dd M).    
\end{equation}
Let $k = \mb{F}_2$. As every manifold is $k$-oriented, we have Poincare duality:
\begin{equation}
H^*(M; k) \simeq H_{d-*}(M, \dd M; k).    
\end{equation}
As $d$ is odd, we see that 
\begin{align}
\chi(M) &= \chi(H^*(M;k))\\ 
        &= \chi(H_{d-*}(M, \dd M; k))\\ 
        &= -\chi(H_*(M, \dd M; k)\\ 
        &= -\chi(M, \dd M),
\end{align}
Finally, consider the long exact sequence associated to the cofiber sequence $\dd M \rightarrow M \rightarrow M/\dd M$:
\begin{align*}
    \cdots \rightarrow H^*(M, \dd M; k) \rightarrow H^*(M;k) \rightarrow H^*(\dd M;k ) \rightarrow \cdots.
\end{align*}
By the additive version of Lemma \ref{sec1-lemma}, we see that 
\begin{align}
\chi(M) &= \chi(M, \dd M) + \chi(\dd M)\\  &= -\chi(M) + \chi(\dd M).    
\end{align}
Thus 
\begin{equation}
\chi(M) = \frac{1}{2} \chi(\dd M).      
\end{equation}
\end{proof}



\section{Orientation and \Poincare Duality for Spectra} \label{sec4}
In this section we review the general theory of orientation in generalized cohomology theories \cite{ABGHR, Ru}. 
The new ingredient here is a $\Eone$-ring spectrum \cite{HA}, which is a homotopically associative ring.
Let $M$ be a $d$ manifold, $\mcal{R}$ a $\E{1}$-ring spectrum, its homotopy groups $\pi_*\mcal{R}$ inherits a graded ring structure. 
\begin{definition}
An $\mcal{R}$-orientation on $M$ is a homology class 
\begin{equation}
[M] \in \mcal{R}_d(M, \dd M)     
\end{equation}
satisfying the following condition: for every interior point $x \in M^o$, the image of $[M]$ under
\begin{equation}
\mcal{R}_d(M, \dd M) \rightarrow \mcal{R}_d(M, M-x) \simeq \wmcal{X}_d(S^d) \simeq \pi_{0}\mcal{R}    
\end{equation}
is an multiplicative unit in the ring $\pi_* \mcal{R}$.
\end{definition}

An $\mcal{R}$-orientation on a $d$-dimensional manifold gives $\mcal{R}$-orientation on the boundary:

\begin{proposition} \cite{May, Ru}
Let $M$ be a $d$ manifold. A $\mcal{R}$-orientation on $M$, $[M] \in \mcal{R}_d(M, \dd M)$, gives a class $\dd [M] \in \mcal{R}_{d-1}(N)$ via the natural boundary map 
\begin{equation}
\dd : \mcal{R}_*(M, \dd M) \rightarrow \mcal{R}_{*-1}(\dd M).    
\end{equation}
The class $\dd [M] \in \mcal{R}_{*-1}(\dd M)$ is a $\mcal{R}$-orientation on the boundary $\dd M$.
\end{proposition}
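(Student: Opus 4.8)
The plan is to reduce the statement to a local computation at each boundary point, just as an $\mcal{R}$-orientation is itself a local condition. I would start by invoking the standard identification of the boundary neighborhood: near a point $x \in \dd M$, the manifold $M$ looks like $\mb{R}^{d-1} \times [0,\infty)$, and $\dd M$ looks like $\mb{R}^{d-1} \times \{0\}$. The key input is the commutative ladder relating the long exact sequence of the pair $(M, \dd M)$ in $\mcal{R}$-homology to the corresponding sequences for the local pairs. Concretely, for an interior point $y \in (\dd M)^o$ of the boundary (i.e.\ a point in a collar), there is a collapse map inducing
\begin{equation}
\dd : \mcal{R}_d(M, \dd M) \to \mcal{R}_{d-1}(\dd M) \to \mcal{R}_{d-1}(\dd M, \dd M - y) \simeq \wmcal{R}_{d-1}(S^{d-1}) \simeq \pi_0 \mcal{R},
\end{equation}
and I would show this composite sends $[M]$ to the same unit (up to a controllable sign coming from the orientation of the normal direction) that witnesses the orientation condition for $[M]$ at a nearby interior point of $M$.

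The main technical step is constructing the commutative diagram that links $\mcal{R}_d(M, \dd M) \to \mcal{R}_{d-1}(\dd M)$ with the local excision isomorphisms. I would use the tubular/collar neighborhood $U \cong \mb{R}^{d-1}\times[0,\infty)$ of a boundary point, excise the complement, and appeal to the suspension isomorphism $\wmcal{R}_d(S^d) \simeq \wmcal{R}_{d-1}(S^{d-1})$ induced by the cofiber sequence $S^{d-1} \to D^d \to S^d$ (this is where the degree shift and the "connecting map = multiplication by a unit" phenomenon happens). Since excision holds for any generalized homology theory and the suspension isomorphism is an iso of $\pi_*\mcal{R}$-modules carrying units to units, the image of $\dd[M]$ in $\pi_0\mcal{R}$ at the chosen boundary point is a unit multiple of the image of $[M]$ at an adjacent interior point, hence a unit.

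I expect the main obstacle to be bookkeeping rather than anything deep: one must check that the various collapse and excision maps are compatible across different boundary points, so that the single class $\dd[M]$ restricts to a unit at \emph{every} point of $\dd M$, not just one; and one must be careful that $\dd M$ is itself a closed $(d-1)$-manifold so "interior point" means every point. I would phrase the argument so that connectedness of components is handled componentwise, noting that the orientation condition need only be checked on one point per component since the restriction maps $\mcal{R}_{d-1}(\dd M) \to \pi_0\mcal{R}$ differ by units of $\pi_*\mcal{R}$ as the point varies within a component (this last claim follows from the local triviality of the tangent/normal data and the fact that $\mb{R}^{d-1}$ is connected). Alternatively, one can simply cite \cite{May, Ru} for the classical statement and observe the proof there is insensitive to replacing ordinary homology by $\mcal{R}$-homology, since it only uses excision, the suspension isomorphism, and the module structure over $\pi_*\mcal{R}$; I would include the short argument above for completeness.
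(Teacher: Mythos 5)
The paper does not prove this proposition; it cites \cite{May, Ru} and relies on the standard argument there. Your outline is exactly that standard argument: localize at a boundary point via a collar, use the naturality of the connecting map (e.g.\ for the triple $(M,\, \dd M \cup (M-U),\, M-U)$ with $U$ a small half-ball), excise to reduce to the local model where the connecting map becomes the suspension isomorphism $\wmcal{R}_d(S^d)\simeq\wmcal{R}_{d-1}(S^{d-1})$, and observe that this is a $\pi_*\mcal{R}$-module isomorphism so it carries units to units. Your bookkeeping caveats (check at every point, $\dd M$ is closed so all points are interior, handle components separately) are the right ones, so the proposal is correct and consistent with what the cited references supply.
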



\begin{definition} \label{sec4-oriented-tft}
For any $\mb{E}_1$-ring spectrum $\mcal{R}$, the $d$ dimensional $\mcal{R}$-oriented bordism category $Bord_d ^{\mcal{R}}$ is defined as follows:
\begin{enumerate}
    \item the objects are closed $\mcal{R}$-oriented $d-1$ manifold $(N, [N])$.
    \item a morphism \begin{equation}
(M, [M]): (N, [N]) \rightarrow (N', [N'])    
\end{equation} is a (diffeomorphism class of) $\mcal{R}$-oriented bordism $M$ with the orientation of the boundary $\dd [M]$ restricts to $[N]$ on $N$ and $-[N']$ on $N'$.
\end{enumerate}
It is symmetric monoidal under disjoint union. See \cite{Lu09} for details.
\end{definition}
\begin{remark}
The minus sign is needed for $\mcal{R}$-oriented bordisms to compose.
\end{remark}
Now we can define $\mcal{R}$-oriented TFTs:
\begin{definition}
A $\mcal{R}$-oriented topological field theory is a symmetric monoidal functor 
\begin{equation}
    Z: Bord_d ^{\mcal{R}} \rightarrow Vect_{\mb{C}}.
\end{equation}
\end{definition}
\begin{remark}
The forgetful functor
\begin{equation}
Bord_d ^{\mcal{R}} \rightarrow Bord_d
\end{equation}
is a symmetric monoidal functor, thus any (unoriented) TFT pullback to a $\mcal{R}$ oriented TFT.
\end{remark}

Orientation gives \Poincare duality via the cap product construction:

\begin{construction}
Let $\mcal{R}$ be a $\mb{E}_1$ ring spectrum, and $\mcal{X}$ a left $\mcal{R}$ module. Let 
\begin{equation}
f: N \rightarrow N' \wedge N''    
\end{equation}
be a map of pointed spaces. 
The cap product \cite{Ad74} is a map
\begin{equation}\label{cap}
-\smallfrown -: \wmcal{R}_m(N) \otimes \wmcal{X}^n(N') \rightarrow \wmcal{X}_{m-n}(N'').
\end{equation}
\end{construction}
In our setting, let $M$ be a $\mcal{R}$-oriented $d$-manifold with boundary $\dd M = N \sqcup N'$. We have the orientation class $[M] \in \mcal{R}_d(M, \dd M)$. 
Consider the map 
\begin{equation}
    M/{\dd M} \rightarrow M/N \wedge M/N'
\end{equation}
of pointed spaces. From Equation \ref{cap}  we get a map 
\begin{equation}
[M] \smallfrown - :  \mcal{X}^*(M, N) \rightarrow \mcal{X}_{d-*}(M, N').
\end{equation}
We denote this map by $\int_{[M, N]}$.
In the case that $N = \varnothing$, then we will denote this as $\int_{[M]}$. \Poincare duality states that taking cap products give functorial isomorphisms:

\begin{theorem}\label{poincare} \cite{May, Ru}
Let $\mcal{R}$ be a $\E{1}$-ring spectrum and $\mcal{X}$ a left $\mcal{R}$-module spectrum. Let $M$ be a $\mcal{R}$-oriented $d$-manifold with boundary $\dd M = N \sqcup N'$. We denote the orientation class as $[M]$. It restricts to orientations $[N], [N']$ on the boundaries. Cap product gives isomorphisms of long exact sequences:
\begin{equation}
\begin{tikzcd}
\cdots \arrow{r} \arrow{d}
    & \mcal{X}^{*}(M, N')\arrow{d}{\int_{[M,N']}} \arrow{r}
        & \mcal{X}^{*}(M) \arrow{d}{\int_{[M]}} \arrow{r}{} 
            & \mcal{X}^{*}(N') \arrow{d}{\int_{[N']}} \arrow{r}
                & \cdots \arrow{d}\\
\cdots \arrow{r}
    & \mcal{X}_{d-*}(M, N) \arrow{r}{}
        & \mcal{X}_{d-*}(M, \dd M) \arrow{r}{}
            & \mcal{X}_{d-1-*}(N') \arrow{r}
                & \cdots
\end{tikzcd}
\end{equation}
\end{theorem}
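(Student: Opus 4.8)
The plan is to reduce the statement to the classical Poincaré--Lefschetz duality theorem for ring spectra, which is standard (see \cite{May, Ru}), and then to check that the three duality isomorphisms assemble into a map of long exact sequences. The classical input says: if $M$ is a compact $\mcal{R}$-oriented $d$-manifold with boundary, then for any left $\mcal{R}$-module $\mcal{X}$, capping with the fundamental class $[M] \in \mcal{R}_d(M, \dd M)$ induces isomorphisms $\int_{[M]} : \mcal{X}^*(M) \isorightarrow \mcal{X}_{d-*}(M, \dd M)$, and more generally, if $\dd M = N \sqcup N'$ is a decomposition of the boundary into a disjoint union of (unions of) components, capping with the relative fundamental class gives $\int_{[M,N]} : \mcal{X}^*(M, N) \isorightarrow \mcal{X}_{d-*}(M, N')$. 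Applying this to the three cases $(M, \varnothing)$, $(M, N')$, and the closed manifold $N'$ itself (where it reduces to ordinary Poincaré duality $\int_{[N']}: \mcal{X}^*(N') \isorightarrow \mcal{X}_{d-1-*}(N')$, using that $\dd[M]$ restricts to $[N']$), gives all three vertical arrows and the fact that each is an isomorphism.

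The remaining work, and the real content of the statement as phrased here, is commutativity of the ladder. The top row is the long exact sequence in $\mcal{X}$-cohomology of the pair $(M, N')$, i.e.\ the cofiber sequence $M/N' \to \ \cdot \ $; more precisely it comes from $N' \to M \to M/N'$. The bottom row is the long exact sequence in $\mcal{X}$-homology of the pair, rewritten via the identifications $\mcal{X}_{d-*}(M, \dd M) = \mcal{X}_{d-*}(M/\dd M)$ and $\mcal{X}_{d-1-*}(N') = \mcal{X}_{d-1-*}(M, N)$ restricted appropriately; the connecting map $\mcal{X}^*(N') \to \mcal{X}^{*+1}(M, N')$ in the top row must be matched against the map $\mcal{X}_{d-1-*}(N') \to \mcal{X}_{d-*}(M, N)$ in the bottom row. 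Each square is an instance of the naturality of the cap product with respect to the maps of pairs involved, combined with the compatibility of $[M]$ with its boundary restriction $\dd[M]$ and the decomposition $\dd M = N \sqcup N'$. I would verify the squares one at a time: the square involving the restriction $\mcal{X}^*(M) \to \mcal{X}^*(N')$ is naturality of $\smallfrown$ for the inclusion $N' \hookrightarrow M$ together with the identity $\dd[M]|_{N'} = [N']$; the square involving the connecting homomorphisms follows from the compatibility of cap products with boundary maps, which is exactly the statement that $\dd([M] \smallfrown \beta) = \pm \, \dd[M] \smallfrown \beta|_{\dd M}$ at the chain/spectrum level.

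The main obstacle is bookkeeping of signs and of the precise pointed-space maps feeding into the cap product $f: N \to N' \wedge N''$ of Construction before the theorem: one must choose, for each square, the correct map of (quotients of) $M$ so that the two composites literally agree, not just up to homotopy. Concretely, the delicate point is that the decomposition $\dd M = N \sqcup N'$ enters asymmetrically — the top row uses the pair $(M, N')$ while the bottom row's last term is $\mcal{X}_{d-1-*}(N') \cong \mcal{X}_{d-*}(M,N)$, so one is implicitly using a homotopy equivalence $M/N \simeq \dots$ coming from collapsing a collar of $N'$, and its compatibility with the fundamental class needs care. Once the cap product's naturality and its interaction with connecting maps are set up cleanly (for which I would cite \cite{Ad74, May, Ru}), each square is a formal diagram chase, and the five-lemma is not even needed since all verticals are already known to be isomorphisms — we only need commutativity.
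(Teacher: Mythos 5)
The paper does not prove this theorem itself; Theorem \ref{poincare} is cited directly to \cite{May, Ru}, so there is no in-paper proof to compare against. Your outline is a fair sketch of the standard argument found in those references: Lefschetz duality furnishes the three vertical isomorphisms, and commutativity of the ladder reduces to naturality of the cap product together with the boundary compatibility $\dd\bigl([M] \smallfrown -\bigr) = \pm\, \dd[M] \smallfrown \bigl(-|_{\dd M}\bigr)$ for the connecting homomorphisms. You are also right that, since all three verticals are already known to be isomorphisms, no five-lemma is needed and the content is purely the commutativity check.

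One thing you should repair: your description of the bottom row is confused. You write that the rightmost displayed term comes from an identification $\mcal{X}_{d-1-*}(N') = \mcal{X}_{d-1-*}(M,N)$, and later $\mcal{X}_{d-1-*}(N') \cong \mcal{X}_{d-*}(M,N)$; neither holds. The bottom row is the long exact sequence of the \emph{triple} $(M, \dd M, N)$, and the identification actually in play is the excision isomorphism $\mcal{X}_{*}(\dd M, N) \cong \mcal{X}_{*}(N')$, which is available because $\dd M = N \sqcup N'$. The group $\mcal{X}_{d-*}(M,N)$ appears only as the left-hand term of the triple sequence and is not naturally identified with anything built from $N'$ alone. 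The square involving the connecting homomorphism --- the one you correctly flag as delicate --- requires precisely this identification of the target to even be stated, so the slip is worth fixing before declaring that square a formal diagram chase; with the triple sequence correctly in hand, the square in question is the standard compatibility of cap products with the boundary map of a triple, as in \cite{Ad74, May, Ru}.
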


\begin{example}
Let $\mcal{R} = H\mathbb{F}_2$. There is only one unit in $\pi_0(H\mb{F}_2) = \mb{F}_2$, 
the unique local choice glue together to give a $H\mb{F}_2$-orientation. 
Therefore any manifold is uniquely $H\mb{F}_2$-oriented. The $H\mb{F}_2$ bordism category is simply the unoriented bordism category $Bord_d$. $\pi$-finite $H\mb{F}_2$-module spectra can be represented as bounded chain complexes of finite dimensional $\mb{F}_2$-vector spaces.
\end{example}

\begin{example}
Let $\mcal{R} = H\mb{Z}$, then $H\mb{Z}$-orientation is the standard notion of orientation, 
and the $H\mb{Z}$-oriented bordism category is the oriented bordism category. 
$\pi$-finite $H\mb{Z}$-module spectra can be represented as bounded chain complexes of finite abelian groups.
\end{example}

\begin{example}
Let $\mcal{R} = \mcal{S}$ the sphere spectrum, then a $\mcal{S}$-orientation is a stable framing, that is, a trivialization of the stable normal bundle \cite{Ru}.  $\pi$-finite $\mcal{S}$ modules are simply $\pi$-finite spectra.
\end{example}




\section{Pontryagin and Brown-Comenetz Dualities} \label{sec5}
In this section we review Pontryagin duality for finite abelian groups and Brown-Comenetz duality \cite{BC} for $\pi$-finite spectra. We first start with Pontryagin duality:
Let $A$ be a abelian group, $\cx$ the group of nonzero complex number. The Pontryagin dual group $\hat{A}$ is the group of characters $Hom(A, \cx)$. Let $Ab$ be the category of abelian group, taking Pontryagin dual defines a contravariant functor:

\begin{equation}
\hat{D} \coloneqq Hom(-, \cx): Ab \rightarrow (Ab)^{op}.    
\end{equation}

The canonical pairing
\begin{equation}
    \mu: A \otimes \hat{A} \rightarrow \cx.
\end{equation}
gives rise to a map $\alpha_\mu: \mb{C}[A] \otimes \mb{C}[\hat{A}] \rightarrow \mb{C}$. On basis vectors it sends $a \otimes \alpha \rightarrow \mu(a,\alpha) = \alpha(a)$, where we identify an element $a \in A$ with the standard basis vector in $\mb{C}[A]$. 

Let $Ab^{fin}$ be the full subcategory of finite abelian group. We now assume that $A$ is a finite abelian group. In this case, 
the pairing is a nondegenerate pairing, that is, the induced map $(\mb{C}[A])^* \rightarrow \mb{C}[\hat{A}]$ is an isomorphism. Taking dimension, this implies that the dual groups have equal cardinality: $|A| = |\hat{A}|$. We see that Pontryagin dual restricts to a functor:
\begin{equation}
\hat{D}: Ab^{fin} \rightarrow ((Ab)^{fin})^{op}.    
\end{equation}
In addition, we have the corollary:
\begin{corollary} \label{sec5-corollary}
The map 
\begin{equation}
\begin{aligned}
    \mb{C}[A] &\rightarrow \mb{C}[\hat{A}] \\
           a &\mapsto \sum_\alpha \mu(a,\alpha)\  \alpha
\end{aligned}
\end{equation}
is an isomorphism.
\end{corollary}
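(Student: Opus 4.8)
\emph{Proof proposal.}

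The plan is to recognize the displayed map as the adjoint of the pairing $\alpha_\mu$ and then read off its invertibility from the nondegeneracy established just above, together with the equality $|A|=|\hat A|$. Write $T\colon \mb{C}[A]\to\mb{C}[\hat A]$ for the map $a\mapsto\sum_\alpha\mu(a,\alpha)\,\alpha$. Identifying $\mb{C}[\hat A]$ with its linear dual $(\mb{C}[\hat A])^{*}$ via the standard basis $\{\alpha\}_{\alpha\in\hat A}$ (declared self-dual), the map $T$ becomes the adjoint $\phi\colon \mb{C}[A]\to(\mb{C}[\hat A])^{*}$ of the bilinear form $\alpha_\mu\colon \mb{C}[A]\otimes\mb{C}[\hat A]\to\mb{C}$, since $\phi(a)(\alpha)=\alpha_\mu(a\otimes\alpha)=\mu(a,\alpha)$. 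The other adjoint $\psi\colon \mb{C}[\hat A]\to(\mb{C}[A])^{*}$ of the same form is exactly the isomorphism recorded above. A bilinear pairing between finite-dimensional vector spaces of equal dimension is nondegenerate in one slot iff it is nondegenerate in the other; here $\dim_{\mb{C}}\mb{C}[A]=|A|=|\hat A|=\dim_{\mb{C}}\mb{C}[\hat A]$, so nondegeneracy of $\psi$ forces $\phi$, and hence $T$, to be an isomorphism.

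Alternatively I would give the explicit inverse (this is essentially the computation underlying the nondegeneracy itself). Let $S\colon \mb{C}[\hat A]\to\mb{C}[A]$ be $\alpha\mapsto\sum_{a}\mu(a,\alpha)^{-1}\,a$. Using that each $\mu(a,\alpha)=\alpha(a)$ is a root of unity, one computes $S(T(b))=\sum_a\bigl(\sum_\alpha\alpha(b)\alpha(a)^{-1}\bigr)a=\sum_a\bigl(\sum_\alpha\alpha(b-a)\bigr)a$, and the character orthogonality relation $\sum_{\alpha\in\hat A}\alpha(c)=|A|\,\delta_{c,0}$ collapses this to $|A|\cdot b$; symmetrically $T\circ S=|A|\cdot\mathrm{id}$. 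The orthogonality relation is proved by choosing, for $c\neq 0$, a character $\beta$ with $\beta(c)\neq 1$ and noting that $\alpha\mapsto\beta\alpha$ permutes $\hat A$, so $\sum_\alpha\alpha(c)=\beta(c)\sum_\alpha\alpha(c)$ forces the sum to vanish, while for $c=0$ every $\alpha(0)=1$ so the sum is $|\hat A|=|A|$. Thus $T=\tfrac1{|A|}S^{-1}$ is invertible.

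There is no serious obstacle here — the statement is a formal consequence of the nondegeneracy proved above. The only points requiring care are the bookkeeping of the basis identification $\mb{C}[\hat A]\cong(\mb{C}[\hat A])^{*}$ and the use of the hypothesis $|A|=|\hat A|$ in passing from one-sided to two-sided nondegeneracy; in the computational route, the one lemma to pin down is the character orthogonality relation.
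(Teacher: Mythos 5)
Your proposal is correct and, in its first part, spells out exactly the reasoning the paper leaves implicit: the corollary is stated without proof, as an immediate consequence of the nondegeneracy of $\alpha_\mu$ together with the equality $|A|=|\hat A|$ established just above. Your second argument via character orthogonality is a valid more elementary alternative and also verifies the claim directly.
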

The nondegeneracy of the pairing implies that taking Pontryagin dual  $\hat{D}$ is an involution on $Ab^{fin}$:
\begin{theorem}\label{sec5-theorem}
Restricted to $Ab^{fin}$,  $\hat{D}^2 \simeq id$.
\end{theorem}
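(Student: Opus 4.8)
The plan is to exhibit a natural isomorphism between the double-dual functor $\hat{D}^2$ and the identity on $Ab^{fin}$. First I would construct, for each finite abelian group $A$, the canonical evaluation map
\begin{equation}
\mathrm{ev}_A: A \longrightarrow \hat{\hat{A}} = Hom(Hom(A, \cx), \cx), \qquad \mathrm{ev}_A(a)(\alpha) = \alpha(a).
\end{equation}
Naturality in $A$ is a routine check: for a homomorphism $f: A \to B$ one verifies $\mathrm{ev}_B \circ f = \hat{\hat{f}} \circ \mathrm{ev}_A$ directly from the definitions, since both sides send $a$ to the character $\beta \mapsto \beta(f(a))$ on $\hat{B}$. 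So the only real content is that $\mathrm{ev}_A$ is an isomorphism for every finite abelian $A$.

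To prove $\mathrm{ev}_A$ is an isomorphism I would use the nondegeneracy of the pairing $\mu: A \otimes \hat{A} \to \cx$ already recorded in the excerpt. Nondegeneracy on the right says the induced map $A \to Hom(\hat{A}, \cx) = \hat{\hat{A}}$ — which is exactly $\mathrm{ev}_A$ — is injective; this is the statement that a nonzero $a$ admits some character $\alpha$ with $\alpha(a) \neq 1$, which one gets from writing $A$ as a finite product of cyclic groups and checking the cyclic case by hand. Then I would invoke the cardinality equality $|A| = |\hat{A}|$, also established just above in the excerpt (via the nondegeneracy of $\alpha_\mu$ and a dimension count), applied twice: $|\hat{\hat{A}}| = |\hat{A}| = |A|$. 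An injective homomorphism between finite abelian groups of the same order is automatically an isomorphism, so $\mathrm{ev}_A$ is an isomorphism.

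Finally I would package this: the collection $\{\mathrm{ev}_A\}_{A \in Ab^{fin}}$ is a natural transformation $\mathrm{id} \Rightarrow \hat{D}^2$ (as a contravariant functor composed with itself, $\hat{D}^2$ is covariant) which is objectwise an isomorphism, hence a natural isomorphism $\hat{D}^2 \simeq \mathrm{id}$. I do not expect a serious obstacle here; the one point deserving care is the direction/variance bookkeeping — making sure that $\hat{D}^2$ is genuinely a covariant endofunctor of $Ab^{fin}$ and that $\mathrm{ev}$ is natural with respect to it rather than its opposite — and the appeal to the cyclic decomposition of $A$ to get injectivity of $\mathrm{ev}_A$, both of which are standard. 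Alternatively, one could avoid the decomposition entirely by noting that $\mathrm{ev}_A$ fits into the commutative square relating $\mu$ on $A \otimes \hat A$ and $\mu$ on $\hat A \otimes \hat{\hat A}$, so that nondegeneracy of the former forces $\mathrm{ev}_A$ to be the transpose of an isomorphism.
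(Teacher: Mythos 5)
Your proof is correct, and it is the standard argument for Pontryagin self-duality of finite abelian groups. The paper does not actually write out a proof of this theorem — it simply asserts it follows from nondegeneracy of the pairing $\mu$ — so your write-up fills in precisely the details the paper leaves implicit: the evaluation map $\mathrm{ev}_A(a)(\alpha)=\alpha(a)$, its naturality, injectivity (either by the cyclic decomposition you outline or by appealing directly to nondegeneracy), and the cardinality count $|A| = |\hat A| = |\hat{\hat A}|$ (the middle equality being the one the paper records). One small point of bookkeeping worth noting: the paper's stated form of nondegeneracy is that the linear map $(\mathbb{C}[A])^* \to \mathbb{C}[\hat A]$ induced by $\alpha_\mu$ is an isomorphism, which is a priori a statement about group algebras rather than about separation of points; you should be explicit that this implies (or you independently verify via the cyclic decomposition, as you propose) that for nonzero $a$ some character $\alpha$ has $\alpha(a) \neq 1$, since that is the form of nondegeneracy your injectivity step actually uses.
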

As a corollary, we get that Pontryagin duality is in fact a duality on finite abelian groups:
\begin{corollary}
$\hat{D}: Ab^{fin} \rightarrow (Ab^{fin})^{op}$ is an equivalence of categories.
\end{corollary}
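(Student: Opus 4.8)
The plan is to deduce this immediately from Theorem \ref{sec5-theorem}. Recall that a functor $F\colon \mathcal{C}\to\mathcal{D}$ is an equivalence exactly when there is a functor $G\colon\mathcal{D}\to\mathcal{C}$ together with natural isomorphisms $G\circ F\simeq\mathrm{id}_{\mathcal{C}}$ and $F\circ G\simeq\mathrm{id}_{\mathcal{D}}$. So it suffices to exhibit a quasi-inverse to $\hat{D}$.

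First I would identify the candidate. The contravariant functor $\mathrm{Hom}(-,\cx)$ on $Ab^{fin}$ can be packaged either as $\hat{D}\colon Ab^{fin}\to(Ab^{fin})^{op}$ or, reading it on the opposite categories, as a functor $\hat{D}^{op}\colon(Ab^{fin})^{op}\to Ab^{fin}$ with the same assignment $A\mapsto\hat{A}$ on objects and the same action on morphisms. I claim $\hat{D}^{op}$ is the desired quasi-inverse.

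Next I would check the two composites. By construction $\hat{D}^{op}\circ\hat{D}\colon Ab^{fin}\to Ab^{fin}$ is the double-dual functor $\hat{D}^2$, and Theorem \ref{sec5-theorem} supplies the natural isomorphism $\hat{D}^2\simeq\mathrm{id}$, implemented by the evaluation maps $A\to\hat{\hat{A}}$, $a\mapsto(\alpha\mapsto\alpha(a))$; these are isomorphisms of finite abelian groups because the pairing $\mu\colon A\otimes\hat{A}\to\cx$ is nondegenerate and $|A|=|\hat{A}|$, as recorded just before Corollary \ref{sec5-corollary}. The other composite $\hat{D}\circ\hat{D}^{op}\colon(Ab^{fin})^{op}\to(Ab^{fin})^{op}$ is again the double-dual functor, now interpreted on $(Ab^{fin})^{op}$; the same family of evaluation isomorphisms, read in the opposite category, exhibits it as naturally isomorphic to the identity. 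Hence $\hat{D}$ is an equivalence.

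I do not expect a genuine obstacle: the corollary is a formal consequence of Theorem \ref{sec5-theorem}, and the only care required is in tracking which of the two categories carries the $\ ^{op}$ when interpreting the second composite, and in confirming that naturality of the evaluation transformation survives this reinterpretation. I would close by noting that, as a byproduct, $\hat{D}$ is fully faithful and essentially surjective.
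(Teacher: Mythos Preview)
Your proposal is correct and is exactly the argument the paper intends: the corollary is stated immediately after Theorem \ref{sec5-theorem} with no further proof, so the paper's implicit reasoning is precisely that $\hat{D}^2\simeq\mathrm{id}$ makes $\hat{D}^{op}$ a quasi-inverse to $\hat{D}$. You have simply made the formal details explicit.
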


\begin{remark}
The Pontryagin dual of $A$ is commonly defined as $Hom(A, \mb{Q}/\mb{Z})$. There is a natural map 
\begin{equation}
\mb{Q}/\mb{Z} \xrightarrow{exp(2\pi i)} \cx.
\end{equation}
For a finite abelian group $A$, the map
\begin{equation}
    Hom(A, \mb{Q}/\mb{Z}) \rightarrow Hom(A, \cx)
\end{equation}
is an isomorphism and the two notions coincide.
We choose $\cx$ over $\mb{Q}/\mb{Z}$ as our TFTs are complex-valued.
\end{remark}

We now generalize Pontryagin duality to $\pi$-finite spectra. There is a spectrum $\icx$ that plays the role of $\cx$:

\begin{theorem} \cite{BC}
There exists a spectrum $\icx$ with the following data: for any spectra $\mcal{X}$, there is a functorial equivalence
\begin{equation}\label{sec5-equation}
\pi_{-*}(\Maps(\mcal{X}, \icx)) \simeq \widehat{\pi_*(\mcal{X})}.
\end{equation}
More precisely, we view both sides as families of contravariant functors $Sp \rightarrow Ab$, and we claim that there is a natural isomorphism between these two families of functors, compatible with the connecting homomorphisms.
\end{theorem}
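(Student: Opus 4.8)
The plan is to realize $\icx$ via Brown representability, as the spectrum representing the cohomology theory $\mcal{X} \mapsto \widehat{\pi_*(\mcal{X})}$; this is in essence the construction of \cite{BC}. For each $n \in \mb{Z}$ I would consider the contravariant functor
\begin{equation*}
h^n : Sp^{op} \rightarrow Ab, \qquad h^n(\mcal{X}) \coloneqq Hom(\pi_n \mcal{X}, \cx) = \widehat{\pi_n(\mcal{X})},
\end{equation*}
and note that the natural isomorphisms $\pi_n(\Sigma\mcal{X}) \cong \pi_{n-1}(\mcal{X})$ promote the collection $\{h^n\}_{n \in \mb{Z}}$ to a $\mb{Z}$-graded theory with natural suspension isomorphisms $h^n(\Sigma\mcal{X}) \cong h^{n-1}(\mcal{X})$. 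The first step is to verify that $\{h^n\}$ is a reduced cohomology theory on the stable homotopy category. Homotopy invariance is immediate. The wedge axiom holds because $\pi_n$ carries coproducts of spectra to direct sums of abelian groups while $Hom(-, \cx)$ carries direct sums to products. For exactness, given a cofiber sequence $\mcal{X} \rightarrow \mcal{Y} \rightarrow \mcal{Z}$ I would apply $Hom(-, \cx)$ to the associated long exact sequence of homotopy groups; since $\cx$ is divisible, and hence an injective $\mb{Z}$-module, $Hom(-, \cx)$ is exact, so the result is exactly the long exact sequence a cohomology theory must produce for that cofiber sequence.

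Having checked the axioms, I would invoke Brown representability for the stable homotopy category (\cite{Ad74}; see also \cite{HA}): because $\mathrm{Ho}(Sp)$ is compactly generated by the sphere spectrum, any contravariant functor on it that sends wedges to products and cofiber sequences to exact sequences is representable. Applied to $\{h^n\}$ this yields a spectrum, which we denote $\icx$, together with a natural isomorphism $[\mcal{X}, \Sigma^n \icx] \isorightarrow h^n(\mcal{X})$. Rewriting the left-hand side as $\pi_{-n}\Maps(\mcal{X}, \icx)$ gives the asserted functorial equivalence $\pi_{-*}\Maps(\mcal{X}, \icx) \simeq \widehat{\pi_*(\mcal{X})}$, with naturality in $\mcal{X}$ supplied by Brown representability.

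It then remains to match the connecting homomorphisms. In the represented theory, the connecting map of a cofiber sequence $\mcal{X} \rightarrow \mcal{Y} \rightarrow \mcal{Z}$ is induced by the Puppe map $\mcal{Z} \rightarrow \Sigma\mcal{X}$, whereas in $\widehat{\pi_*(-)}$ the connecting map is $Hom(-, \cx)$ applied to the connecting map of the homotopy long exact sequence, which is itself $\pi_*$ of that same Puppe map (up to the suspension identification). Naturality of the isomorphism with respect to the Puppe map then forces the two to agree, after a short diagram chase. I expect only two inputs to be genuinely load-bearing — the injectivity of $\cx$ as an abelian group, and Brown representability in the stable setting — and the one step requiring real care to be precisely this last point: ensuring the isomorphism is compatible simultaneously with morphisms of spectra and with the connecting maps, i.e.\ that it is an isomorphism of cohomology theories rather than merely a levelwise one.
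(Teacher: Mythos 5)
Your proof is correct and is essentially the argument of Brown and Comenetz themselves; the paper cites \cite{BC} without reproducing it, so your Brown-representability construction (with the observations that $\cx$ is divisible hence injective, and that $\mathrm{Ho}(Sp)$ is compactly generated by the sphere) is precisely the intended route, modulo using $\cx$ in place of $\mb{Q}/\mb{Z}$, a substitution the paper itself notes is harmless on $\pi$-finite spectra.
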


\begin{definition}
Let $\mcal{X}$ be a spectrum, the Brown-Comenetz dual spectrum $\hcal{X}$ is defined to be the mapping spectrum $\Maps(\mcal{X}, \icx)$.
\end{definition}
This defines a contravariant functor 
\begin{equation}
    \hcal{D} \coloneqq \Maps(-, \icx): Sp \rightarrow Sp^{op}.
\end{equation}

\begin{example}
Let $\mcal{X}$ be the sphere spectrum $\mcal{S}$. Then 
\begin{equation}
\hcal{S} = \Maps(\mcal{S},\icx) \simeq \icx.   
\end{equation}
Therefore $\icx$ is the Brown-Comenetz dual of the sphere spectrum $\mcal{S}$. This is similar to the fact that $\cx$ is the Pontryagin dual group of $\mb{Z}$.
\end{example}
\begin{remark}
The common approach to Brown-Comenetz uses a similarly defined $I\mb{Q}/\mb{Z}$ rather than $I\cx$. As with the abelian group case, they give the same answers on $\pi$-finite spectra. We use $I\cx$ over $I\mb{Q}/\mb{Z}$ because the target of our TFTs are complex-valued and $\icx$ is the natural target for invertible TFTs (see \cite{FH}).
\end{remark}

By equation \ref{sec5-equation} and the fact that the Pontryagin dual of a finite abelian group is also finite, we see that taking Brown-Comenetz dual restricts to a functor
\begin{equation}
    \hcal{D} \coloneqq \Maps(-, \icx): Sp^{fin} \rightarrow (Sp^{fin})^{op}.
\end{equation}

There is a natural transformation $id \rightarrow \hcal{D}^2$, given by 
\begin{equation} \label{sec5-eq}
\begin{aligned}
\mcal{X} &\rightarrow \hat{\hcal{X}} = \Maps(\Maps(\mcal{X}, \icx), \icx) \\
      x &\mapsto (\alpha \mapsto \alpha(a)).
\end{aligned}
\end{equation}

Restricting to $\pi$-finite spectra, this natural transformation is an isomorphism:
\begin{theorem}\label{sec5-theorem2}\cite{BC}
For $\pi$-finite spectrum $\mcal{X}$, the natural map \ref{sec5-eq} is an isomorphism. Therefore, we have an equivalence of functors 
\begin{equation}
\hcal{D}^2 \simeq id: Sp^{fin} \rightarrow Sp^{fin}.    
\end{equation}
\end{theorem}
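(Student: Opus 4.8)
The plan is to show directly that the evaluation natural transformation $\eta \colon \mathrm{id} \to \hcal{D}^2$ of \ref{sec5-eq} becomes an equivalence after restriction to $Sp^{fin}$, and to do this by d\'evissage along the extension closure of the spectra $H\mb{F}_p$. For a spectrum $\mcal{X}$ write $P(\mcal{X})$ for the assertion that the component $\eta_{\mcal{X}} \colon \mcal{X} \to \hat{\hcal{X}}$ is an equivalence of spectra. By the closure principle recalled in \S\ref{sec1}, it then suffices to prove that $P$ holds on every $H\mb{F}_p$ and that $P$ is closed under suspension and extension.

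The key structural input is that $\hcal{D} = \Maps(-,\icx)$ is exact. In the stable setting a cofiber sequence is, up to a shift, a fiber sequence, and mapping into a fixed spectrum turns cofiber sequences into fiber sequences; hence $\hcal{D}$ sends a fiber sequence $\mcal{X}\to\mcal{Y}\to\mcal{Z}$ to a fiber sequence $\hcal{Z}\to\hcal{Y}\to\hcal{X}$, and $\widehat{\Sigma\mcal{X}} \simeq \Sigma^{-1}\hcal{X}$. Consequently $\hcal{D}^2$ is a covariant exact endofunctor commuting with suspension, and $\eta$ is a natural transformation between two such functors. It is then immediate that $P$ is closed under suspension, and that $P$ is closed under extension: applying $\eta$ to a fiber sequence $\mcal{X}\to\mcal{Y}\to\mcal{Z}$ yields a map of fiber sequences with source that sequence and target $\hat{\hcal{X}}\to\hat{\hcal{Y}}\to\hat{\hcal{Z}}$; if $P(\mcal{X})$ and $P(\mcal{Z})$ hold, the outer two vertical maps are equivalences, so $\eta_{\mcal{Y}}$ is an equivalence by the long exact sequence of homotopy groups and the five lemma.

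It remains to treat the base case $\mcal{X} = H\mb{F}_p$, which it costs nothing to do for an arbitrary finite abelian $A$ in place of $\mb{F}_p$. By the Brown-Comenetz theorem \ref{sec5-equation}, $\pi_{-*}(\hcal{X}) \simeq \widehat{\pi_*(HA)}$, which is $\hat{A}$ concentrated in degree $0$; hence $\hcal{X}$ is an Eilenberg-MacLane spectrum, $\widehat{HA}\simeq H\hat{A}$. Applying this a second time, $\hat{\hcal{X}} \simeq H\hat{\hat{A}}$, and under these identifications $\eta_{HA}$ is $H$ applied to the Pontryagin evaluation map $A\to\hat{\hat{A}}$, which is an isomorphism by Theorem \ref{sec5-theorem}. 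Thus $P(HA)$ holds, and by the two closure properties above $P(\mcal{X})$ holds for every $\mcal{X}\in Sp^{fin}$; that is, $\hcal{D}^2 \simeq \mathrm{id}$ on $Sp^{fin}$.

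The main obstacle is the identification in the base case: one must check that the abstract equivalence $\widehat{HA}\simeq H\hat{A}$ produced by \ref{sec5-equation} intertwines the evaluation map $\eta$ with honest Pontryagin double duality. This is exactly where the functoriality clause of the Brown-Comenetz theorem — naturality of the isomorphism in $\mcal{X}$ together with its compatibility with connecting homomorphisms — must be invoked. Alternatively, one can avoid pinning down $\eta_{HA}$ on the nose: it is a map of spectra whose $\pi_0$ are finite abelian groups of the same cardinality $|A| = |\hat{\hat{A}}|$ (by the nondegeneracy of $\mu$ recalled in \S\ref{sec5}) and whose effect on $\pi_0$ is the double pairing assembled from $\mu$, which is injective, hence an isomorphism. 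Everything else in the argument — exactness of $\hcal{D}$, the five-lemma step, and the reduction to $H\mb{F}_p$ — is formal.
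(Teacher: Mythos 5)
The paper cites \cite{BC} for this theorem and does not reproduce a proof, so there is no internal proof to compare against; what you have written is a self-contained argument. Your strategy — dévissage along the extension closure of the spectra $H\mb{F}_p$, using that $\hcal{D}=\Maps(-,\icx)$ is exact and commutes with (de)suspension, with the base case reduced to classical Pontryagin double duality for finite abelian groups — is correct and is essentially the standard argument one finds in the literature for Brown–Comenetz biduality on bounded, finitely generated spectra. The two closure steps are sound: $\hcal{D}^2(\Sigma\mcal{X})\simeq\Sigma\hcal{D}^2(\mcal{X})$ follows from $\widehat{\Sigma\mcal{X}}\simeq\Sigma^{-1}\hcal{X}$, and for extensions, applying $\eta$ to a fiber sequence and invoking the five lemma on homotopy groups (or, slightly more cleanly, taking fibers of the three vertical maps and observing that a fiber sequence with two contractible terms has all three contractible) does the job. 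You correctly identify the one non-formal point, namely that the natural isomorphism $\pi_{-*}(\hcal{X})\cong\widehat{\pi_*\mcal{X}}$ from \ref{sec5-equation} must intertwine $\eta_{HA}$ with the honest double-duality map $A\to\hat{\hat A}$, and your workaround — that $\eta_{HA}$ induces on $\pi_0$ an injective homomorphism between finite groups of equal order $|A|=|\hat{\hat A}|$ — is a legitimate way to sidestep that verification. One tiny addition worth making explicit: since you only prove closure of $P$ under $\Sigma$, you should note that closure under $\Sigma^{-1}$ follows because $\Sigma$ is an equivalence on $Sp$ and $\hcal{D}^2$ commutes with it; the Postnikov-tower decomposition of a general $\pi$-finite spectrum into $H\mb{F}_p$'s uses desuspensions as well as suspensions.
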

As a corollary, we get that Brown-Comenetz duality is in fact a duality of $\pi$-finite spectra:
\begin{corollary}
$\hat{D}: Sp^{fin} \rightarrow (Sp^{fin})^{op}$ is an equivalence.
\end{corollary}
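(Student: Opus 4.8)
The plan is to deduce the corollary formally from Theorem \ref{sec5-theorem2}. The general principle at work is that a contravariant functor $F$ on a category $C$ equipped with a natural isomorphism $F^2 \simeq \mathrm{id}$ exhibits $C$ and $C^{op}$ as equivalent, with quasi-inverse $F$ itself (read with the appropriate variance). So the entire task is to unwind this for $F = \hat D = \Maps(-,\icx)$, using the natural isomorphism supplied by \ref{sec5-eq}.

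Concretely, I would first observe that the contravariant functor $\hat D = \Maps(-,\icx)$ determines not only the functor $\hat D \colon Sp^{fin} \to (Sp^{fin})^{op}$ appearing in the statement, but also, by passing to opposite categories on both sides, a functor $\hat D^{op} \colon (Sp^{fin})^{op} \to Sp^{fin}$ built from the very same assignment $\mcal X \mapsto \hcal X$ on objects and $f \mapsto f^* = \Maps(f,\icx)$ on morphisms. Unwinding the definitions, both composites $\hat D^{op} \circ \hat D \colon Sp^{fin} \to Sp^{fin}$ and $\hat D \circ \hat D^{op} \colon (Sp^{fin})^{op} \to (Sp^{fin})^{op}$ are, on underlying data, the double-dual functor $\mcal X \mapsto \hat{\hcal X} = \Maps(\Maps(\mcal X,\icx),\icx)$, $f \mapsto f^{**}$: the first composite is $\hcal D^2$, and the second is its opposite $(\hcal D^2)^{op}$, which carries the very same natural isomorphisms since the double-dual map \ref{sec5-eq} is natural irrespective of variance.

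Next I would invoke Theorem \ref{sec5-theorem2}, which asserts exactly that \ref{sec5-eq} is a natural isomorphism $\hcal D^2 \simeq \mathrm{id}$ on $Sp^{fin}$ (and hence, taking opposites, $(\hcal D^2)^{op} \simeq \mathrm{id}$ on $(Sp^{fin})^{op}$). Substituting into the previous step yields natural isomorphisms $\hat D^{op} \circ \hat D \simeq \mathrm{id}_{Sp^{fin}}$ and $\hat D \circ \hat D^{op} \simeq \mathrm{id}_{(Sp^{fin})^{op}}$. Thus $\hat D$ and $\hat D^{op}$ are mutually inverse up to natural isomorphism, which is precisely the assertion that $\hat D \colon Sp^{fin} \to (Sp^{fin})^{op}$ is an equivalence of categories.

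I do not expect a genuine obstacle here: all of the mathematical content is already packaged in Theorem \ref{sec5-theorem2} (and ultimately in the Brown--Comenetz construction), and what remains is the formal bookkeeping of op-categories. The one point meriting a word of care is confirming that having $GF \simeq \mathrm{id}$ together with $FG \simeq \mathrm{id}$ genuinely suffices to call $F$ an equivalence --- it does, as this is equivalent to the standard definition --- and that the natural isomorphism of Theorem \ref{sec5-theorem2} transports verbatim to both composites, which it does because both composites are literally the double-dual functor. Alternatively, if one prefers the fully faithful plus essentially surjective formulation, essential surjectivity of $\hat D$ is immediate from $\mcal X \simeq \hat{\hcal X} = \hat D(\hcal X)$, and full faithfulness follows since $\hat D$ admits inverses on hom-sets on both sides.
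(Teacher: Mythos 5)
Your proposal is correct and matches the paper's approach exactly: the paper states this corollary without proof, treating it as an immediate formal consequence of Theorem \ref{sec5-theorem2}, and what you have written out is precisely the standard category-theoretic bookkeeping — that a contravariant functor $F$ with a natural isomorphism $F^2 \simeq \mathrm{id}$ is automatically an equivalence $C \to C^{op}$ with quasi-inverse $F^{op}$ — that the paper leaves to the reader. Your care about checking that both composites $\hat D^{op}\circ\hat D$ and $\hat D\circ\hat D^{op}$ identify with the double-dual functor and that the natural isomorphism \ref{sec5-eq} serves both is exactly the right thing to verify, and it goes through.
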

We also have the following corollary:
\begin{corollary}\label{sec5-corollary2}
Let $\mcal{X}$ be a $\pi$-finite spectra and $N \rightarrow M \rightarrow M/N$ a cofiber sequence of finite $CW$ complexes. The Pontryagin dual of the long exact sequence of cohomology group with $\mcal{X}$ coefficients: 
\begin{equation}
   \cdots \rightarrow     \mcal{X}^*(M,N) \rightarrow \mcal{X}^*(M) \rightarrow \mcal{X}^*(N) \rightarrow \cdots
\end{equation}
is canonically isomorphic to  the long exact sequence of homology group with $\hcal{X}$ coefficients:
\begin{align}
  \cdots \leftarrow \hcal{X}_*(M,N) \leftarrow \hcal{X}_*(M) \leftarrow \hcal{X}_*(N) \leftarrow \cdots.
\end{align}
\end{corollary}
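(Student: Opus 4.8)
The plan is to deduce the statement by combining the universal-coefficient-type description of $\mcal{X}$-(co)homology of a finite $CW$ complex with the Brown--Comenetz duality equivalence of Equation \ref{sec5-equation}, interpreted at the level of long exact sequences. First I would recall that for a finite $CW$ complex $M$ and a spectrum $\mcal{X}$, the cohomology groups are the homotopy groups of the mapping spectrum $\mcal{X}(M) = \Maps(\suss_+ M, \mcal{X})$, namely $\mcal{X}^{-i}(M) = \pi_i(\mcal{X}(M))$, while the $\hcal{X}$-homology of $M$ is computed by the spectrum $\hcal{X} \wedge \suss_+ M$. The key observation is that, because $M$ is a finite $CW$ complex, $\suss_+ M$ is a dualizable spectrum, so there is a natural equivalence
\begin{equation}
\hcal{X} \wedge \suss_+ M \;=\; \Maps(\mcal{X}, \icx) \wedge \suss_+ M \;\simeq\; \Maps\big(\Maps(\suss_+ M, \mcal{X}),\, \icx\big) \;=\; \widehat{\mcal{X}(M)}.
\end{equation}
Thus $\hcal{X}_i(M) = \pi_i(\widehat{\mcal{X}(M)}) \simeq \widehat{\pi_{-i}(\mcal{X}(M))} = \widehat{\mcal{X}^{-i}(M)}$, where the middle isomorphism is exactly Equation \ref{sec5-equation} applied to the $\pi$-finite spectrum $\mcal{X}(M)$ (which is $\pi$-finite by Proposition \ref{sec1-proposition2}, since $M$ admits a finite $CW$, hence compact-manifold-like, homotopy type — more precisely, one only needs that $\mcal{X}(M)$ has finitely many nonzero finite homotopy groups, which holds because $M$ is a finite complex).

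Next I would upgrade this from a single equivalence to a morphism of long exact sequences. The cofiber sequence $N \to M \to M/N$ induces, upon applying $\Maps(\suss_+ -, \mcal{X})$, a fiber sequence of spectra
\begin{equation}
\mcal{X}(M/N) \to \mcal{X}(M) \to \mcal{X}(N),
\end{equation}
whose long exact sequence on homotopy groups is the long exact sequence in $\mcal{X}$-cohomology of the pair. Applying the (exact, contravariant) functor $\Maps(-,\icx)$ to this fiber sequence yields another fiber sequence $\widehat{\mcal{X}(N)} \to \widehat{\mcal{X}(M)} \to \widehat{\mcal{X}(M/N)}$, and by the naturality clause in Equation \ref{sec5-equation} — the compatibility with connecting homomorphisms stated in that theorem — the long exact sequence of this fiber sequence is canonically identified with the Pontryagin dual of the $\mcal{X}$-cohomology long exact sequence. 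On the other hand, smashing the original cofiber sequence with $\hcal{X}$ gives the cofiber sequence computing $\hcal{X}$-homology, and the dualizability equivalence above is natural in the finite complex, so it identifies $\hcal{X} \wedge \suss_+(-)$ with $\widehat{\mcal{X}(-)}$ compatibly with the maps in the two sequences. Stacking these identifications gives the desired canonical isomorphism of long exact sequences.

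The main obstacle I anticipate is bookkeeping the naturality/compatibility carefully rather than any substantive difficulty: one must check that the equivalence $\hcal{X} \wedge \suss_+ M \simeq \widehat{\mcal{X}(M)}$ is natural in $M$ along cofiber maps and that it intertwines the boundary map in the Mayer--Vietoris/cofiber long exact sequence for $\hcal{X}$-homology with the dual of the boundary map in $\mcal{X}$-cohomology. This is where the "compatible with connecting homomorphisms" hypothesis in the Brown--Comenetz theorem does the real work; I would phrase the whole argument functorially (taking $\Maps(-,\icx)$ of a fiber sequence of $\pi$-finite spectra) so that exactness and the sign/boundary compatibility are automatic, and only invoke finiteness of $M$ to guarantee dualizability of $\suss_+ M$, which is what lets $\Maps(-,\icx)$ commute past the smash product. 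An alternative, if one prefers to avoid dualizability language, is to induct on cells of $M$ (and of $N$), using the five lemma and the already-established case of spheres $\widehat{\mcal{X}^{*}(S^k)} \simeq \hcal{X}_{*}(S^k)$, which is immediate from Equation \ref{sec5-equation}; the inductive step is again just naturality of the long exact sequences.
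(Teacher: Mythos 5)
Your argument is correct and, importantly, it supplies the one genuinely non-automatic ingredient that the paper leaves implicit (the corollary is stated in the paper without a proof). Equation \ref{sec5-equation} alone relates $\pi_*\Maps(\mcal{Y},\icx)$ to $\widehat{\pi_*\mcal{Y}}$; it does not directly say anything about $\hcal{X}$-\emph{homology} of a space, which is computed by $\hcal{X}\wedge\sus M$ rather than by $\Maps(\mcal{X}(M),\icx)$. Your observation that for a finite CW complex $M$ the suspension spectrum $\sus M$ is Spanier--Whitehead dualizable, hence
\begin{equation}
\hcal{X}\wedge\sus M \;=\;\Maps(\mcal{X},\icx)\wedge\sus M\;\simeq\;\Maps\bigl(\Maps(\sus M,\mcal{X}),\icx\bigr)\;=\;\widehat{\mcal{X}(M)},
\end{equation}
is exactly the bridge that turns the Brown--Comenetz statement about mapping spectra into a statement about generalized homology, and applying the (exact) functor $\Maps(-,\icx)$ to the fiber sequence $\mcal{X}(M/N)\to\mcal{X}(M)\to\mcal{X}(N)$ together with the ``compatible with connecting homomorphisms'' clause gives the isomorphism of long exact sequences as you describe. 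This is the right argument, and the alternative you sketch (cell-by-cell induction with the five lemma from spheres) would also work, though it is essentially dualizability unwound by hand.

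One small slip to correct: with the paper's convention $\pi_i(\mcal{X}(M))=\mcal{X}^{-i}(M)$ you have $\pi_{-i}(\mcal{X}(M))=\mcal{X}^{i}(M)$, so the chain of identifications should read
\begin{equation}
\hcal{X}_i(M)\;=\;\pi_i\bigl(\widehat{\mcal{X}(M)}\bigr)\;\simeq\;\widehat{\pi_{-i}(\mcal{X}(M))}\;=\;\widehat{\mcal{X}^{i}(M)},
\end{equation}
not $\widehat{\mcal{X}^{-i}(M)}$. This is exactly the degree-matching ($\hcal{X}_i$ dual to $\mcal{X}^i$) that the statement of the corollary asserts, so the typo does not affect the substance. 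A second, purely cosmetic point: you invoke Proposition \ref{sec1-proposition2} to see that $\mcal{X}(M)$ is $\pi$-finite, but that proposition is stated for compact manifolds; as you yourself note parenthetically, what you actually use is only that $M$ is a finite complex, so $\mcal{X}(M)$ has finitely many nonzero homotopy groups each of which is finite, which follows by the same extension-closure induction.
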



\section{Abelian Duality}\label{sec6}

Fix $d \geq 1$ the dimension of our theory. 
Let $\mcal{R}$ be a $\E{1}$-ring spectrum and $\mcal{X}$ a $\pi$-finite left $\mcal{R}$-module spectrum, its size (\S 2) is denoted as $|\mcal{X}|$. 
The Brown-Comenetz dual $\hcal{X}$ is a $\pi$-finite right $\mcal{R}$-module. 
In \S \ref{sec2}, we defined the $d$-dimensional finite homotopy TFTs $Z_{\mcal{X}}$ and $Z_{\dhcal{X}}$ associated to $\mcal{X}$ and $\dhcal{X}$. In addition, if $\lambda$ is a nonzero complex number, we have the $d$-dimensional Euler invertible TFT $E_\lambda$ (\S \ref{sec3}). 

In \S \ref{sec4} we defined the $\mcal{R}$-oriented bordism category $\BordR$. Any unoriented TFT can be viewed as a $\mcal{R}$-oriented TFT by precomposing with the forgetful map $\BordR \rightarrow Bord_d$.
We view $Z_{\mcal{X}},Z_{\dhcal{X}}, E_{|\mcal{X}|}$ as $\mcal{R}$-oriented theories. 
\begin{theorem}[Abelian duality]\label{sec6-theorem}
There is an equivalence of $\mcal{R}$-oriented TFTs 
\begin{equation}
    \mb{D}: Z_{\mcal{X}} \simeq Z_{\dhcal{X}} \otimes E_{|\mcal{X}|}.
\end{equation}
\end{theorem}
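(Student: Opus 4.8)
\section*{Proof proposal}

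The plan is to exhibit an explicit natural isomorphism $\mbD\colon Z_{\mcal{X}}\xRightarrow{\ \sim\ } Z_{\dhcal{X}}\otimes E_{|\mcal{X}|}$ of symmetric monoidal functors $\BordR\to\Vect$. The two structural inputs are Poincar\'e duality for $\mcal{R}$-module spectra (Theorem \ref{poincare}) and Brown--Comenetz duality of long exact sequences (Corollary \ref{sec5-corollary2}); the numerical normalizations are reconciled with the multiplicativity of sizes, i.e.\ Lemma \ref{sec1-lemma} together with $|\mcal{X}(Y)|=|\mcal{X}|^{\chi(Y)}$ (Proposition \ref{sec1-proposition2}). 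Since functoriality of $Z_{\mcal{X}}$ and of $Z_{\dhcal{X}}\otimes E_{|\mcal{X}|}$ is already established, it suffices to define each component $\mbD_N$, check it is an isomorphism, check the monoidal coherence, and check naturality in bordisms.

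On objects, for a closed $\mcal{R}$-oriented $(d-1)$-manifold $N$ we have $Z_{\dhcal{X}}(N)\otimes E_{|\mcal{X}|}(N)=\mathbb{C}[(\Sigma^{d-1}\hcal{X})^0(N)]=\mathbb{C}[\hcal{X}^{d-1}(N)]$. Capping with the orientation class $[N]\in\mcal{R}_{d-1}(N)$ gives a Poincar\'e-duality isomorphism $\hcal{X}^{d-1}(N)\isorightarrow\hcal{X}_0(N)$, and Brown--Comenetz duality (the defining property of $\ICx$, equivalently Corollary \ref{sec5-corollary2} for the pair $(N,\varnothing)$) identifies $\hcal{X}_0(N)\cong\widehat{\mcal{X}^0(N)}$. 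This produces a perfect pairing $\langle-,-\rangle_N\colon\mcal{X}^0(N)\times\hcal{X}^{d-1}(N)\to\cx$, and I set $\mbD_N(a)\coloneqq\sum_{\beta}\langle a,\beta\rangle_N\,\beta$; this is an isomorphism exactly as in Corollary \ref{sec5-corollary}. Both Poincar\'e duality and Brown--Comenetz duality are compatible with disjoint unions, so the pairings are multiplicative and $\mbD$ respects the symmetric monoidal structure, the symmetry constraint being automatic by naturality of $\langle-,-\rangle$.

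The main work is naturality in bordisms. Given $(M,[M])\colon(N,[N])\to(N',[N'])$ with inclusions $p\colon N\hookrightarrow M$, $q\colon N'\hookrightarrow M$, expanding the two legs of the square on a basis vector $a\in\mcal{X}^0(N)$ and comparing coefficients of $\beta'\in\hcal{X}^{d-1}(N')$ reduces the claim to the identity
\begin{equation*}
\frac{|\uptau_{\geq 1}\mcal{X}(M)|}{|\uptau_{\geq 1}\mcal{X}(N')|}\sum_{\substack{b\in\mcal{X}^0(M)\\ p^*b=a}}\langle q^*b,\beta'\rangle_{N'}
= |\mcal{X}|^{\chi(M)-\chi(N)}\,\frac{|\uptau_{\geq 1}(\Sigma^{d-1}\hcal{X})(M)|}{|\uptau_{\geq 1}(\Sigma^{d-1}\hcal{X})(N')|}\sum_{\substack{\gamma\in\hcal{X}^{d-1}(M)\\ q^*\gamma=\beta'}}\langle a,p^*\gamma\rangle_{N}.
\end{equation*}
The geometric heart is a boundary-compatibility of pairings: applying Poincar\'e--Lefschetz duality to the pairs $(M,N)$ and $(M,N')$ together with the Brown--Comenetz dual of the resulting long exact sequences (Corollary \ref{sec5-corollary2}) identifies the restriction maps $p^*,q^*$ on $\hcal{X}$-cohomology with the Pontryagin duals of the restriction/connecting maps on $\mcal{X}$-cohomology, so that a ``global'' pairing on $M$ between $\mcal{X}^0(M)$ and $\hcal{X}^{d-1}(M)$ (relative to $\partial M$ on the appropriate factor, with the sign in Definition \ref{sec4-oriented-tft} exactly making the two boundary pairings $\langle-,-\rangle_N$ and $\langle-,-\rangle_{N'}$ fit together) turns the $\gamma$-sum on the right into the $b$-sum on the left. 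The remaining scalar identity---matching the $\uptau_{\geq1}$-ratios and the Euler factor after accounting for the sizes of the $b$- and $\gamma$-torsors---follows from $|\mcal{X}(Y)|=|\mcal{X}|^{\chi(Y)}$, the Poincar\'e-duality equalities $|\hcal{X}^j(Y)|=|\mcal{X}^{\dim Y-j}(Y)|$, and Lemma \ref{sec1-lemma} applied to the Poincar\'e--Lefschetz sequences; it is the relative version of the closed-manifold computation showing $Z_{\mcal{X}}(M)=Z_{\dhcal{X}}(M)\cdot E_{|\mcal{X}|}(M)$, where the Euler factor is precisely what converts $\prod_{k\geq 1}|\mcal{X}^k(M)|^{(-1)^{k+1}}$ into $\prod_{j\leq 0}|\mcal{X}^j(M)|^{(-1)^j}$.

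I expect the bordism-naturality step to be the main obstacle, and within it the identification of the boundary restrictions of the global pairing on $M$ with the pairings $\langle-,-\rangle_N,\langle-,-\rangle_{N'}$---that is, the interplay of Poincar\'e--Lefschetz duality for $(M,\partial M)$ with Brown--Comenetz duality---since the scalar bookkeeping, though lengthy, is routine given Lemma \ref{sec1-lemma} and Proposition \ref{sec1-proposition2}. A minor technical point to handle along the way is that $\hcal{X}$ is a \emph{right} $\mcal{R}$-module, so Poincar\'e duality must be invoked in its form for right modules (equivalently, over $\mcal{R}^{\mathrm{op}}$).
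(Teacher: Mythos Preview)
Your proposal is correct and follows essentially the same approach as the paper: define $\mbD(N)$ via the Poincar\'e--Brown-Comenetz pairing, reduce bordism-naturality to a sum identity, establish the boundary compatibility of the pairings (the paper's Lemmas \ref{lemma_4} and \ref{lemma_5}), and verify the residual scalar via size bookkeeping (Lemma \ref{lemma_2}). The only differences are cosmetic: the paper inserts an extra normalization factor $|\tau_{\geq 1}\mcal{X}(N)|$ in the definition of $\mbD(N)$, and it isolates explicitly the vanishing case $a\notin\mathrm{im}\,p^*$ (Lemma \ref{lemma_3}) via a character-sum argument, a step you have folded into your Pontryagin-duality identification of the restriction maps.
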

\begin{remark}
In general, $Z_{\mcal{X}}$ and $Z_{\dhcal{X}} \otimes E_{|\mcal{X}|}$ are not equivalent as unoriented theories, even though both sides can be extended to unoriented theories. This is because we need to use \Poincare duality in an essential way. For example, let $d=2$, let $\mcal{X} = \Sigma H\mb{F}_3$, $\Sigma^{d-1} \hcal{X} \simeq H\mb{F}_3$. Consider the theories on the Klein bottle $K$. Note $K$ is not $H\mb{F}_3$-orientable. If Theorem \ref{sec6-theorem} holds, then 
\begin{equation}
    Z_{\Sigma H\mb{F}_3}(K) = Z_{H\mb{F}_3}(K) \cdot E_{|\mb{F}_3|^{-1}}(K).
\end{equation}
On one hand,
\begin{equation}
Z_{\Sigma H\mb{F}_3}(K) = |\tau_{\geq -1}  H\mb{F}_3(k)| = \frac{|H^1(K;\mb{F}_3)|}{|H^0(K;\mb{F}_3)|} = \frac{|\mb{F}_3|}{|\mb{F}_3|} = 1.    
\end{equation}
On the other hand,
\begin{equation}
    Z_{H\mb{F}_3}(K) = |H^0(K;\mb{F}_3)| = 3,
\end{equation}
and 
\begin{equation}
    E_{|\mb{F}_3|^{-1}}(K) = 3^{-\chi(K)} = 1.
\end{equation}
We see that they are different.
\end{remark}


Here's some consequences of Theorem \ref{sec6-theorem}:

\begin{corollary}\label{sec6-corollary}
When $d$ is odd, we have an equivalence of $\mcal{R}$-oriented TFTs
\begin{equation}
 Z_{\mcal{X}} \simeq Z_{\dhcal{X}}.
\end{equation}
\end{corollary}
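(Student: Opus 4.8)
The plan is to deduce this immediately from Theorem \ref{sec6-theorem} together with Proposition \ref{sec3-prop}. By Theorem \ref{sec6-theorem} there is an equivalence of $\mcal{R}$-oriented TFTs $\mbD: Z_{\mcal{X}} \simeq Z_{\dhcal{X}} \otimes E_{|\mcal{X}|}$. The size $|\mcal{X}|$ is some nonzero complex number $\lambda \in \cx$, so $E_{|\mcal{X}|} = E_\lambda$ is the $d$-dimensional Euler TFT. Since $d$ is odd, Proposition \ref{sec3-prop} gives a natural isomorphism $E_\lambda \simeq Z_{triv}$, where $Z_{triv}$ is the trivial (monoidal unit) TFT. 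Tensoring this equivalence on the right of $Z_{\dhcal{X}}$ yields $Z_{\dhcal{X}} \otimes E_{|\mcal{X}|} \simeq Z_{\dhcal{X}} \otimes Z_{triv} \simeq Z_{\dhcal{X}}$, the last step being the unit isomorphism for the symmetric monoidal structure on the functor category $\mathrm{Fun}(\BordR, \Vect)$.

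Concatenating these, we obtain $Z_{\mcal{X}} \simeq Z_{\dhcal{X}}$ as $\mcal{R}$-oriented TFTs, which is the claim. The only mild point to check is that everything takes place in the $\mcal{R}$-oriented setting: the Euler TFT and $Z_{triv}$ are a priori unoriented TFTs, but we view all of them as $\mcal{R}$-oriented via pullback along the forgetful functor $\BordR \to Bord_d$, and pullback is symmetric monoidal, so the unit isomorphism and the isomorphism $E_\lambda \simeq Z_{triv}$ pull back accordingly.

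There is essentially no obstacle here; the work is entirely absorbed into the two results being cited. If one wished to make the $d=2n+1$ hypothesis visible in the argument at the level of manifolds rather than invoking Proposition \ref{sec3-prop} as a black box, the point is exactly that the Euler class $\chi(M) - \chi(N)$ of any bordism $M\colon N \to N'$ between closed $(d-1)$-manifolds is determined, via $\mb{F}_2$-\Poincare duality, by $\tfrac12\chi(\dd M)$, so that $E_\lambda(M) = \lambda^{\frac12(\chi(N') - \chi(N))}$ is a coboundary and hence trivializable; but this is precisely the content of Proposition \ref{sec3-prop}, so citing it is cleanest.
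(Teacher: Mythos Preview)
Your proof is correct and follows essentially the same approach as the paper: invoke Theorem \ref{sec6-theorem} and then Proposition \ref{sec3-prop} to trivialize the Euler factor when $d$ is odd. Your write-up simply spells out more of the monoidal bookkeeping (pullback along $\BordR \to Bord_d$, tensoring with the unit) that the paper leaves implicit.
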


\begin{proof}
By Proposition \ref{sec3-prop}, when $d$ is odd, $E_\lambda$ is isomorphic to the trivial theory. Therefore $Z_{\mcal{X}} \simeq Z_{\dhcal{X}}$.
\end{proof}

Recall that $H\mb{Z}$ orientation is the same as the classical notion of orientation on manifolds. 
Thus if $\mcal{X}$ is a $\pi$-finite $H\mb{Z}$-module, then we have an equivalence of oriented theories. 
Apply Theorem \ref{sec6-theorem} to $\pi$-finite $H\mb{Z}$-module $\Sigma^p HA$, we get:

\begin{corollary} \label{K(A,n)}
Let $A$ be a finite abelian group and $\hat{A}$ the Pontryagin dual. In $d$ dimension, we have an equivalence of $d$ dimensional oriented TFTs:
\begin{equation}
 Z_{K(A,p)} \simeq Z_{K(\hat{A}, d-1-p)} \otimes E_{|A|^{(-1)^p}}.
\end{equation}
\end{corollary}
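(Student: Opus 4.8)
The plan is to deduce Corollary \ref{K(A,n)} as the special case $\mcal{R} = H\mb{Z}$, $\mcal{X} = \Sigma^p HA$ of Theorem \ref{sec6-theorem}, and then to indicate how Theorem \ref{sec6-theorem} itself is proved, which is where the real content lies. For the reduction: since $A$ is an abelian group it is a $\mb{Z}$-module, so $HA$ — hence $\Sigma^p HA$ — is a $\pi$-finite left $H\mb{Z}$-module spectrum; by the $H\mb{Z}$ example of \S\ref{sec4} an $H\mb{Z}$-orientation is an ordinary orientation, so $\BordR$ is the oriented bordism category and Theorem \ref{sec6-theorem} outputs an equivalence of oriented TFTs. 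Brown--Comenetz duality extends Pontryagin duality, so $\hcal{X} = \widehat{\Sigma^p HA} \simeq \Sigma^{-p} H\hat{A}$ and $\dhcal{X} = \Sigma^{d-1}\hcal{X} \simeq \Sigma^{d-1-p} H\hat{A}$; by Remark \ref{sec2-remark}, $Z_{\Sigma^p HA} \simeq Z_{K(A,p)}$ and $Z_{\dhcal{X}} \simeq Z_{K(\hat{A}, d-1-p)}$ (when $p$ or $d-1-p$ is negative the corresponding $K(-,-)$ is interpreted as a point and that side's TFT is trivial; the equivalence still holds, as one sees by tracking the Euler factor, which in odd dimensions is itself trivial by Proposition \ref{sec3-prop}). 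Finally $|\mcal{X}| = |\Sigma^p HA| = |A|^{(-1)^p}$, so $E_{|\mcal{X}|} = E_{|A|^{(-1)^p}}$, and Theorem \ref{sec6-theorem} gives the claim.

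To prove Theorem \ref{sec6-theorem} I would build the monoidal natural isomorphism $\mbD$ objectwise. For a closed $\mcal{R}$-oriented $(d-1)$-manifold $N$ one has $Z_{\mcal{X}}(N) = \mb{C}[\mcal{X}^0(N)]$ and $(Z_{\dhcal{X}}\otimes E_{|\mcal{X}|})(N) = \mb{C}[(\dhcal{X})^0(N)] = \mb{C}[\hcal{X}^{d-1}(N)]$, using $E_{|\mcal{X}|}(N) = \mb{C}$ and $(\dhcal{X})^0(N) = \hcal{X}^{d-1}(N)$. I would take $\mbD(N)$ to be the composite of the linearized Poincaré-duality isomorphism $\mb{C}[\mcal{X}^0(N)] \xrightarrow{\cap [N]} \mb{C}[\mcal{X}_{d-1}(N)]$ (Theorem \ref{poincare}) with the Fourier/Pontryagin isomorphism $\mb{C}[\mcal{X}_{d-1}(N)] \xrightarrow{\sim} \mb{C}[\widehat{\mcal{X}_{d-1}(N)}] = \mb{C}[\hcal{X}^{d-1}(N)]$ of Corollary \ref{sec5-corollary}, where the identification $\hcal{X}^{d-1}(N) \simeq \widehat{\mcal{X}_{d-1}(N)}$ is Brown--Comenetz duality applied to the $\mcal{X}$-homology of $N$. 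Both factors are isomorphisms, and both are multiplicative under disjoint unions, so $\mbD$ is symmetric monoidal; the work is naturality in bordisms.

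For naturality, given $M : N \to N'$ with inclusions $p, q$, write $Z_{\mcal{X}}(M) = c_M\,(q^\ast)_!\,(p^\ast)^!$ as pull--push along the restriction span $\mcal{X}^0(N) \xleftarrow{p^\ast} \mcal{X}^0(M) \xrightarrow{q^\ast} \mcal{X}^0(N')$, and similarly for $\hcal{X}^{d-1}$. Transporting the first span through $\mbD$: the naturality ladder of Poincaré--Lefschetz duality (Theorem \ref{poincare}) identifies $p^\ast, q^\ast$ with the boundary maps $\mcal{X}_d(M,\partial M) \to \mcal{X}_{d-1}(N), \mcal{X}_{d-1}(N')$ of the homology sequence of $(M,\partial M)$, and Brown--Comenetz duality (Corollary \ref{sec5-corollary2}) Pontryagin-dualizes that sequence into the $\hcal{X}$-cohomology sequence $\hcal{X}^{d-1}(M) \to \hcal{X}^{d-1}(\partial M) \xrightarrow{\delta} \hcal{X}^d(M,\partial M)$. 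Exactness at $\hcal{X}^{d-1}(\partial M)$ matches the supports of the two pull--push operators (the relative sign being supplied by the $\partial[M]|_{N'} = -[N']$ convention of $\BordR$), and $|\ker(\hcal{X}^{d-1}(M) \to \hcal{X}^{d-1}(\partial M))|$ controls their common multiplicity; the elementary input here is that the Fourier transform converts $f^!$ into $\tfrac{|\mathrm{dom}\,f|}{|\mathrm{cod}\,f|}(f^\vee)_!$ and $f_!$ into $(f^\vee)^!$, which tracks how the normalizations move. What is then left is a single scalar identity among $c_M$, $c'_M = |\uptau_{\geq 1}\dhcal{X}(M)|/|\uptau_{\geq 1}\dhcal{X}(N')|$, that kernel order, $|\mcal{X}_d(M,\partial M)|$, and $|\mcal{X}|^{\chi(M)-\chi(N)}$; I would prove it by feeding the homology and cohomology long exact sequences of $(M,\partial M)$, for both $\mcal{X}$ and $\hcal{X}$, into Lemma \ref{sec1-lemma}, using Poincaré duality to pass between the two, Brown--Comenetz duality (which inverts sizes in odd relative degree) to trade $\mcal{X}$ for $\hcal{X}$, and Proposition \ref{sec1-proposition2} ($|\mcal{X}(M)| = |\mcal{X}|^{\chi(M)}$) to recognize the residue as the Euler factor. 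This size bookkeeping, together with pinning down the signs coming from the $-[N']$ convention, is the step I expect to be the main obstacle; everything else is formal.
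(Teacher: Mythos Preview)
Your reduction of the corollary to Theorem \ref{sec6-theorem} with $\mcal{R} = H\mb{Z}$ and $\mcal{X} = \Sigma^p HA$ is exactly the paper's argument: the paper says only ``Apply Theorem \ref{sec6-theorem} to $\pi$-finite $H\mb{Z}$-module $\Sigma^p HA$'' and lets the identifications $\widehat{\Sigma^p HA} \simeq \Sigma^{-p} H\hat{A}$, $|\Sigma^p HA| = |A|^{(-1)^p}$, and $H\mb{Z}$-orientation $=$ orientation do the work, all of which you spell out correctly.

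Your sketch of Theorem \ref{sec6-theorem} is also essentially the paper's: the paper's $\mbD(N)$ is your Poincar\'e-then-Fourier map (it writes the pairing as $ev_N(a,\int_{[N]}\alpha)$ with Poincar\'e duality applied on the $\hcal{X}$ side rather than the $\mcal{X}$ side, but these are the same perfect pairing), and the paper likewise splits naturality into a ``same-up-to-scalar'' step (its Lemma \ref{lemma_1}) and a size-bookkeeping step (its Lemma \ref{lemma_2}) that feeds the relevant long exact sequences into Lemma \ref{sec1-lemma} and Proposition \ref{sec1-proposition2}. The one concrete difference is that the paper builds in a normalization factor $|\tau_{\geq 1}\mcal{X}(N)|$ into $\mbD(N)$ from the start; without it your residual scalar will not be exactly $|\mcal{X}|^{\chi(M)-\chi(N)}$, so you would discover the need for this normalization during your final bookkeeping rather than having it in hand.
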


The rest of the paper is about proving Theorem \ref{sec6-theorem}:
\begin{proof}[Proof of Theorem \ref{sec6-theorem}]
From now on, all manifolds, and bordisms are $\mcal{R}$-oriented. We will suppress the $\mcal{R}$-orientation notations.

To give an equivalence, we need to define an isomorphism of states 
\begin{equation}
    Z_\mcal{X} (N) \rightarrow Z_{\dhcal{X}}(N) \otimes E_{|\mcal{X}|}(N),
\end{equation}
for closed $d-1$ manifolds $N$, 
and check that it is compatible with bordisms. As $E_{|\mcal{X}|}(N) = \mb{C}$, it is sufficient to give maps 
\begin{equation}
    \mbD(N): Z_\mcal{X} (N) \rightarrow Z_{\dhcal{X}}(N).
\end{equation}
This is done as follows:
\begin{construction} \label{sec6-construction}
By Pontryagin duality, there is a pairing 
\begin{equation}
ev_N(-,-): \mcal{X}^*(N) \times \hcal{X}_*(N) \rightarrow \mb{C}^\times.
\end{equation}
Note that this exist for any topological space $N$.
As $N$ is a (compact) manifold, the homology and cohomology groups are finite. This pairing is exhibits $\mcal{X}^*(N)$ and $\hcal{X}_*(N)$ as Pontryagin dual of each other.
Compose this with the \Poincare duality isomorphism \ref{poincare}: 
\begin{equation}
\int_{[N]}: \hcal{X}^{d-1-*}(N) \isorightarrow  \hcal{X}_{*}(N),
\end{equation}
we get a pairing
\begin{align}
\mcal{X}^*(N) \times \hcal{X}^{d-1-*}(N) &\rightarrow \mb{C}^\times\\
(a,\alpha) &\mapsto ev_N(a, \int_{[N]}\alpha) \label{6.1.0}
\end{align}
When $* = 0$, we denote this pairing as 
\begin{equation} \label{pairing}
\langle -,- \rangle_N: \mcal{X}^0(N) \times \hcal{X}^{d-1}(N) \rightarrow \mb{C}^\times.
\end{equation} 
It exhibits $\mcal{X}^0(N)$ and $\hcal{X}^{d-1}(N)$ as the Pontryagin dual of each other.
Note that this depends on the orientation class of $N$, reversing the orientation inverts this pairing.

Recall that 
\begin{equation}
Z_\mcal{X}(N) = \mb{C}[\mcal{X}^0(N)]
\end{equation} 
and
\begin{equation}
Z_{\dhcal{X}}(N) = \mb{C}[\hcal{X}^{d-1}(N)].
\end{equation}
We will denote elements of $\mcal{X}^0(N)$ as $a$, and $\hcal{X}^{d-1}(N)$ as $\alpha$, and view them as basis vectors for $Z_\mcal{X}(N)$ and $Z_{\dhcal{X}}(N)$ respectively. Now we can define the isomorphism on states:  
\begin{equation}
\begin{aligned} \label{iso_on_state}
\mb{D}(N): \mb{C}[\mcal{X}^0(N)] &\rightarrow \mb{C}[\hcal{X}^{d-1}(N)] \\
a &\mapsto |\tau_{\geq 1} \mcal{X}(N)| \ \sum_{\alpha} \langle a, \alpha \rangle_N\  \alpha.
\end{aligned}
\end{equation}
This is an isomorphism of vector spaces by Corollary \ref{sec5-corollary}.
\end{construction}

It remains to show that this intertwines with bordisms.
Given $M: N \rightarrow N'$ in $Bord_d$, with the inclusion maps $p: N \hookrightarrow M$ and $q: N' \hookrightarrow M$. We have to show that the following diagram commute:

\begin{equation} \label{5.3.comm}
    \begin{tikzcd}[column sep = 10em]
Z_\mcal{X}(N) \arrow[r, "Z_\mcal{X}(M)"] \arrow[d, "\mb{D}(N)"]
& Z_\mcal{X}(N') \arrow[d, "\mb{D}(N')" ] \\
Z_{\dhcal{X}}(N) \arrow[r, "Z_{\dhcal{X}}(M) * |\mcal{X}|^{\chi(M) - \chi(N)}"' ]
& Z_{\dhcal{X}}(N')
\end{tikzcd}
\end{equation}

Note that we have canonically identified 
\begin{equation}
Z_{\dhcal{X}}(N) \otimes E_{|\mcal{X}|}(N) \simeq Z_{\dhcal{X}}(N).
\end{equation}
The factor 
\begin{equation}
|\mcal{X}|^{\chi(M) - \chi(N)}
\end{equation}
in the bottom arrow comes from
\begin{equation}
E_{|\mcal{X}|}(M): E_{|\mcal{X}|}(N)=\mb{C} \rightarrow \mb{C} = E_{|\mcal{X}|}(N').
\end{equation}

We will prove that diagram \ref{5.3.comm} commutes in two lemmas:
\begin{lemma}\label{lemma_1}
$\mb{D}(N') \circ Z_{\mcal{X}}(M)$ and $ Z_{\dhcal{X}}(M) \circ \mbD(N)$ differ by a constant $\lambda(M)$.
\end{lemma}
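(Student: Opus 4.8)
The plan is to compare the two composites by evaluating both on a basis vector $a \in \mcal{X}^0(N)$ and expanding everything in the basis $\{\alpha'\}$ of $Z_{\dhcal{X}}(N') = \mb{C}[\hcal{X}^{d-1}(N')]$. On the one hand, $\mbD(N') \circ Z_\mcal{X}(M)$ sends $a$ to a sum over $b \in \mcal{X}^0(M)$ with $p^*b = a$ of the terms $|\uptau_{\geq 1}\mcal{X}(M)|/|\uptau_{\geq 1}\mcal{X}(N')| \cdot |\uptau_{\geq 1}\mcal{X}(N')| \cdot \langle q^*b, \alpha'\rangle_{N'}\, \alpha'$, summed over $\alpha'$; the factor $|\uptau_{\geq 1}\mcal{X}(N')|$ cancels, leaving $|\uptau_{\geq 1}\mcal{X}(M)| \sum_{\alpha'} \big(\sum_{b \rightarrow a} \langle q^*b, \alpha'\rangle_{N'}\big) \alpha'$. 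On the other hand, $Z_{\dhcal{X}}(M) \circ \mbD(N)$ sends $a$ to $|\uptau_{\geq 1}\mcal{X}(N)|$ times a sum over $\alpha \in \hcal{X}^{d-1}(N)$ of $\langle a, \alpha\rangle_N$ times $Z_{\dhcal{X}}(M)(\alpha)$, and applying Construction \ref{sec2-construction} to the spectrum $\dhcal{X}$ expands $Z_{\dhcal{X}}(M)(\alpha)$ as $|\uptau_{\geq 1}\dhcal{X}(M)|/|\uptau_{\geq 1}\dhcal{X}(N')|$ times a sum over $\beta \in \hcal{X}^{d-1}(M)$ with $p^*\beta = \alpha$ of $q^*\beta$.

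So what must be shown is that, for each fixed $\alpha' \in \hcal{X}^{d-1}(N')$, the coefficient of $\alpha'$ on the two sides agree up to a global constant $\lambda(M)$ independent of $a$ and $\alpha'$. Concretely, the claim reduces to an identity of the shape
\begin{equation}
\sum_{b:\, p^*b = a} \langle q^*b, \alpha' \rangle_{N'} \;=\; \mu(M) \sum_{\beta:\, q^*\beta = \alpha'} \overline{\langle a, p^*\beta\rangle_N}^{\pm}
\end{equation}
after absorbing the ratio of $|\uptau_{\geq 1}|$-factors into $\lambda(M)$. The key input is that the \Poincare--Lefschetz duality pairing of Construction \ref{sec6-construction} is \emph{adjoint} with respect to the restriction/corestriction maps along $p$ and $q$: the pairing $\langle -,-\rangle$ between $\mcal{X}^0$ and $\hcal{X}^{d-1}$ is built from the Brown--Comenetz evaluation (Theorem in \S\ref{sec5}) composed with the duality isomorphism $\int_{[\,\cdot\,]}$, and Theorem \ref{poincare} says this isomorphism fits into a morphism of the two long exact sequences attached to $\dd M = N \sqcup N'$. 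Dualizing that diagram via Corollary \ref{sec5-corollary2} turns the restriction map $p^* \oplus q^*: \mcal{X}^0(M) \to \mcal{X}^0(N) \oplus \mcal{X}^0(N')$ into the adjoint of the corresponding map on $\hcal{X}$, which is exactly what is needed to rewrite $\sum_{b \rightarrow a}$ as a Fourier-type transform of $\sum_{\beta \rightarrow \alpha'}$. Summing over $\alpha'$ (or equivalently over $a$) and using nondegeneracy of the pairings then identifies the constant $\lambda(M)$ and shows it does not depend on the chosen basis vectors.

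The main obstacle I expect is bookkeeping the precise power of $|\mcal{X}|$ and the various $|\uptau_{\geq 1}\mcal{X}(-)|$, $|\uptau_{\geq 1}\dhcal{X}(-)|$ factors: one has to combine Proposition \ref{sec1-proposition2} (which gives $|\mcal{X}(M)| = |\mcal{X}|^{\chi(M)}$), Example \ref{sec1-example} (splitting off $\uptau_{\geq 1}$), and the relation $|\hcal{X}(M)| = |\mcal{X}(M)|^{-1}$ coming from Corollary \ref{sec5-corollary2}, all while tracking that degree-$0$ and degree-$(d-1)$ pieces are treated asymmetrically by the duality. This is precisely why the present lemma only claims agreement \emph{up to} the constant $\lambda(M)$: pinning down $\lambda(M) = |\mcal{X}|^{\chi(M)-\chi(N)}$ is deferred to the next lemma, where one evaluates on a convenient generator (or uses multiplicativity in $M$ and the closed-manifold case). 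The conceptual heart — that restriction along an inclusion and corestriction along the \Poincare-dual inclusion are mutually adjoint under the Brown--Comenetz pairing — is the one genuinely non-formal step, and it follows from the compatibility of $\int_{[M]}$ with boundary maps asserted in Theorem \ref{poincare} together with the functoriality of Brown--Comenetz duality in Corollary \ref{sec5-corollary2}.
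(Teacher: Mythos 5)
Your proposal correctly identifies the overall skeleton: expand both composites on basis vectors $a \in \mcal{X}^0(N)$, collect the coefficient of each $\alpha' \in \hcal{X}^{d-1}(N')$, and reduce to showing that $\sum_{b\to a}\langle q^*b,\alpha'\rangle_{N'}$ and $\sum_{\beta\to\alpha'}\langle a,\hat p^*\beta\rangle_N$ differ by a constant independent of $a,\alpha'$. You also correctly locate the conceptual heart — the pairing compatibility $\langle p^*b,\hat p^*\beta\rangle_N = \langle q^*b,\hat q^*\beta\rangle_{N'}$ coming from \Poincare{} duality and Brown--Comenetz duality, which is exactly Lemma~\ref{lemma_4} of the paper. (Incidentally, the conjugate and $\pm$ decorations you hedge with are not needed; the identity holds on the nose once the orientation convention $\dd[M]|_{N'} = -[N']$ is used.)

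However, there are two genuine gaps. First, you do not address the \emph{vanishing} case. To conclude the two composites differ by a well-defined multiplicative constant, you must show that the $\alpha'$-coefficient on one side vanishes (e.g.\ when $a\notin\operatorname{im}(p^*)$, so the sum over $b\to a$ is empty) if and only if it vanishes on the other. This is not a consequence of ``nondegeneracy of the pairings'' — it is the content of Lemma~\ref{lemma_3}, whose proof uses the long exact sequence of $(M,N)$, the projection formula, and the fact that a nontrivial character of a finite abelian group sums to zero. Second, your adjunction step as stated does not close: Theorem~\ref{poincare} and Corollary~\ref{sec5-corollary2} by themselves are not enough. One additionally needs Lemma~\ref{lemma_5}, namely that the two composites $\hcal{X}_1(M,\dd M)\to\hcal{X}_0(N)\to\hcal{X}_0(M)$ and $\hcal{X}_1(M,\dd M)\to\hcal{X}_0(N')\to\hcal{X}_0(M)$ are negatives of one another, which follows from $\dd M = N\sqcup N'$ and the vanishing of $\hcal{X}_1(M,\dd M)\to\hcal{X}_0(\dd M)\to\hcal{X}_0(M)$ in the long exact sequence of the pair. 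Once both sublemmas are in place, the constant is identified by a direct count of preimages ($|kp|/|kq|$), not by a Fourier-inversion argument.
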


\begin{lemma}\label{lemma_2}
$\lambda(M) = |\mcal{X}|^{\chi(M) - \chi(N)}$.
\end{lemma}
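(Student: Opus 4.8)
The plan is to pin down the constant $\lambda(M)$ by evaluating both composites in diagram \eqref{5.3.comm} on a single, cleverly chosen basis vector, and reading off the ratio. Since Lemma \ref{lemma_1} already tells us that $\mb{D}(N') \circ Z_{\mcal{X}}(M)$ and $Z_{\dhcal{X}}(M) \circ \mb{D}(N)$ differ by a global scalar, it suffices to compute the coefficient of one fixed basis element $\beta \in \hcal{X}^{d-1}(N')$ in the image of one fixed $a \in \mcal{X}^0(N)$ under each composite, and divide. In fact it is cleanest to sum the matrix coefficients: apply $\sum_{a \in \mcal{X}^0(N)}$ to the image, pair against $\sum_{\beta}\beta$, and compare total masses — because the orthogonality of characters (Corollary \ref{sec5-corollary}) collapses the character sums $\sum_{\alpha}\langle a,\alpha\rangle_N$ into delta functions, turning both sides into products of orders of (co)homology groups with no surviving characters.

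First I would expand $Z_{\dhcal{X}}(M) \circ \mb{D}(N)$ applied to $a$: by \eqref{iso_on_state} this is $|\tau_{\geq 1}\mcal{X}(N)| \sum_\alpha \langle a,\alpha\rangle_N \, Z_{\dhcal{X}}(M)(\alpha)$, and by Construction \ref{sec2-construction} applied to $\dhcal{X}$, $Z_{\dhcal{X}}(M)(\alpha) = \frac{|\tau_{\geq 1}\dhcal{X}(M)|}{|\tau_{\geq 1}\dhcal{X}(N')|}\sum_{\gamma \to \alpha} q^*\gamma$, where $\gamma$ ranges over $\hcal{X}^{d-1}(M)$ with $p^*\gamma = \alpha$. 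Next I would expand $\mb{D}(N') \circ Z_{\mcal{X}}(M)$ applied to $a$: by Construction \ref{sec2-construction} this is $\frac{|\tau_{\geq 1}\mcal{X}(M)|}{|\tau_{\geq 1}\mcal{X}(N')|}\sum_{b \to a} \mb{D}(N')(q^* b)$ where $b$ runs over $\mcal{X}^0(M)$ with $p^* b = a$, and then $\mb{D}(N')(q^*b) = |\tau_{\geq 1}\mcal{X}(N')| \sum_\beta \langle q^* b,\beta\rangle_{N'}\,\beta$. The key input relating the two is the compatibility of the Poincaré-duality pairing \eqref{pairing} with restriction along the bordism: the pairing $\langle -,-\rangle$ on $N$, $N'$ and the analogous pairing on $M$ (using the relative fundamental class $[M]$) fit into the commuting square of long exact sequences of Theorem \ref{poincare}, so that $\langle p^* b, p^*\gamma \rangle$-type terms on the boundary are governed by the pairing $\langle b, \gamma\rangle_{M,\partial M}$ on the interior — concretely, $\langle p^*b,\alpha\rangle_N$ and $\langle q^*b,\beta\rangle_{N'}$ differ, after summing against $\gamma \to \alpha,\beta$, by the value of the Poincaré pairing on $\mcal{X}^0(M)\times\hcal{X}^{d-1}(M,\partial M)$, which by exactness is a $\{0,1\}$-valued incidence between the Mayer–Vietoris-type sequences for $\mcal{X}$-cohomology of $(M,N,N')$ and its Brown–Comenetz dual (Corollary \ref{sec5-corollary2}).

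After substituting and interchanging sums, all character sums $\sum_\alpha \langle\cdot,\alpha\rangle$, $\sum_\beta\langle\cdot,\beta\rangle$ evaluate via Corollary \ref{sec5-corollary} to orders of the relevant kernels/cokernels, and the surviving combinatorial factor on each side is exactly the order of a term (or alternating product of terms) in the long exact sequence of the pair $(M,\partial M)$ with $\mcal{X}$-coefficients. The ratio $\lambda(M)$ therefore reduces to a ratio of sizes of mapping spectra, namely
\begin{equation}
\lambda(M) = \frac{|\tau_{\geq 1}\dhcal{X}(M)|\,|\tau_{\geq 1}\mcal{X}(M)|\,|\mcal{X}^0(M,\partial M)\text{-type factors}|}{|\tau_{\geq 1}\dhcal{X}(N')|\,|\tau_{\geq 1}\mcal{X}(N')|\cdots},
\end{equation}
which, after using Lemma \ref{sec1-lemma} on the long exact sequences of the pairs (to cancel boundary contributions) and the duality $\dhcal{X}(M) \simeq$ (Poincaré dual of $\mcal{X}$-homology of $(M,\partial M)$), telescopes to $|\mcal{X}(M)|\,|\mcal{X}(N)|^{-1}$. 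Finally, Proposition \ref{sec1-proposition2} gives $|\mcal{X}(M)| = |\mcal{X}|^{\chi(M)}$ and $|\mcal{X}(N)| = |\mcal{X}|^{\chi(N)}$, so $\lambda(M) = |\mcal{X}|^{\chi(M)-\chi(N)}$ as claimed. The main obstacle will be the second step: tracking the Poincaré-duality pairings consistently across $N$, $M$, and $N'$ — i.e. checking that the $\{0,1\}$ incidence numbers coming from the two dual long exact sequences match up with the right signs/exponents — so that the size bookkeeping via Lemma \ref{sec1-lemma} actually closes; everything after that is routine manipulation of the multiplicativity in Propositions \ref{sec1-proposition1} and \ref{sec1-proposition2}.
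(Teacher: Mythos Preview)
Your high-level strategy coincides with the paper's: expand both composites explicitly, use character orthogonality and Poincar\'e/Brown--Comenetz duality to turn everything into sizes of finite groups, and then invoke Proposition~\ref{sec1-proposition2} at the end. But two things are off.

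First, the expansion of the two composites and the comparison of their coefficients is exactly what was already carried out in the proof of Lemma~\ref{lemma_1}. That proof ends with an \emph{explicit} closed formula
\[
\lambda(M) \;=\; \frac{|\tau_{\geq 1}\mcal{X}(M)|}{|\tau_{\geq 1}\mcal{X}(N)|}\,
\frac{|\tau_{\geq 1}\dhcal{X}(N')|}{|\tau_{\geq 1}\dhcal{X}(M)|}\,
\frac{|kp|}{|kq|},
\]
where $|kp|=|\ker p^{*}|$ on $\mcal{X}^{0}$ and $|kq|=|\ker \hat q^{*}|$ on $\hcal{X}^{d-1}$. The proof of Lemma~\ref{lemma_2} should start from this formula, not rederive it; your first two paragraphs are redundant with \S\ref{sec6-1}.

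Second, and more seriously, the sentence ``telescopes to $|\mcal{X}(M)|\,|\mcal{X}(N)|^{-1}$'' hides the entire content of the lemma. This is \emph{not} a routine instance of Lemma~\ref{sec1-lemma}: one must first convert every $\hcal{X}$-quantity in the formula above into an $\mcal{X}$-quantity (Poincar\'e duality for the closed manifold $N'$ behaves differently from Poincar\'e duality for the manifold-with-boundary $M$, so $|\tau_{\geq 1}\dhcal{X}(N')|$ becomes $|\tau_{\leq -1}\mcal{X}(N')|$ while $|\tau_{\geq 1}\dhcal{X}(M)|^{-1}$ becomes $|\tau_{\leq -2}\mcal{X}(M,\partial M)|$, with a degree shift), and $|kq|$ must likewise be rewritten as $|\ker(\mcal{X}^{1}(M,N)\to\mcal{X}^{1}(N'))|$ via the dual long exact sequence. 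After factoring out $|\mcal{X}|^{\chi(M)-\chi(N)}$ using Proposition~\ref{sec1-proposition2}, what remains is a ratio $\lambda'(M)$ involving four truncation sizes and a ratio of two kernels living in \emph{different} long exact sequences (one for the pair $(M,N)$, one for $(M,\partial M)$). Showing $\lambda'(M)=1$ requires a genuinely non-formal step: one must argue that
\[
\frac{|\ker(\mcal{X}^{1}(M,N)\to\mcal{X}^{1}(M))|}{|\ker(\mcal{X}^{1}(M,N)\to\mcal{X}^{1}(N'))|}
= |\ker(\mcal{X}^{1}(M)\to\mcal{X}^{1}(\partial M))|^{-1},
\]
which uses the elementary but not automatic fact that $|\ker(g\circ f)|=|\ker f|\,|\ker g|$ precisely when $\ker g\subset\operatorname{im} f$, verified here via exactness of $\mcal{X}^{1}(M,N)\to\mcal{X}^{1}(M)\to\mcal{X}^{1}(N)$. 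Only after this does Lemma~\ref{sec1-lemma}, applied to the truncated long exact sequence of $(M,\partial M)$, finish the job. Your plan names the right tools but does not engage with this bookkeeping, and the displayed schematic formula (with ``$\mcal{X}^{0}(M,\partial M)$-type factors'' and $|\tau_{\geq 1}\dhcal{X}(M)|$ placed in the numerator rather than the denominator) suggests you have not yet worked it out.
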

The two lemmas are proven in \S \ref{sec6-1} and \S \ref{sec6-2} respectively.
\end{proof}

\subsection{Proof of Lemma 1}\label{sec6-1}

We borrow the notation from above. This section is devoted to proving Lemma \ref{lemma_1}, which we repeat here: 
\begin{lemma}
$\mb{D}(N') \circ Z_{\mcal{X}}(M)$ and $ Z_{\dhcal{X}}(M) \circ \mbD(N)$ differ by a constant $\lambda(M)$.
\end{lemma}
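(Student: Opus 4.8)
The plan is to compute both composites $\mb{D}(N') \circ Z_{\mcal{X}}(M)$ and $Z_{\dhcal{X}}(M) \circ \mb{D}(N)$ explicitly on a basis vector $a \in \mcal{X}^0(N)$, expand each as a linear combination of basis vectors $\alpha' \in \hcal{X}^{d-1}(N')$, and show that the two resulting coefficient functions $(a, \alpha') \mapsto (\text{coefficient})$ agree up to a single overall scalar $\lambda(M)$ independent of $a$ and $\alpha'$. Concretely, unwinding the definition of $Z_{\mcal{X}}(M)$ from Construction \ref{sec2-construction} and of $\mb{D}(N')$ from \eqref{iso_on_state}, the $(a,\alpha')$-coefficient of $\mb{D}(N') \circ Z_{\mcal{X}}(M)\,a$ is (up to known size factors) $\sum_{b : p^*b = a} \langle q^* b, \alpha'\rangle_{N'}$. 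On the other side, unwinding $\mb{D}(N)$ and the definition of $Z_{\dhcal{X}}(M)$ applied to the basis vectors $\alpha$ of $Z_{\dhcal{X}}(N)$, the $(a,\alpha')$-coefficient of $Z_{\dhcal{X}}(M) \circ \mb{D}(N)\,a$ is (up to known size factors) $\sum_{\alpha : p^*\alpha = \alpha'} \langle a, \alpha\rangle_{N}$, where now $p^*$ denotes the restriction $\hcal{X}^{d-1}(M) \to \hcal{X}^{d-1}(N)$ — wait, more precisely the sum is over $\beta \in \hcal{X}^{d-1}(M)$ with $q^*\beta = \alpha'$, evaluated against $a$ via $\langle a, p^*\beta\rangle_N$. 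So the content of the lemma is the identity
\begin{equation}
\sum_{b \in \mcal{X}^0(M),\, p^*b = a} \langle q^*b, \alpha'\rangle_{N'} \;=\; \lambda_0(M)\sum_{\beta \in \hcal{X}^{d-1}(M),\, q^*\beta = \alpha'} \langle a, p^*\beta\rangle_{N},
\end{equation}
for a constant $\lambda_0(M)$ absorbing the ratio of the size prefactors.

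The key mechanism is \Poincare--Lefschetz duality for $\mcal{R}$-oriented manifolds (Theorem \ref{poincare}) together with Brown--Comenetz/Pontryagin duality (Corollary \ref{sec5-corollary2}). The bordism $M$ has $\partial M = N \sqcup N'$, and Theorem \ref{poincare} gives a vertical isomorphism of the cohomology long exact sequence of $(M, N)$ with the homology long exact sequence of $(M, N')$. Dualizing the cohomology sequence via Brown--Comenetz (Corollary \ref{sec5-corollary2}) turns "$b \in \mcal{X}^0(M)$ restricting to $a$ on $N$" into the homological data on the $\hcal{X}$ side, and the pairing $\langle -, -\rangle$ from \eqref{pairing} is by construction the composite of the evaluation pairing $ev$ with $\int_{[N]}$ or $\int_{[N']}$. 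So I would set up the three-term commuting ladder relating $\mcal{X}^*(M,N) \to \mcal{X}^*(M) \to \mcal{X}^*(N)$ to its \Poincare dual $\hcal{X}_{d-*}(M,N') \to \hcal{X}_{d-*}(M,\partial M) \to \hcal{X}_{d-1-*}(N')$, and observe that the sum $\sum_{p^*b = a}$ is a sum over a torsor under $\mathrm{im}(\mcal{X}^0(M,N) \to \mcal{X}^0(M))$, i.e. over a quotient of $\mcal{X}^0(M,N)$; dually the sum $\sum_{q^*\beta = \alpha'}$ is over a torsor under the image of $\hcal{X}^{d-1}(M,N) \to \hcal{X}^{d-1}(M)$. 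The adjunction/duality between $p^*$ and the appropriate boundary or restriction map makes the two weighted character sums proportional: a sum of a nontrivial character of a finite group over a coset is zero unless the character is trivial on the subgroup, in which case it equals the order of the coset — and the duality identifies exactly which $(a,\alpha')$ pairs are "compatible" on both sides. This is essentially the same torsor-and-character-sum bookkeeping already used in the functoriality proof in \S \ref{sec2}.

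The main obstacle I anticipate is \textbf{naturality of the pairing $\langle -,-\rangle$ with respect to restriction and the boundary maps} — i.e. checking that $\langle q^*b, \alpha'\rangle_{N'}$ and $\langle a, p^*\beta\rangle_N$ are genuinely adjoint under the \Poincare duality ladder, with the signs/orientations matching (recall the orientation of $\partial M$ restricts to $[N]$ on $N$ and $-[N']$ on $N'$, which is precisely what makes the composite well-defined). Making this precise requires carefully tracing how $\int_{[M]}$ intertwines $p^*$, $q^*$, the relative-to-absolute maps, and the connecting homomorphisms in the commuting diagram of Theorem \ref{poincare}, and then feeding that through the $ev$ pairing using Corollary \ref{sec5-corollary2}. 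Once that naturality square is established, the existence of the constant $\lambda(M)$ (that it does not depend on $a$ or $\alpha'$) follows from the torsor structure exactly as in the functoriality argument; pinning down its value is deferred to Lemma \ref{lemma_2}.
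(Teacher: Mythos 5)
Your plan is essentially the paper's argument: compute both composites on a basis vector $a$, reduce to showing that $\sum_{b \to a}\langle q^*b, \alpha'\rangle_{N'}$ and $\sum_{\beta \to \alpha'}\langle a, \hat p^*\beta\rangle_N$ agree up to a constant independent of $(a,\alpha')$, and then invoke \Poincare--Lefschetz duality together with Brown--Comenetz duality. You also correctly flag the real content as the naturality of the pairing under restriction, with the orientation sign ($[\partial M]$ restricting to $[N]$ and $-[N']$) playing an essential role, and you recognize that the constant $\lambda(M)$ emerges from the torsor structure once that naturality is in hand.

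Since you explicitly leave the naturality as ``the main obstacle,'' it is worth saying how the paper actually closes it, as it is cleaner than a brute-force diagram chase. The crucial identity is Lemma~\ref{lemma_4}: $\langle p^*b, \hat p^*\beta\rangle_N = \langle q^*b, \hat q^*\beta\rangle_{N'}$ for all $b\in\mcal{X}^0(M)$, $\beta\in\hcal{X}^{d-1}(M)$. This follows from two ingredients. First, the projection formula $ev_N(p^*b,\gamma) = ev_M(b, \hat p_*\gamma)$, which converts both pairings into evaluations on $M$ after pushing forward the homology class. Second, Lemma~\ref{lemma_5}: in the long exact sequence of the pair $(M,\partial M)$, the composite $\hcal{X}_1(M,\partial M)\to \hcal{X}_0(\partial M)\to \hcal{X}_0(M)$ vanishes, and since $\partial M = N\sqcup N'$ this says the two component maps $f$ (through $N$) and $g$ (through $N'$) satisfy $f+g=0$. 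The sign produced by $-[N']$ in the definition of $\langle-,-\rangle_{N'}$ cancels exactly against this minus, giving the identity. You should be aware that even after establishing Lemma~\ref{lemma_4} you still need a separate vanishing statement (the paper's Lemma~\ref{lemma_3}): if $a$ is not in the image of $p^*$ then $\sum_{\beta\to\alpha'}\langle a,\hat p^*\beta\rangle_N = 0$, which is the nontrivial-character-sum argument you allude to but applied to $\dd^*a \in \mcal{X}^1(M,N)$ rather than directly on $N$; without this the ``differ by a constant'' claim would fail on the basis vectors $a$ outside $\mathrm{im}\,p^*$. So the strategy is right, but the lemma is not proved until both of these pieces are nailed down.
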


\begin{proof}

From now on, we will denote elements of 
\begin{equation}
\mcal{X}^0(N),\  \mcal{X}^0(M),\  \mcal{X}^0(N')     
\end{equation}
as $a, b,$ and $a'$. Similarly, we will denote elements of 
\begin{equation}
\hcal{X}^{d-1}(N),\  \hcal{X}^{d-1}(M), \ \hcal{X}^{d-1}(N')     
\end{equation}
as $\alpha, \beta,$ and $\alpha'$. We also use the summing convention that $\sum_{b}$ means summing over all $b\in \mcal{X}^0(M)$, and $\sum_{b \rightarrow a}$ means summing over all $b \in \mcal{X}^0(M)$ such that $p^*(b) = a$.

We denote the inclusion maps $p: N \hookrightarrow M$ and $q: N' \hookrightarrow M$. We have pullback maps 
\begin{equation}
p^*: \mcal{X}^0(M) \rightarrow \mcal{X}^0(N),\ \ q^*: \mcal{X}^0 (M) \rightarrow \mcal{X}^0 (N').
\end{equation}
Similarly we have
\begin{equation}
\hat{p}^*: \hcal{X}^{d-1}(M) \rightarrow \hcal{X}^{d-1}(N), \ \ \hat{q}^*: \hcal{X}^{d-1}(M) \rightarrow \hcal{X}^{d-1}(N').
\end{equation} 

First we will calculate $\mb{D}(N') \circ Z_{\mcal{X}}(M)$. By Construction \ref{sec2-construction}, $Z_{\mcal{X}}(M)$ sends 
\begin{align}
a &\mapsto 
\frac{|\uptau_{\geq 1}\mcal{X}(M)|} {|\uptau_{\geq 1}\mcal{X}(N')|} \sum_{b \rightarrow a} q^*b\\
&= \frac{|\uptau_{\geq 1}\mcal{X}(M)|} {|\uptau_{\geq 1}\mcal{X}(N')|} \sum_{a'} \sum_{b \rightarrow a, b \rightarrow a'} a',
\end{align}
Recall that $\mb{D}(N')$ takes 
\begin{equation}
a' \mapsto |\tau_{\geq 1} \mcal{X}(N')| \ \sum_{\alpha'} \langle a', \alpha' \rangle_N\  \alpha'.
\end{equation}
Thus the composition $\mb{D}(N') \circ Z_{\mcal{X}}(M)$ sends 
\begin{align}
a &\mapsto \frac{|\tau_{\geq 1} \mcal{X}(M)|} {|\tau_{\geq 1} \mcal{X}(N')|} \sum_{b \rightarrow a} (|\tau_{\geq 1} \mcal{X}(N')|\ \sum_{\alpha'} \langle q^*b, \alpha' \rangle_{N'} \ \alpha')\\
 &= |\tau_{\geq 1} \mcal{X}(M)| \sum_{b \rightarrow a} \sum_{\alpha'} \langle q^*b, \alpha' \rangle_{N'} \ \alpha'.
\end{align}

For $Z_{\dhcal{X}}(M) \circ \mbD(N)$, first $\mbD(N)$ sends:
\begin{equation}
    a \mapsto |\tau_{\geq 1}\mcal{X}(N)| \sum_{\alpha} \langle a, \alpha \rangle_N \ \alpha.
\end{equation}
$Z_{\dhcal{X}}(M)$ takes 
\begin{equation}
    \alpha \mapsto
    \frac{|\tau_{\geq 1} \dhcal{X}(M) |} {|\tau_{\geq 1} \dhcal{X}(N')|} \sum_{\beta, \beta \rightarrow \alpha} \hat{q}^* \beta.
\end{equation}
Thus the composition $Z_{\dhcal{X}}(M) \circ \mbD(N)$ is
\begin{align}
a \mapsto  |\tau_{\geq 1} \mcal{X}(N)| \frac{|\tau_{\geq 1} \dhcal{X}(M) |} {|\tau_{\geq 1} \dhcal{X}(N')|} 
\sum_{\alpha'} \sum_{\beta \rightarrow \alpha'} \langle a, \hat{p}^*\beta \rangle_N \ \alpha'.    
\end{align}

We are reduced to showing the following lemma:
\begin{lemma}
For every $a$ and $\alpha'$, $\sum_{b \rightarrow a} \langle q^*b, \alpha' \rangle_{N'}$ and $ \sum_{\beta \rightarrow \alpha'} \langle a, \hat{p}^*\beta \rangle_N$ differ a nonzero constant multiplicative $C$ that doesn't depend on $a$ or $\alpha'$.
\end{lemma}
\begin{proof}

Note that if $a$ has no preimage $b \mapsto a$. Then 
\begin{equation}
\sum_{b \rightarrow a} \langle q^*b, \alpha' \rangle_{N'} = 0.
\end{equation}
In this case, Lemma \ref{lemma_3} (stated and proven below) precise says that 
\begin{equation}
 \sum_{\beta \rightarrow \alpha'} \langle a, \hat{p}^*\beta \rangle_N = 0.   
\end{equation}
Similarly, if $\alpha'$ has no preimage $\beta \mapsto \alpha'$, then both sides are also zero. Thus we are reduced to the case that $a$ lies in the image of 
\begin{equation}
   p^*: \mcal{X}^0(M) \rightarrow \mcal{X}^0(N)
\end{equation}
and $\alpha'$ lies in the image of
\begin{equation}
  \hat{q}^*: \hcal{X}^{d-1}(M) \rightarrow \hcal{X}^{d-1}(N).  
\end{equation}

There are 
\begin{equation}
 |kp| \coloneqq |ker (p^*: \mcal{X}^0(M) \rightarrow \mcal{X}^0(N))|    
\end{equation}
many preimage of $a$. Similarly, there are 
\begin{equation}
|kq| \coloneqq |ker (\hat{q}^*: \dhcal{X}^0(M) \rightarrow \dhcal{X}^0(N'))|    
\end{equation}
many preimages of $\alpha'$.

On one side, we have 
\begin{gather}
\sum_{b \rightarrow a} \langle q^*(b), \alpha' \rangle_{N'} \\ 
= |kq|^{-1} \sum_{b \rightarrow a} \sum_{\beta \rightarrow \alpha'} \langle q^*(b), q^*\beta \rangle_{N'} \\
= |kq|^{-1} \sum_{b \rightarrow a} \sum_{\beta \rightarrow \alpha'} \langle p^*(b), p^*\beta \rangle_{N}.
\end{gather}
The last equation is by Lemma \ref{lemma_4}.
On the other side, we have 
\begin{gather}
\sum_{\beta \rightarrow \alpha'} \langle a, p^*\beta \rangle_N \\
= |kp|^{-1} \sum_{b \rightarrow a} \sum_{\beta \rightarrow \alpha'} \langle p^*(b), p^*\beta \rangle_{N}.
\end{gather}
We see that they differ by a constant $C = |kp|/|kq|$.
\end{proof}

Therefore 
\begin{equation}
\mbD(N') \circ Z_{\mcal{X}}(M) = \lambda(M)\  Z_{\dhcal{X}}(M) \circ \mbD(N),   
\end{equation}
with
\begin{equation}
\begin{aligned}
\lambda(M) = \frac{|\tau_{\geq 1} \mcal{X}(M)|}{|\tau_{\geq 1} \mcal{X}(N)|} \frac{|\tau_{\geq 1} \dhcal{X}(N')|}{|\tau_{\geq 1} \dhcal{X}(M)|} \frac{|kp|}{|kq|}.
\end{aligned} 
\end{equation}
\end{proof}

Now we need to prove Lemma \ref{lemma_3} and Lemma \ref{lemma_4} used above. 
We need the following lemma:
\begin{lemma}\label{lemma_5}
The natural maps 
\begin{equation}
 f: \hcal{X}_{1}(M, \dd M) \rightarrow \hcal{X}_0(N) \rightarrow \hcal{X}_0(M)
\end{equation}
and 
\begin{equation}
g: \hcal{X}_{1}(M, \dd M) \rightarrow \hcal{X}_0(N') \rightarrow \hcal{X}_0(M)
\end{equation}
are inverses to each other. That is, $f + g = 0$.
\end{lemma}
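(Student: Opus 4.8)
The plan is to identify the two composite maps $f$ and $g$ with the two legs of the Mayer--Vietoris boundary map for the decomposition $M = M \cup_{N'} M'$... wait, no: here $M$ itself is a single bordism with $\partial M = N \sqcup N'$, so the relevant decomposition is the cofiber sequence $N \sqcup N' \to M \to M/\partial M$ together with the excision identifications. Concretely, I would first unwind what $f$ and $g$ are. The class in $\hcal{X}_1(M,\dd M)$ we care about will ultimately be $\partial$-related to the orientation class; in any case the lemma is a purely homological statement about the connecting map $\hcal{X}_1(M,\dd M) \xrightarrow{\partial} \hcal{X}_0(\dd M) = \hcal{X}_0(N) \oplus \hcal{X}_0(N')$ followed by the two inclusion-induced maps $\hcal{X}_0(N) \to \hcal{X}_0(M)$ and $\hcal{X}_0(N') \to \hcal{X}_0(M)$. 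So the first step is: rewrite $f$ as (projection to $N$)$\circ \partial$ followed by $i_{N,*}$, and $g$ as (projection to $N'$)$\circ\partial$ followed by $i_{N',*}$.

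The key step is then the long exact sequence of the pair $(M,\dd M)$:
\begin{equation}
\cdots \to \hcal{X}_1(M,\dd M) \xrightarrow{\partial} \hcal{X}_0(\dd M) \xrightarrow{j} \hcal{X}_0(M) \to \hcal{X}_0(M,\dd M) \to \cdots
\end{equation}
Exactness at $\hcal{X}_0(\dd M)$ says precisely that $j \circ \partial = 0$. But $\hcal{X}_0(\dd M) = \hcal{X}_0(N) \oplus \hcal{X}_0(N')$ and the map $j$ to $\hcal{X}_0(M)$ is $(i_{N,*}, i_{N',*})$, i.e. $j(x,y) = i_{N,*}(x) + i_{N',*}(y)$. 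Hence for any $c \in \hcal{X}_1(M,\dd M)$, writing $\partial c = (x,y)$, we get $i_{N,*}(x) + i_{N',*}(y) = 0$ in $\hcal{X}_0(M)$; that is exactly $f(c) + g(c) = 0$, since $f(c) = i_{N,*}(x)$ and $g(c) = i_{N',*}(y)$. So $f = -g$, which is the claim (the statement "$f$ and $g$ are inverses" should be read in the additive sense $f + g = 0$, as the lemma itself clarifies).

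The only genuine thing to be careful about is the bookkeeping of signs and of the splitting $\hcal{X}_0(\dd M) \simeq \hcal{X}_0(N) \oplus \hcal{X}_0(N')$ — in particular whether the Mayer--Vietoris/boundary conventions introduce a sign on the $N'$-component, which is where the orientation condition $\dd[M]|_{N'} = -[N']$ from Definition \ref{sec4-oriented-tft} enters. I would address this by fixing conventions once: use the cofiber sequence $N \sqcup N' \hookrightarrow M \to M/\dd M$, take $\hcal{X}_*$, and define $f$, $g$ to be the composites through the two summands exactly as displayed in the statement; then $j \circ \partial = 0$ gives the identity on the nose. I expect this sign/convention check to be the main (and essentially only) obstacle; the homological content is immediate from exactness. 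No use of $\pi$-finiteness or of Brown--Comenetz duality is needed here beyond knowing $\hcal{X}_*$ is an ordinary homology theory, so the long exact sequence of a pair and its exactness are all that the proof requires.
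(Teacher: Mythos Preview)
Your argument is correct and essentially identical to the paper's: both use the long exact sequence of the pair $(M,\dd M)$, observe that the composite $\hcal{X}_1(M,\dd M)\to\hcal{X}_0(\dd M)\to\hcal{X}_0(M)$ vanishes by exactness, and split it as $f+g$ via $\dd M = N\sqcup N'$. Your caveat about orientation signs is a red herring for this lemma---the statement is purely homological and the orientation convention $\dd[M]|_{N'}=-[N']$ plays no role here (it enters only later, in Lemma~\ref{lemma_4}).
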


\begin{proof}
Consider the triple $\dd M \rightarrow M \rightarrow (M, \dd M)$, where $(M, \dd M)$ represents the cofiber. We have a long exact sequence
\begin{equation}
    \cdots \rightarrow \hcal{X}_{1}(M, \dd M) \rightarrow \hcal{X}_0(\dd M) \rightarrow \hcal{X}_0(M) \rightarrow \cdots.
\end{equation}
In particular, this means that the composition
\begin{equation}
    h: \hcal{X}_{1}(M, \dd M) \rightarrow \hcal{X}_0(\dd M) \rightarrow \hcal{X}_0(M)
\end{equation}
is the zero homomorphism $h = 0$.
As $\dd M = N \sqcup N'$, we see that $h = f + g$.
\end{proof}

We first prove Lemma \ref{lemma_4}. Because it might have independence interest, we recall the notations: 
we have $M: N \rightarrow N'$ a bordism between $N$ and $N'$, with the inclusion maps $p: N \hookrightarrow M$ and $q: N' \hookrightarrow M$. We have pullback maps 
\begin{equation}
p^*: \mcal{X}^0(M) \rightarrow \mcal{X}^0(N),\ \ q^*: \mcal{X}^0 (M) \rightarrow \mcal{X}^0 (N').
\end{equation}
Similarly we have
\begin{equation}
\hat{p}^*: \hcal{X}^{d-1}(M) \rightarrow \hcal{X}^{d-1}(N), \ \ \hat{q}^*: \hcal{X}^{d-1}(M) \rightarrow \hcal{X}^{d-1}(N').
\end{equation} 
We will denote elements of $\mcal{X}^0(M)$ as $b$ and $\hcal{X}^{d-1}(M)$ as $\beta$. 
Given $b$ and $\beta$, we have the two pairings (Equation \ref{pairing}): 
\begin{equation}
\langle p^*b, \hat{p}^*\beta \rangle_N, \ \ \langle q^*b, \hat{q}^*\beta \rangle_{N'}.
\end{equation}
Here's the lemma that we need to show:
\begin{lemma}\label{lemma_4}
 $\langle p^*b, \hat{p}^*\beta \rangle_N =\langle q^*b, \hat{q}^*\beta \rangle_{N'}$.
\end{lemma}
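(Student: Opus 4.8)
The plan is to transport the two Poincaré-duality isomorphisms $\int_{[N]}$ and $\int_{[N']}$ up to the bounding manifold $M$, where the desired identity becomes a formal consequence of naturality of the Brown--Comenetz pairing together with Lemma~\ref{lemma_5}. Concretely, I would set $c \coloneqq \int_{[M]}\beta \in \hcal{X}_1(M,\dd M)$ and write $\dd c \in \hcal{X}_0(\dd M) = \hcal{X}_0(N)\oplus\hcal{X}_0(N')$, with components $\dd_N c$ and $\dd_{N'}c$, for the connecting homomorphism applied to $c$.

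First I would invoke Theorem~\ref{poincare}, applied to the right $\mcal{R}$-module $\hcal{X}$ and the $\mcal{R}$-oriented manifold $M$: commutativity of the rightmost square of the duality ladder identifies $\int_{[N]}(\hat p^*\beta)$ with the $N$-component of $\dd c$ computed with the boundary orientation $M$ induces on $N$, and likewise for $N'$. Since $\dd[M]$ restricts to $[N]$ on $N$ and to $-[N']$ on $N'$ (Definition~\ref{sec4-oriented-tft}) and $\int_{[\,\cdot\,]}$ is linear in the orientation class, this should give
\begin{equation*}
\int_{[N]}(\hat p^*\beta) = \dd_N c, \qquad \int_{[N']}(\hat q^*\beta) = -\,\dd_{N'}c.
\end{equation*}
Next I would use that the evaluation pairing $ev$ is natural in the space variable: for $p\colon N\hookrightarrow M$ one has $ev_N(p^*b, z) = ev_M(b, p_* z)$ for all $z\in\hcal{X}_0(N)$, and similarly for $q$ — this is the naturality built into the Brown--Comenetz theorem, cf.\ Corollary~\ref{sec5-corollary2}. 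Unwinding $\langle -,-\rangle_N = ev_N(-,\int_{[N]}(-))$, this turns the two pairings into
\begin{equation*}
\langle p^*b,\hat p^*\beta\rangle_N = ev_M(b,\, p_*\dd_N c), \qquad \langle q^*b,\hat q^*\beta\rangle_{N'} = ev_M(b,\, -q_*\dd_{N'}c) = ev_M(b,\, q_*\dd_{N'}c)^{-1},
\end{equation*}
where in the last equality I use that $ev_M(b,-)\colon \hcal{X}_0(M)\to\cx$ is a group homomorphism. Finally, Lemma~\ref{lemma_5} says precisely that $p_*\dd_N c + q_*\dd_{N'}c = 0$ in $\hcal{X}_0(M)$, hence $ev_M(b, p_*\dd_N c) = ev_M(b, q_*\dd_{N'}c)^{-1}$, and the two expressions coincide.

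The delicate point --- essentially the only one --- is the orientation bookkeeping in the first step: I must be sure that the only asymmetry between the $N$- and $N'$-contributions to $\dd c$ is the honest orientation sign ($-[N']$) coming from the composition convention in $\BordR$, i.e.\ that the cap-product boundary formula treats the two boundary components symmetrically. This is exactly what the ``isomorphism of long exact sequences'' clause of Theorem~\ref{poincare} packages; any residual universal Leibniz sign would enter both identities equally and cancel, so it never has to be named, and the two sign reversals --- one from reversing the orientation on $N'$, one from Lemma~\ref{lemma_5} --- cancel against each other. One subtlety worth recording in the writeup is that Theorem~\ref{poincare} is stated for left $\mcal{R}$-modules whereas $\hcal{X}$ is a right module; the dual statement for right modules, with the cap product on the other side, is what is actually being used.
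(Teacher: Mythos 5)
Your argument is the paper's argument, line for line: set $c=\int_{[M]}\beta$, use the Poincar\'e-duality ladder of Theorem~\ref{poincare} to rewrite $\int_{[N]}\hat p^*\beta$ and $\int_{[N']}\hat q^*\beta$ as the $N$- and $N'$-components of $\dd c$ (picking up the sign from $\dd[M]|_{N'}=-[N']$), push both sides into $\hcal{X}_0(M)$ via the naturality of $ev$ (the paper calls this the ``projection formula''), and finish with Lemma~\ref{lemma_5}. Your $\dd_N c$ and $p_*$ are the paper's $\hat\mu_*(\int_{[M]}\beta)$ and $\hat p_*$, so the two proofs coincide up to notation; the closing remarks about left/right module conventions and Leibniz signs are reasonable bookkeeping comments that the paper leaves implicit.
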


\begin{proof}
Recall that the orientation class $[M]$ restricts to $[N]$ on $N$ and $-[N']$ on $N'$. 
We will first consider $\langle p^*b, \hat{p}^*\beta \rangle_N$.
By \Poincare duality (Theorem \ref{poincare}), there is an isomorphism of long exact sequences:
\begin{equation}
\begin{tikzcd}
\cdots \arrow{r} \arrow{d}
    & \hcal{X}^{d-1}(M) \arrow{d}{\int_{[M]}}  \arrow{r}{\hat{p}^*}
        & \hcal{X}^{d-1}(N) \arrow{d}{\int_{[N]}} \arrow{r}
            & \hcal{X}^d(M,N) \arrow{d} \arrow{r}
                & \cdots \arrow{d}\\
\cdots \arrow{r}
    & \hcal{X}_{1}(M, \dd M) \arrow{r}{\hat{\mu}_*}
        & \hcal{X}_0(N) \arrow{r}
            & \hcal{X}_0(M, N') \arrow{r}
                & \cdots
\end{tikzcd}
\end{equation}
By definition of $\langle -, - \rangle_N$,  we have 
\begin{align}
\langle p^*b, \hat{p}^*\beta \rangle_N &= ev_N(p^*b, \int_{[N]} \hat{p}^* \beta)\\
                                      &= ev_N(p^*b, \hat{\mu}_* \int_{[M]} \beta) \label{6.1.1}
\end{align}
Now consider the long exact sequence:
\begin{align}
    ...\rightarrow \mcal{X}^0(M) \xrightarrow[]{p^*} \mcal{X}^0(N) \rightarrow \mcal{X}^1(M,N) \rightarrow ...
\end{align}
By Brown-Comenetz duality (Corollary \ref{sec5-theorem}), taking Pontryagin dual term-wise is isormorphic to the following long exact sequence
\begin{align}
    ... \leftarrow \hcal{X}_0(M) \xleftarrow[]{\hat{p}_*} \hcal{X}_0(N) \leftarrow \hcal{X}_{1}(M,N) \leftarrow ...
\end{align}
The dual long exact sequences are connected by the ``projection formula": given $b \in \mcal{X}^0(M)$ and $\gamma \in \hcal{X}_0(N)$, then 
\begin{equation}
    ev_N(p^*b, \gamma) = ev_M(b, \hat{p}_* \gamma).
\end{equation}
Put it together with Equation \ref{6.1.1}:
\begin{align}
\langle p^*b, \hat{p}^*\beta \rangle_N &=  ev_N(p^*b, \hat{\mu}_* \int_{[M]} \beta)\\
&=  ev_M(b, \hat{p}_* \circ \hat{\mu}_* \int_{[M]} \beta)
\end{align}

The same argument shows that 
\begin{equation}
\langle q^*b, \hat{q}^*\beta \rangle_{N'} = ev_M(b, -\hat{q}_* \circ \hat{\nu}_* \int_{[M]} \beta)
\end{equation}
The minus sign comes from the fact that $[M]$ restricts to $-[N']$.

Note that the map
\begin{equation}
\hat{p}_* \circ \hat{\mu}_* : \hcal{X}_{1}(M, \dd M) \rightarrow \hcal{X}_{0}(M)
\end{equation}
is precisely the map $f$ in Lemma \ref{lemma_5}. Similarly, $\hat{q}_* \circ \hat{\nu}_* = g$. By Lemma \ref{lemma_5}, we see that 
\begin{equation}
   \hat{p}_* \circ \hat{\mu}_* \int_{[M]} \beta =   -\hat{q}_* \circ \hat{\nu}_* \int_{[M]} \beta,
\end{equation}
therefore 
\begin{equation}
 \langle p^*b, \hat{p}^*\beta \rangle_N =\langle q^*b, \hat{q}^*\beta \rangle_{N'}.
 \end{equation}
 \end{proof}

\begin{remark}
Heuristically, since the orientation class $[M]$ for $M$ is a homotopy from $p_*[N]$ to $q_*[N] \in \mcal{R}_{d-1}(M)$, therefore
\begin{equation}
\langle p^*b, \hat{p}^*\beta \rangle_N \approx \langle b, \beta \rangle_{p_*[N]} \approx \langle b, \beta \rangle_{q_*[N']} \approx \langle q^*b, \hat{q}^*\beta \rangle_{N'}.
\end{equation}
\end{remark}
Now we proof the following lemma:
\begin{lemma}\label{lemma_3}
Let $a \in \mcal{X}^0(N)$ and $\alpha' \in \hcal{X}^{d-1}(N')$. If $a$ is not in the image of $p^*: \mcal{X}^0(M) \rightarrow \mcal{X}^0(N)$, then 
\begin{equation}
\sum_{\beta \rightarrow \alpha'} \langle a, \hat{p}^* \beta \rangle_N = 0,
\end{equation}
where $\beta$ sums over $\hcal{X}^{d-1}(M)$.
\end{lemma}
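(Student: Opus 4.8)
The plan is to turn the sum into a character sum over a finite abelian group, so that orthogonality of characters forces vanishing unless a certain character is trivial, and then to use Poincar\'e duality to identify exactly which subgroup of $\hcal{X}^{d-1}(N)$ controls triviality.

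First I would reduce. If no $\beta\in\hcal{X}^{d-1}(M)$ satisfies $\hat q^*\beta=\alpha'$, the sum is empty and vanishes; otherwise fix $\beta_0$ with $\hat q^*\beta_0=\alpha'$. Since $\hat p^*$ is a homomorphism and $\langle a,-\rangle_N$ is a character of the finite group $\hcal{X}^{d-1}(N)$,
\[
\sum_{\beta\to\alpha'}\langle a,\hat p^*\beta\rangle_N=\langle a,\hat p^*\beta_0\rangle_N\sum_{\gamma\in\ker(\hat q^*)}\langle a,\hat p^*\gamma\rangle_N,
\]
and $\gamma\mapsto\langle a,\hat p^*\gamma\rangle_N$ is a character of the finite group $\ker(\hat q^*:\hcal{X}^{d-1}(M)\to\hcal{X}^{d-1}(N'))$. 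By orthogonality of characters the sum vanishes unless this character is trivial, so it suffices to show: if $\langle a,\hat p^*\gamma\rangle_N=1$ for all $\gamma\in\ker(\hat q^*)$, then $a\in\operatorname{im}(p^*)$. As $\langle-,-\rangle_N$ is a perfect pairing between $\mcal{X}^0(N)$ and $\hcal{X}^{d-1}(N)$, this is precisely the inclusion $(\hat p^*(\ker\hat q^*))^{\perp}\subseteq\operatorname{im}(p^*)$, equivalently $\operatorname{im}(p^*)^{\perp}\subseteq\hat p^*(\ker\hat q^*)$. The opposite inclusion comes for free from Lemma \ref{lemma_4}: for $b\in\mcal{X}^0(M)$ and $\gamma\in\ker\hat q^*$ one has $\langle p^*b,\hat p^*\gamma\rangle_N=\langle q^*b,\hat q^*\gamma\rangle_{N'}=1$, so $\hat p^*(\ker\hat q^*)\subseteq\operatorname{im}(p^*)^{\perp}$. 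Hence the goal is the equality $\hat p^*(\ker\hat q^*)=\operatorname{im}(p^*)^{\perp}$ in $\hcal{X}^{d-1}(N)$.

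To prove this equality I would describe both sides through Poincar\'e duality. For the right-hand side, the projection formula $ev_N(p^*b,-)=ev_M(b,\hat p_*(-))$ used in the proof of Lemma \ref{lemma_4}, together with perfectness of the pairing $ev_M$ on $\mcal{X}^0(M)\times\hcal{X}_0(M)$ (Pontryagin duality, \S\ref{sec5}), shows that $\xi\in\operatorname{im}(p^*)^{\perp}$ iff $\hat p_*\int_{[N]}\xi=0$, i.e. iff $\int_{[N]}\xi\in\ker(\hat p_*:\hcal{X}_0(N)\to\hcal{X}_0(M))=\operatorname{im}(\partial:\hcal{X}_1(M,N)\to\hcal{X}_0(N))$ by exactness of the homology long exact sequence of $(M,N)$. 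For the left-hand side, the cohomology long exact sequence of $(M,N')$ gives $\ker\hat q^*=\operatorname{im}(j^*:\hcal{X}^{d-1}(M,N')\to\hcal{X}^{d-1}(M))$, hence $\hat p^*(\ker\hat q^*)=\operatorname{im}(\hat p^*\circ j^*)$, the image of the restriction-to-$N$ map on classes of $(M,N')$. Stacking the Poincar\'e duality square of Theorem \ref{poincare} (which relates $j^*$ to the natural map $k_*:\hcal{X}_1(M,N)\to\hcal{X}_1(M,\dd M)$, with $\int_{[M,N']}$ and $\int_{[M]}$ isomorphisms) above the square already used in the proof of Lemma \ref{lemma_4} (which relates $\hat p^*$ to the map $\hcal{X}_1(M,\dd M)\to\hcal{X}_0(N)$ that is, by Lemma \ref{lemma_5} and its proof, the boundary $\hcal{X}_1(M,\dd M)\to\hcal{X}_0(\dd M)=\hcal{X}_0(N)\oplus\hcal{X}_0(N')$ followed by projection to the $N$-summand), I obtain a commuting square identifying $\int_{[N]}\circ\hat p^*\circ j^*$ with $\partial_{(M,N)}\circ\int_{[M,N']}$; here naturality of the connecting homomorphism for $(M,N)\hookrightarrow(M,\dd M)$ and the fact that including $N$ into $\dd M$ and projecting back is the identity give that the composite of the two bottom maps is the boundary $\partial_{(M,N)}$. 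Since $\int_{[M,N']}$ is an isomorphism, $\int_{[N]}(\hat p^*(\ker\hat q^*))=\operatorname{im}(\partial_{(M,N)})=\int_{[N]}(\operatorname{im}(p^*)^{\perp})$, and as $\int_{[N]}$ is an isomorphism this yields $\hat p^*(\ker\hat q^*)=\operatorname{im}(p^*)^{\perp}$. Applying $\perp$ gives $(\hat p^*(\ker\hat q^*))^{\perp}=\operatorname{im}(p^*)$, which is what the reduction needs.

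\textbf{Expected main obstacle.} The character-sum reduction and the exact-sequence bookkeeping are routine. The delicate point is the last part of the argument: verifying that stacking the two Poincar\'e duality squares really reproduces the connecting map $\partial_{(M,N)}$ of the pair $(M,N)$ on the homology side, i.e. that restriction-to-$N$ corresponds under Poincar\'e--Lefschetz duality to $\partial_{(M,N)}$, with all the orientation and boundary-sign conventions of \S\ref{sec6} lining up. This is of the same nature as, and essentially dual to, the boundary-compatibility already established in the proof of Lemma \ref{lemma_4}, so I expect it to go through, but it is the step requiring the most care.
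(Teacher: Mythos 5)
Your proposal is correct, and it takes a genuinely different route from the paper. The paper's proof attacks the sum directly: after the same reduction to $\alpha'=0$, it uses the Poincar\'e duality square to identify $\ker\hat q^*$ with $\operatorname{im}(\hat\mu_*\colon\hcal{X}_1(M,N)\to\hcal{X}_1(M,\dd M))$, rewrites each summand $\langle a,\hat p^*\beta_0\rangle_N$ as $ev_{(M,N)}(\dd^* a,\gamma)$ where $\gamma$ is a preimage of $\int_{[M]}\beta_0$, and concludes because $a\notin\operatorname{im}(p^*)$ forces $\dd^*a\neq 0$, making $ev_{(M,N)}(\dd^*a,-)$ a nontrivial character of $\hcal{X}_1(M,N)$ whose sum vanishes. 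You instead abstract the vanishing criterion at the outset: by orthogonality of characters the sum vanishes unless $a\in(\hat p^*(\ker\hat q^*))^{\perp}$, so you reduce to the subgroup equality $\hat p^*(\ker\hat q^*)=\operatorname{im}(p^*)^{\perp}$, proving one inclusion cheaply from Lemma \ref{lemma_4} (an observation the paper never explicitly records) and the other by matching both images under $\int_{[N]}$ with $\operatorname{im}(\dd_{(M,N)})$. The underlying Poincar\'e duality input is the same in both cases --- both hinge on identifying, via stacked duality squares and naturality of connecting maps, a map dual to $\dd^*_{(M,N)}$ --- but your packaging as a statement about orthogonal complements of subgroups makes the role of $\operatorname{im}(p^*)$ and $\ker\hat q^*$ more symmetric and transparent, at the cost of carrying out a double-inclusion argument instead of a single character-sum computation. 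Your flagged ``main obstacle'' (that the stacked duality squares really reproduce $\dd_{(M,N)}$, with $\mathrm{pr}_N\circ i_*=\mathrm{id}$ doing the work) is indeed the same diagram chase the paper performs when identifying $\hat\lambda_*\circ\hat\mu_*$ with the dual of $\dd^*$, so it does go through as you expected.
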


\begin{proof}
If $\alpha'$ has no preimage in 
\begin{equation}
\hat{q}^*: \hcal{X}^{d-1}(M) \rightarrow \hcal{X}^{d-1}(N'),    
\end{equation}
then the sum is trivially 0. If $\alpha'$ has a preimage, say $\beta_\alpha'$. Then all other preimages of $\alpha$ are of the form $\beta_\alpha' + \beta_0$, where $\beta_0 \in ker(\hat{q}^*)$. Thus 
\begin{align}
    \sum_{\beta \rightarrow \alpha'} \langle a, \hat{p}^* \beta \rangle_N &= 
    \sum_{\beta_0 \in ker(\hat{q}^*)} \langle a, \hat{p}^* (\beta_\alpha' + \beta_0) \rangle_N \\
    &=(\langle a, \hat{p}^* \beta_\alpha' \rangle_N)\ 
    \sum_{\beta_0 \in ker(\hat{q}^*)} \langle a, \hat{p}^* \beta_0 \rangle_N.
\end{align}
Therefore it suffices to show that
\begin{equation}
    \sum_{\beta_0 \in ker(\hat{q}^*)} \langle a, \hat{p}^* \beta_0 \rangle_N = 0,
\end{equation} 
i.e. in the case where $\alpha' = 0$.

\Poincare duality (Theorem \ref{poincare}) gives an isomorphism of long exact sequences:
\begin{equation}
\begin{tikzcd}
\cdots \arrow{r} \arrow{d}
    & \hcal{X}^{d-1}(M, N')\arrow{d}\arrow{r}
        & \hcal{X}^{d-1}(M) \arrow{d}{\int_{[M]}} \arrow{r}{\hat{q}^*} 
            & \hcal{X}^{d-1}(N') \arrow{d}{\int_{[N']}} \arrow{r}
                & \cdots \arrow{d}\\
\cdots \arrow{r}
    & \hcal{X}_1(M, N) \arrow{r}{\hat{\mu}_*}
        & \hcal{X}_{1}(M, \dd M) \arrow{r}{\hat{\nu}_*}
            & \hcal{X}_0(N') \arrow{r}
                & \cdots
\end{tikzcd}
\end{equation}
Note that $\hat{\mu}_*$ represents a different map from the proof of Lemma \ref{lemma_4}.

Under \Poincare duality, $ker(\hat{q}^*)$ corresponds to $ker(\hat{\nu}_*) = im(\hat{\mu}_*)$. Given $\beta_0 \in ker(\hat{q}^*)$ with
\begin{equation}
    \int_{[M]} \beta_0 = \hat{\mu}_* \gamma,\ \  \gamma \in \hcal{X}_1(M,N),
\end{equation}
By definition of $\langle -, - \rangle_N$, we have:
\begin{align}
    \langle a, \hat{p}^* \beta_0 \rangle_N &= ev_N(a, \hat{\lambda}_*\int_{[M]} \beta) \\
                                           &= ev_N(a, (\hat{\lambda}_* \circ \hat{\mu}_*) \gamma).\label{6.1.2}
\end{align}
$\hat{\lambda}_*$ is the canonical map $\mcal{X}_0(M) \rightarrow \mcal{X}_0(N)$.
The composition 
\begin{equation}
    \hat{\lambda}_* \circ \hat{\mu}_*: \hcal{X}_1(M,N) \rightarrow \hcal{X}_0(N)
\end{equation}
is the Pontryagin dual of the 
\begin{equation}
    \dd^*: \mcal{X}^0(N) \rightarrow \mcal{X}^1(M, N).
\end{equation}
Therefore by Equation \ref{6.1.2}
\begin{align}
    \langle a, \hat{p}^* \beta_0 \rangle_N &= ev_N(a, (\hat{\lambda}_* \circ \hat{\mu}_*) \gamma)\\
                                           &= ev_{(M,N)}(\dd^* a, \gamma).
\end{align}
Thus
\begin{align}
|ker (\hat{\mu}_*)| \  \sum_{\beta_0 \in ker(\hat{p}^*)} \langle a, \hat{p}^* \beta_0 \rangle_N = \sum_{\gamma} ev_{(M,N)}( \dd^* a, \gamma), \label{6.1.3}
\end{align}
where $\gamma$ sums over $\hcal{X}_1(M,N)$.
Now consider the the long exact sequence:
\begin{align}
\cdots \rightarrow \mcal{X}^0(M) \xrightarrow[]{p^*} \mcal{X}^0(N) \xrightarrow{\dd^*} \mcal{X}^1(M,N) \rightarrow \cdots
\end{align}
By hypothesis, $a$ is not in the image of $p^*:\mcal{X}^0(M) \rightarrow \mcal{X}^0(N)$, therefore 
\begin{equation}
\dd^* a \in \mcal{X}^1(M,N)     
\end{equation}
is not the identity element. 
Thus 
\begin{equation}
   ev_{(M,N)}(\dd^* a, -) : \hcal{X}_1(M,N) \rightarrow \cx 
\end{equation}
 is a nontrivial character on $\hcal{X}_1(M,N)$. As the sum over all elements of the group paired with a nontrivial character is 0, we see that 
 \begin{equation}
     \sum_{\gamma} ev_{(M,N)}( \dd^* a, \gamma) = 0.
 \end{equation}
By Equation \ref{6.1.3}, we get
\begin{equation}
    \sum_{\beta_0 \in ker(\hat{p}^*)} \langle a, \hat{p}^* \beta_0 \rangle_N = 0.
\end{equation}
\end{proof}

\subsection{Proof of Lemma 2}\label{sec6-2}

We continue the notations from last section \S \ref{sec6-1}.
Recall from last section we have 
\begin{equation}
\mbD(N') \circ Z_{\mcal{X}}(M) = \lambda(M)\  Z_{\dhcal{X}}(M) \circ \mbD(N)    
\end{equation}
with 
\begin{equation}
\begin{aligned}
\lambda(M) = \frac{|\tau_{\geq 1} \mcal{X}(M)|}{|\tau_{\geq 1} \mcal{X}(N)|} \frac{|\tau_{\geq 1} \dhcal{X}(N')|}{|\tau_{\geq 1} \dhcal{X}(M)|} \frac{|kp|}{|kq|}.
\end{aligned} 
\end{equation}
To finish the proof of the main theorem, we need the following lemma:
\begin{lemma}
$\lambda(M) = |\mcal{X}|^{\chi(M)-\chi(M)}.$
\end{lemma}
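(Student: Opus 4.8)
The plan is to prove that $\lambda(M)=\prod_{i}|\mcal{X}^{i}(M,N)|^{(-1)^i}$ and then recognize the right-hand side as $|\mcal{X}|^{\chi(M)-\chi(N)}$. The second identification is immediate: the cofiber sequence of pointed spaces $N_+\to M_+\to M/N$ yields a fiber sequence $\Maps(\suss(M/N),\mcal{X})\to\mcal{X}(M)\to\mcal{X}(N)$ of $\pi$-finite spectra, with homotopy groups in degree $-i$ equal to $\mcal{X}^i(M,N)$, $\mcal{X}^i(M)$, $\mcal{X}^i(N)$, so Propositions \ref{sec1-proposition1} and \ref{sec1-proposition2} give $\prod_i|\mcal{X}^i(M,N)|^{(-1)^i}=|\mcal{X}(M)|/|\mcal{X}(N)|=|\mcal{X}|^{\chi(M)}/|\mcal{X}|^{\chi(N)}$. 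For the bookkeeping I abbreviate, for a compact space or pair $Y$ and a $\pi$-finite $\mcal{E}$, $c_{\le k}(\mcal{E},Y):=\prod_{i\le k}|\mcal{E}^i(Y)|^{(-1)^i}$ and $c_{\ge k}(\mcal{E},Y):=\prod_{i\ge k}|\mcal{E}^i(Y)|^{(-1)^i}$; unwinding definitions, $|\uptau_{\ge1}\mcal{E}(Y)|=c_{\le -1}(\mcal{E},Y)$ and $c_{\le 0}(\mcal{E},Y)\,c_{\ge 1}(\mcal{E},Y)=\prod_i|\mcal{E}^i(Y)|^{(-1)^i}$.

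First I compute $|kp|$ and $|kq|$ from long exact sequences. Running the long exact sequence of the pair $(M,N)$ with $\mcal{X}$-coefficients downward from degree $0$ and truncating the tail $\mcal{X}^0(M,N)\to\mcal{X}^0(M)\to\cdots$ to the surjection $\mcal{X}^0(M,N)\twoheadrightarrow kp$ (legitimate since $kp=\operatorname{im}(\mcal{X}^0(M,N)\to\mcal{X}^0(M))$) produces a bounded exact sequence; Lemma \ref{sec1-lemma} says its alternating product of orders is $1$, and collecting the terms according to their ambient space gives $|kp|=c_{\le 0}(\mcal{X},M,N)\cdot|\uptau_{\ge1}\mcal{X}(N)|/|\uptau_{\ge1}\mcal{X}(M)|$. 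The identical argument with $\dhcal{X}=\Sigma^{d-1}\hcal{X}$ and the pair $(M,N')$ gives $|kq|=c_{\le 0}(\dhcal{X},M,N')\cdot|\uptau_{\ge1}\dhcal{X}(N')|/|\uptau_{\ge1}\dhcal{X}(M)|$.

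Now I substitute these into $\lambda(M)=\tfrac{|\uptau_{\ge1}\mcal{X}(M)|}{|\uptau_{\ge1}\mcal{X}(N)|}\cdot\tfrac{|\uptau_{\ge1}\dhcal{X}(N')|}{|\uptau_{\ge1}\dhcal{X}(M)|}\cdot\tfrac{|kp|}{|kq|}$: every truncation factor cancels and $\lambda(M)=c_{\le 0}(\mcal{X},M,N)/c_{\le 0}(\dhcal{X},M,N')$. To finish I must rewrite the denominator in terms of $\mcal{X}$: Poincar\'e--Lefschetz duality (Theorem \ref{poincare}) for the $\mcal{R}$-module $\dhcal{X}$ gives $\dhcal{X}^i(M,N')\cong\dhcal{X}_{d-i}(M,N)$, and Brown--Comenetz duality gives $|\dhcal{X}_k(M,N)|=|\hcal{X}_{k-d+1}(M,N)|=|\mcal{X}^{k-d+1}(M,N)|$ (the last equality being the Pontryagin-dual pairing of long exact sequences of Corollary \ref{sec5-corollary2}); combining, $|\dhcal{X}^i(M,N')|=|\mcal{X}^{1-i}(M,N)|$, and reindexing gives $c_{\le 0}(\dhcal{X},M,N')=c_{\ge 1}(\mcal{X},M,N)^{-1}$. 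Hence $\lambda(M)=c_{\le 0}(\mcal{X},M,N)\,c_{\ge 1}(\mcal{X},M,N)=\prod_i|\mcal{X}^i(M,N)|^{(-1)^i}=|\mcal{X}|^{\chi(M)-\chi(N)}$. The only genuinely delicate point is the degree bookkeeping in the middle two steps: one must carefully track the $\Sigma^{d-1}$-shift defining $\dhcal{X}$, the truncations $\uptau_{\ge1}$, and the exact-sequence positions of the low-degree correction terms ($\mcal{X}^0$, $\mcal{X}^0(M,N)$, $kp$, $kq$) so that they cancel precisely; the sign ambiguity from $[M]$ restricting to $-[N']$ is immaterial, since only orders of groups ever appear.
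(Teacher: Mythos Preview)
Your proof is correct, and it is considerably cleaner than the paper's argument, though it uses the same underlying ingredients (the long exact sequences of the pairs $(M,N)$ and $(M,N')$, Poincar\'e--Lefschetz duality, Brown--Comenetz duality, and Proposition~\ref{sec1-proposition2}). The difference lies in organization. You first extract from $|kp|$ and $|kq|$, via the truncated pair sequences, the relative-cohomology factors $c_{\le 0}(\mcal{X},M,N)$ and $c_{\le 0}(\dhcal{X},M,N')$ together with exactly the truncation terms $|\uptau_{\ge1}(\cdots)|$ needed to cancel the other four factors of $\lambda(M)$; this collapses $\lambda(M)$ to a ratio of two alternating products of relative cohomology in one step. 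A single application of Poincar\'e--Lefschetz plus Brown--Comenetz then converts the $\dhcal{X}^*(M,N')$ product into the complementary half of the $\mcal{X}^*(M,N)$ product, and the two halves combine to $|\mcal{X}|^{\chi(M)-\chi(N)}$.

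The paper instead converts each of the six factors of $\lambda(M)$ into $\mcal{X}$-cohomology separately (using PD on $N'$, on $M$ relative to $\dd M$, and on the pair $(M,N')$), then splits off the Euler power and is left with a residual $\lambda'(M)$ involving several kernels and truncations; showing $\lambda'(M)=1$ requires further kernel/image identifications and one more long exact sequence. Your route avoids this residual term entirely. The paper's approach has the minor advantage of making each duality step completely explicit, but your packaging via the $c_{\le k}$ notation is what makes the cancellation transparent and the proof short.
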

\begin{proof}
Recall that
\begin{equation}
\begin{aligned}
\lambda(M) = \frac{|\tau_{\geq 1} \mcal{X}(M)|}{|\tau_{\geq 1} \mcal{X}(N)|} \frac{|\tau_{\geq 1} \dhcal{X}(N')|}{|\tau_{\geq 1} \dhcal{X}(M)|} \frac{|kp|}{|kq|}.
\end{aligned} 
\end{equation}
First we will move everything in $\hcal{X}$ to $\mcal{X}$.

The first term is 
\begin{equation}
    |\uptau_{\geq 1} \dhcal{X}(N')|.
\end{equation}
By \Poincare duality (Theorem \ref{poincare}) we have 
\begin{equation}
\hcal{X}^*(N') \simeq \hcal{X}_{d-1-*}(N').     
\end{equation}
Therefore
\begin{equation}
|\hcal{X}^i(N')| = |\hcal{X}_{d-1-i}(N')| = |\mcal{X}^{d-1-i}(N')|.
\end{equation}
The cardinality of Pontryagin dual groups are equal, thus
\begin{align}
|\uptau_{\geq 1} \dhcal{X}(N')| &= \frac{|\hcal{X}^{d-3}(N')|}{|\hcal{X}^{d-2}(N')|} 
\frac{|\hcal{X}^{d-5}(N')|}{|\hcal{X}^{d-4}(N')|} \cdots \\
&= \frac{|\hcal{X}_2(N')|}{|\hcal{X}_1(N')|} \frac{|\hcal{X}_4(N')|}{|\hcal{X}_3(N')|} \cdots\\
&= \frac{|\mcal{X}^2(N')|}{|\mcal{X}^1(N')|} \frac{|\mcal{X}^4(N')|}{|\mcal{X}^3(N')|} \cdots\\
&= |\uptau_{\leq -1} \mcal{X}(N')|.
\end{align}
Next we will work on 
\begin{equation}
  |\uptau_{\geq 1} \dhcal{X}(M)|^{-1}.  
\end{equation}
Similar to above, we have 
\begin{equation}
|\hcal{X}^i(M)| = |\hcal{X}_{d-i}(M, \dd M)| = |\mcal{X}^{d-i}(M, \dd M)|.    
\end{equation}
Therefore
\begin{align}
|\uptau_{\geq 1} \dhcal{X}(M)|^{-1} &= 
\frac{|\hcal{X}^{d-2}(M')|}{|\hcal{X}^{d-3}(M')|} \frac{|\hcal{X}^{d-4}(M')|}{|\hcal{X}^{d-5}(M')|} \cdots \\
&= \frac{|\hcal{X}_2(M, \dd M)|}{|\hcal{X}_3(M, \dd M)|} 
\frac{|\hcal{X}_4(M, \dd M)|}{|\hcal{X}_5(M, \dd M)|} \cdots \\
&= \frac{|\mcal{X}^2(M, \dd M)|}{|\mcal{X}^3(M, \dd M)|}
\frac{|\mcal{X}^4(M, \dd M)|}{|\mcal{X}^5(M, \dd M)|} \cdots \\
&= |\uptau_{\leq -1} \mcal{X}(N')|.
\end{align}
Lastly, we have
\begin{equation}
|kq| \coloneqq |ker (\hat{q}^*: \hcal{X}^{d-1}(M) \rightarrow \hcal{X}^{d-1}(N'))|.
\end{equation}
 By Poincare duality (Theorem \ref{poincare}): we have that an isomorphism of long exact sequences: 
\begin{equation}
\begin{tikzcd}
\cdots \arrow{r} \arrow{d}
    & \hcal{X}^*(M, N') \arrow{d}\arrow{r}
        & \hcal{X}^*(M) \arrow{d} \arrow{r}{q^*}
            & \hcal{X}^*(N') \arrow{d} \arrow{r}
                & \cdots \arrow{d}\\
\cdots \arrow{r}
    & \hcal{X}_{d-*}(M, N) \arrow{r}
        & \hcal{X}_{d-*}(M, \dd M) \arrow{r}
            & \hcal{X}_{d-1-*}(N') \arrow{r}
                & \cdots
\end{tikzcd}
\end{equation}
Thus 
\begin{align}
|kq| &= |ker(\hcal{X}_{1}(M, \dd M) \rightarrow \hcal{X}_0(N'))| \\ &= |im (\hcal{X}_{1}(M, N) \rightarrow \hcal{X}_{1}(M, \dd M)|    .
\end{align}
By Brown-Comenetz duality (Corollary \ref{sec5-theorem}), the long exact sequence 
\begin{equation}
\cdots \rightarrow \hcal{X}_{d-*}(M, N) \rightarrow \hcal{X}_{d-*}(M, \dd M) \rightarrow  \hcal{X}_{d-1-*}(N') \rightarrow \cdots
\end{equation}
is the Pontryagin dual of
\begin{equation}
\cdots \leftarrow \mcal{X}^{d-*}(M, N) \leftarrow \mcal{X}^{d-*}(M, \dd M) \leftarrow \mcal{X}^{d-1-*}(N') \leftarrow \cdots
\end{equation}
Thus
\begin{align}
|kq| &= |im (\hcal{X}_{1}(M, N) \rightarrow \hcal{X}_{1}(M, \dd M)|\\ &= |im (\mcal{X}^{1}(M, \dd M) \rightarrow \mcal{X}^{1}(M, N)|\\  &=  |ker (\mcal{X}^1(M, N) \rightarrow \mcal{X}^1(N'))|.    
\end{align}
To recap, we have
\begin{equation}
\begin{aligned}
\lambda(M) = \frac{|\tau_{\geq 1} \mcal{X}(M)|}{|\tau_{\geq 1} \mcal{X}(N)|}  \frac{|ker (\mcal{X}^0(M) \xrightarrow{p^*} \mcal{X}^0(N))|}{|ker (\mcal{X}^1(M, N) \rightarrow \mcal{X}^1(N'))|}\\
{|\uptau_{\leq -1} \mcal{X}(N')|}{|\uptau_{\leq -1} \mcal{X}(N')|}.
\end{aligned} 
\end{equation}

Now we will factor out
$|\mcal{X}|^{\chi(M)-\chi(M)}$ from $\lambda(M)$.
For any $\pi$-finite space $\mcal{Y}$, we have a fiber sequence
\begin{equation}
    \uptau_{\geq i}\mcal{Y} \rightarrow \mcal{Y} \rightarrow \uptau_{\leq i-1} \mcal{Y}
\end{equation}
of $\pi$-finite spaces. By Example \ref{sec1-example} we have 
\begin{equation}
    |\uptau_{\geq i}\mcal{Y}| \ |\uptau_{\leq i-1} \mcal{Y}| = |\mcal{Y}|.
\end{equation}
In our case,
\begin{equation}
|\uptau_{\geq 1} \mcal{X}(M)| = \frac{|\mcal{X}(M)|}{|\uptau_{\leq 0} \mcal{X}(M)|}.    
\end{equation}
Similarly, 
\begin{equation}
|\uptau_{\geq 1} \mcal{X}(N)|^{-1} = \frac{|\uptau_{\leq 0} \mcal{X}(N)|}{|\mcal{X}(N)|}.    
\end{equation}
By Proposition \ref{sec1-proposition2} we have 
\begin{equation}
|\mcal{X}(M)| = |\mcal{X}|^{\chi(M)}
\end{equation}
and
\begin{equation}
|\mcal{X}(N)| = |\mcal{X}|^{\chi(N)}.
\end{equation}
Putting it all together, we see that 
\begin{equation}
\lambda (M) = \lambda'(M) \  |\mcal{X}|^{\chi(M)-\chi(M)},    
\end{equation}
 where 
 \begin{equation}
 \begin{gathered}
\lambda'(M) = \frac{|\uptau_{\leq 0} \mcal{X}(N)|}{|\uptau_{\leq 0} \mcal{X}(M)|} |\uptau_{\leq -1} \mcal{X}(N')|\ |\uptau_{\leq -2} \mcal{X}(M, \dd M)|\\
\frac{|ker (\mcal{X}^0(M) \xrightarrow{p^*} \mcal{X}^0(N))|}{|ker (\mcal{X}^1(M, N) \rightarrow \mcal{X}^1(N'))|}
\end{gathered}
 \end{equation}
It remains to show that $\lambda'(M) = 1$.

As 
\begin{equation}
\dd M = N \sqcup N',
\end{equation}
we have 
\begin{equation}
|\mcal{X}^*(\dd M)| = |\mcal{X}^*(N)|\ |\mcal{X}^*(N')|.    
\end{equation}
Therefore
\begin{equation}
|\uptau_{\leq 0} \mcal{X}(N)|\  |\uptau_{\leq -1} \mcal{X}(N')| = |\mcal{X}^0(N)|\  |\uptau_{\leq -1} \mcal{X}(\dd M)|.
\end{equation}
Now consider the exact sequences 
\begin{equation}
0 \rightarrow ker\  p^* \rightarrow \mcal{X}^0(M) \xrightarrow[]{p^*} \mcal{X}^0(N) \rightarrow coker\  p^* \rightarrow 0,
\end{equation}
We see that the terms 
\begin{equation}
|ker\ p^*|\  |\uptau_{\leq 0} \mcal{X}(M)|^{-1}\  |\uptau_{\leq 0} \mcal{X}(N)| = |coker \  p^*|. 
\end{equation} 
Lastly, we rewrite 
\begin{equation}
|coker \ p^*| = |ker\  \mcal{X}^1(M,N) \rightarrow \mcal{X}^1(M)|.     
\end{equation}

Thus 
\begin{equation}
\begin{gathered}
\lambda'(M) = \frac{|\uptau_{\leq -1} \mcal{X}(\dd M)|}{|\uptau_{\leq -1} \mcal{X}(M)|}\ |\uptau_{\leq -2} \mcal{X}(M, \dd M)| \\ 
\frac{|ker (\mcal{X}^1(M,N) \rightarrow \mcal{X}^1(M))|}{|ker (\mcal{X}^1(M, N) \rightarrow \mcal{X}^1(N'))|}.
\end{gathered}
\end{equation}
We claim that 
\begin{equation}
\frac{|ker (\mcal{X}^1(M,N) \rightarrow \mcal{X}^1(M))|}{|ker (\mcal{X}^1(M, N) \rightarrow \mcal{X}^1(N'))|}= |ker (\mcal{X}^1(M) \rightarrow \mcal{X}^1(\dd M))|^{-1}.
\end{equation}
First notice that the canonical map 
\begin{equation}
\mcal{X}^1(M,N) \rightarrow \mcal{X}^1(N) = 0,
\end{equation}
therefore 
\begin{gather}
|ker (\mcal{X}^1(M,N) \rightarrow \mcal{X}^1(N'))| = |ker (\mcal{X}^1(M,N) \rightarrow \mcal{X}^1(\dd M))|.
\end{gather}
Note that
\begin{equation}
\mcal{X}^1(M,N) \rightarrow \mcal{X}^1(\dd M)    
\end{equation}
is the composition of the two terms
\begin{equation}
(\mcal{X}^1(M) \rightarrow \mcal{X}^1(\dd M)) \circ   (\mcal{X}^1(M, N) \rightarrow \mcal{X}^1(M))
\end{equation}
on the RHS. Therefore we are trying to show this:
\begin{align}
|ker (\mcal{X}^1(M,N) \rightarrow \mcal{X}^1(\dd M))
 = |ker (\mcal{X}^1(M,N) \rightarrow \mcal{X}^1(M))| \\
 |ker (\mcal{X}^1(M) \rightarrow \mcal{X}^1(\dd M))|.
\end{align}

We have the following algebraic fact: given 
\begin{equation}
f: A \rightarrow B,\   g: B \rightarrow C  
\end{equation}
then 
\begin{equation}
|ker (g \circ f)| = |ker f|\ |ker g|    
\end{equation}
iff 
\begin{equation}
ker(g) \subset im(f).    
\end{equation}
In our case, if an element $a \in \mcal{X}^1(M)$ maps to $0$ in $\mcal{X}^1(\dd M)$, then it maps to $0$ in $\mcal{X}^1(N)$. Since 
\begin{equation*}
    \mcal{X}^1(M, N) \rightarrow \mcal{X}^1(M) \rightarrow \mcal{X}^1(N)
\end{equation*}
is a part of a long exact sequence, it is exact at $\mcal{X}^1(M)$. That means that there exists $b \in \mcal{X}^1(M,N)$ which maps to $a$. 
Thus we satisfy the algebraic condition, and we have 
\begin{equation}
\frac{|ker (\mcal{X}^1(M,N) \rightarrow \mcal{X}^1(M))|}{|ker (\mcal{X}^1(M, N) \rightarrow \mcal{X}^1(N'))|}= |ker (\mcal{X}^1(M) \rightarrow \mcal{X}^1(\dd M))|^{-1}.
\end{equation}
So
\begin{equation}
\lambda'(M) = \frac{|\uptau_{\leq -1} \mcal{X}(\dd M)|}{|\uptau_{\leq -1} \mcal{X}(M)|}
\frac{|\uptau_{\leq -2} \mcal{X}(M, \dd M)|}{|ker (\mcal{X}^1(M) \rightarrow \mcal{X}^1(\dd M))|}.
\end{equation}
Finally, consider the following long exact sequence:
\begin{gather} \label{final_long}
0 \rightarrow ker (\mcal{X}^1(M) \rightarrow \mcal{X}^1(\dd M)) \rightarrow  \mcal{X}^1(M)
\rightarrow \mcal{X}^1(\dd M) \\ \rightarrow \mcal{X}^2(M, \dd M) \rightarrow \mcal{X}^2(M) \rightarrow \mcal{X}^2(\dd M) \rightarrow \cdots.
\end{gather}
By Lemma \ref{sec1-lemma} the alternating size of the finite abelian groups in a long exact sequence is 1. The alternating size of the long exact sequence \ref{final_long} above is precisely $\lambda'(M)$, thus \begin{equation}
\lambda'(M) = 1.
\end{equation}
\end{proof}

\bibliographystyle{plain}
\bibliography{bib.bib}



\end{document}